\newcommand{\Comment}[1]{\relax}
\newcommand{\Hide}[1]{}
\newif
\newif
\newtheorem{theorem}{Theorem}
\newtheorem{lemma}[theorem]{Lemma}
\newtheorem{corollary}[theorem]{Corollary}
\newtheorem{proposition}[theorem]{Proposition}
\newtheorem{restxxx}[theorem]{Restriction}
\newtheorem{agreexxx}[theorem]{Agreement}
\newtheorem{termxxx}[theorem]{Terminology}
\newtheorem{notxxx}[theorem]{Notation}
\newtheorem{assumxxx}[theorem]{Assumption}
\newtheorem{convenxxx}[theorem]{Convention}
\newtheorem{exaxxx}[theorem]{Example}
\newenvironment{example}{\begin{exaxxx}\rm}{\hfill\QED\end{exaxxx}}
\newtheorem{exexxx}[theorem]{Exercise}
\newtheorem{remxxx}[theorem]{Remark}
\newenvironment{remark}{\begin{remxxx}\rm}{\hfill\QED\end{remxxx}}
\newtheorem{openxxx}[theorem]{Open Problem}
\newtheorem{conjxxx}[theorem]{Conjecture}
\newtheorem{defxxx}[theorem]{Definition}
\newenvironment{definition}[1]{\begin{defxxx}[\emph{#1}]\rm}%
{\hfill\QED\end{defxxx}}
\newtheorem{procxxx}[theorem]{Procedure}
{\hfill\QED\end{procxxx}}
\newtheorem{Prxxx}[theorem]{Proof}
{\end{Prxxx}} 
\newenvironment{custommargins}[2]%
  {\addtolength{\leftskip}{#1}\addtolength{\rightskip}{#2}}{\par}
\newcommand{\Set}[1]{\{ #1 \}}
\newcommand{\SET}[1]{\bigl\{ #1 \bigr\}}
\newcommand{\A}{{\cal A}}
\newcommand{\B}{{\cal B}}
\newcommand{\LL}{{\cal L}}
\newcommand{\sss}{{\cal S}}
\newcommand{\X}{{\cal X}}
\newcommand{\Y}{{\cal Y}}
\newcommand{\bigO}[1]{{\cal O}\bigl(#1\bigr)} 
\newcommand{\bigOO}[1]{{\cal O}(#1)} 
\newcommand{\Let}[3]%
    {\textbf{\textsf{let}}\ {#1}\,{#2}\ \textbf{\textsf{in}}\;{#3}\,}
\newcommand{\Try}[3]%
    {\textbf{\textsf{try}}\ {#1} {#2}\ \textbf{\textsf{in}}\;{#3}\;}
\newcommand{\Mix}[3]%
    {\textbf{\textsf{mix}}\ {#1} {#2}\ \textbf{\textsf{in}}\;{#3}\;}
\newcommand{\LET}[3]%
    {\textbf{\textsf{let}}^*\ {#1} {#2}\ \textbf{\textsf{in}}\;{#3}\;}
\newcommand{\Letrec}[3]%
    {\textbf{\textsf{letrec}}\ {#1} {#2}\ \textbf{\textsf{in}}\;{#3}\;}
\newcommand{\bridges}[2]{{\partial}_{#1}(#2)} 
\newcommand{\Bridges}[2]{{\partial}_{#1}\bigl(#2\bigr)} 
\newcommand{\degreeSym}{{\text{\em degree}}} 
\newcommand{\degree}[1]{{\degreeSym}(#1)}
\newcommand{\degr}[2]{{\degreeSym}_{#1}(#2)}
\newcommand{\Degr}[2]{{\degreeSym}_{#1}\bigl(#2\bigr)}
\newcommand{\heightSym}{{\text{\em height}}} 
\newcommand{\height}[2]{{\heightSym}_{#1}(#2)}
\newcommand{\scatterSym}{{\text{\em scatter}}} 
\newcommand{\scatter}[1]{{\scatterSym}(#1)} 
\newcommand{\unbalSym}{{\text{\em unbal}}} 
\newcommand{\unbal}[1]{{\unbalSym}(#1)}
\newcommand{\ie}{\textit{i.e.}}
\newcommand{\eg}{\textit{e.g.}}
\newcommand{\QED}{{\Large $\square$}} 
\newcommand{\size}[1]{|\,#1\,|}  
\newcommand{\ssize}[1]{\bigl|\,#1\,\bigr|}  
\newcommand{\set}[1]{\overline{#1}}
\newcommand{\SSS}{\mathscr{S}}
\newcommand{\PPP}{{\mathscr{P}}}
\newcommand{\canonicalOrd}[1]{\text{\em canon}(#1)}
\newcommand{\binary}[1]{\text{\em binary}(#1)}
\newcommand{\mincut}{\text{\rm CutWidth}} 
\newcommand{\ola}{\text{\rm MinArr}} 
\newcommand{\EdgePerm}{\Theta} 
\newcommand{\spacing}[2]{
  \renewcommand{\baselinestretch}{#2}
  \small\normalsize #1
  \setlength{\parskip}{0.1\baselineskip}
  \settowidth{\parindent}{xxxx}
  \setlength{\parindent}{#2\parindent}
  \setlength{\leftmargini}{\parindent}
  \setlength{\leftmarginii}{\parindent}
  \setlength{\leftmarginiii}{\parindent}
  \setlength{\footnotesep}{#2\footnotesep}
}
\newcommand{\circled}[1]{
   {\huge\raisebox{-1pt}{\textcircled{\raisebox{2.0pt} {\normalsize #1}}}}}
\begin{document}

\spacing{\normalsize}{0.98}
\setcounter{page}{1}     
\setcounter{tocdepth}{1} 
\ifTR
  \pagenumbering{roman} 
\else
\fi

\title{Efficient Reassembling of Graphs, Part 1: The Linear Case} 

\author{Assaf Kfoury%
           \thanks{Partially supported by NSF awards CCF-0820138
           and CNS-1135722.} \\
        Boston University \\
        \ifTR Boston, Massachusetts \\ 
        \href{mailto:kfoury@bu.edu}{kfoury{@}bu.edu}
        \else \fi
\and
       Saber Mirzaei%
          \footnotemark[1]\\
       Boston University  \\
        \ifTR Boston, Massachusetts \\ 
        \href{mailto:smirzaei@bu.edu}{smirzaei{@}bu.edu}
        \else \fi
}

\ifTR
   \date{\today}
\else
   \date{} %
\fi
\maketitle
  \ifTR
     \thispagestyle{empty} 
  \else
  \fi

\vspace{-.3in}
  \begin{abstract}

\noindent
The \emph{reassembling of a simple connected graph} $G = (V,E)$ is an
abstraction of a problem arising in earlier studies of network
analysis. Its simplest formulation is in two steps: 
\begin{itemize}[itemsep=0pt,parsep=2pt,topsep=2pt,partopsep=0pt] 
\item[(1)] We cut every edge of $G$ into two halves, 
           thus obtaining a collection of $n = \size{V}$ one-vertex
           components, such that for every $v\in V$ the one-vertex
           component $\Set{v}$ has $\size{\degr{}{v}}$ half edges
           attached to it.
\item[(2)] We splice the two halves of every edge together, not
           of all the edges at once, but in some ordering $\EdgePerm$
           of the edges that minimizes two measures that depend on the
           edge-boundary degrees of assembled components.
\end{itemize}
A component $A$ is a subset of $V$ and its edge-boundary degree is the
number of edges in $G$ with one endpoint in $A$ and one endpoint in
$V-A$ (which is the same as the number of half edges attached to $A$
after all edges with both endpoints in $A$ have been spliced together).
The \textbf{maximum} edge-boundary degree encountered during the
reassembling process is what we call the
$\bm{\alpha}$\textbf{-measure} of the reassembling, and
the \textbf{sum} of all edge-boundary degrees is its
$\bm{\beta}$\textbf{-measure}. The $\alpha$-optimization
(resp. $\beta$-optimization) of the reassembling of $G$ is to determine an
order $\EdgePerm$ for splicing the edges that minimizes its
$\alpha$-measure (resp. $\beta$-measure).

\medskip 
\noindent
There are different forms of reassembling, depending on restrictions
and variations on the ordering $\EdgePerm$ 
of the edges.  We consider only cases satisfying the condition that if the an
edge between disjoint components $A$ and $B$ is spliced, then all the
edges between $A$ and $B$ are spliced at the same time.  In this
report, we examine the particular case of \emph{linear reassembling},
which requires that the next edge to be spliced must be adjacent to an
already spliced edge. We delay other forms of reassembling to
follow-up reports.

\medskip 
\noindent
We prove that $\alpha$-optimization of linear reassembling and
\emph{minimum-cutwidth linear arrangment} ($\mincut$) 
are polynomially reducible to each other, and 
that $\beta$-optimization of linear reassembling and
\emph{minimum-cost linear arrangement} ($\ola$) 
are polynomially reducible to each other. The known NP-hardness of
$\mincut$ and $\ola$ imply the NP-hardness of $\alpha$-optimization
and $\beta$-optimization. 


  \end{abstract}

\ifTR
    \newpage
    \tableofcontents
    \newpage
    \pagenumbering{arabic}
\else
    \vspace{-.2in}
\fi

\section{Introduction}
\label{sect:intro}

We start with a gentle presentation of our graph problem
and then explain the background that motivates our examination.
\paragraph{Problem Statement.}
Let $G = (V,E)$ be a simple (no self-loops and no multi-edges),
connected, undirected graph, with $\size{V} = n\geqslant 1$ vertices
and $\size{E} = m$ edges.  One version of the \emph{reassembling} of
$G$ is edge-directed and can be defined by a \emph{total order}
$\EdgePerm$ of the $m$ edges of $G$.  Informally and very simply, a
total order $\EdgePerm$ of the edges gives rise to a reassembling of
$G$ as follows:
\begin{itemize}[itemsep=0pt,parsep=2pt,topsep=5pt,partopsep=0pt] 
\item[(1)]  We cut every edge into two halves, 
            thus obtaining a collection of $n$ disconnected one-vertex
            components, such that for every $v\in V$ the one-vertex
            component $\Set{v}$ has $\size{\degr{}{v}}$ half edges
            attached to it.
\item[(2)]  We reconnect the two halves of every edge in the order 
            specified by $\EdgePerm$, obtaining larger and larger
            components, until the original $G$ is fully reassembled.
\end{itemize}
To distinguish this reassembling of $G$ according to an order
$\EdgePerm$ from a later reassembling of $G$ more suitable for
parallel computation, we call the former
\emph{sequential reassembling} and the latter  
\emph{binary reassembling}. 

A \emph{bridge} is a yet-to-be-reconnected edge between two components, say,
$A$ and $B$, with disjoint sets of vertices; we call such components
\emph{clusters}.%
   \footnote{These terms (\emph{bridge}, \emph{cluster}, and others, 
   later in this report) are overloaded in graph-theoretical problems. 
   We make our own use of these terms, and state it explicitly when
   our meaning is somewhat 
   at variance with that elsewhere in the literature.} 
The set of bridges
between $A$ and $B$ is denoted $\bridges{}{A,B}$. For technical
reasons, when we reconnect one of the bridges in $\bridges{}{A,B}$, we also
reconnect all the other bridges in $\bridges{}{A,B}$ and cross them out
from further consideration in $\EdgePerm$. Thus, in the
reassembling of $G$ according to $\EdgePerm$, there are at most $m$
steps, rather than exactly $m$ steps.

In the case when $B = V-A$, the set $\bridges{}{A,B}$ is the same as  
the cut-set of edges determined by the cut $(A,V-A)$.
Instead of $\bridges{}{A,V-A}$, we write $\bridges{}{A}$.
The \emph{edge-boundary degree} of a cluster $A$ is the number of
bridges with only one endpoint in $A$, \ie, $\size{\bridges{}{A}}$.

Several natural optimization problems can be associated with graph
reassembling. Two such optimizations are the following, which we
identify by the letters $\alpha$ and $\beta$ throughout:
\begin{itemize}[itemsep=0pt,parsep=2pt,topsep=5pt,partopsep=0pt] 
\item[($\alpha$)] 
    Minimize the \textbf{maximum} edge-boundary degree encountered during
    reassembling.
\item[($\beta$)]
    Minimize the \textbf{sum} of all edge-boundary degrees encountered during
    reassembling.
\end{itemize}
Initially, before we start reassembling, we always set the $\alpha$-measure
$M_{\alpha}$ to the \textbf{maximum} of all the vertex degrees, \ie, 
$\max\,\Set{\degr{}{v}\,|\,v\in V}$,
and we set the $\beta$-measure $M_{\beta}$ 
to the \textbf{sum} of the vertex degrees, \ie,
$\sum\,\Set{\degr{}{v}\,|\,v\in V}$, regardless of which strategy,
\ie, the order $\EdgePerm$ of edges, is selected for the reassembling.
During reassembling, after we merge disjoint nonempty clusters $A$
and $B$, we update the $\alpha$-measure $M_{\alpha}$ to:
$\max\,\SET{M_{\alpha},\size{\bridges{}{A\cup B}}}$, and the
$\beta$-measure $M_{\beta}$ to: $\bigl(M_{\beta}
+ \size{\bridges{}{A\cup B}}\bigr)$. The reassembling process
terminates when only one cluster remains, which is also the set $V$ of
all the vertices.

In what we call the \emph{binary reassembling} of $G$, we reconnect
bridges in several non-overlapping pairs of clusters
simultaneously. That is, at every step -- which we may call
a \emph{parallel step} for emphasis -- we choose $k\geqslant 1$ and
choose $k$ cluster pairs $(A_1,B_1), \ldots, (A_k,B_k)$, where
$A_1,B_1, \ldots, A_k,B_k$ are $2k$ pairwise disjoint clusters (\ie,
with pairwise disjoint subsets of vertices), and simultaneously
reconnect all the bridges in $\sum_{1\leqslant i\leqslant
k}\bridges{}{A_i,B_i}$.  The subsets $A_1,B_1, \ldots, A_k,B_k$ may or
may not include all of the vertices, \ie, in general 
$\sum_{1\leqslant i\leqslant k} (A_i\uplus B_i) \subseteq V$ rather than $= V$.  
(We write ``$\uplus$'' to denote \emph{disjoint union}.)

A binary reassembling is naturally viewed as vertex-directed and
described by a binary tree $\B$ -- root at the top, leaves at the
bottom -- with $n$ leaf nodes, one for each of
the initial one-vertex clusters.  Each non-leaf node in $\B$ is a
cluster $(A\uplus B)$ obtained by reconnecting all the bridges in
$\bridges{}{A,B}$ between the sibling clusters $A$ and $B$.  The first
parallel step, $i=0$, starts at the bottom in the reassembling
process, by considering the $n$ leaf nodes of $\B$ and calculating the
max (for $\alpha$ optimization) or the sum (for $\beta$ optimization)
of all vertex degrees. If $h$ is the height of $\B$, the last parallel
step is $i=h$, which corresponds to the root node of $\B$ (the entire
set $V$ of vertices) and produces the final $\alpha$-measure and
$\beta$-measure.  Clearly, 
$\lceil\log n\rceil \leqslant h\leqslant n-1$.
 
Every sequential reassembling of $G$ can be viewed as a binary
reassembling of $G$ where, at every step, only one cluster pair
$(A,B)$ is selected and one nonempty set of bridges $\bridges{}{A,B}$
is reconnected. Conversely, by serializing (or sequencializing) parallel
steps, every binary reassembling which we call \emph{strict} can be
re-defined as a sequential reassembling. Details of the correspondence
between sequential and binary reassemblings are in 
Appendix~\ref{sect:sequential}.

A binary reassembling is \emph{strict} if the merging of a cluster
pair $(A,B)$ is restricted to the case 
$\bridges{}{A,B} \neq \varnothing$. If an $\alpha$-optimal
(resp. $\beta$-optimal) binary reassembling is strict, then its
serialization is an $\alpha$-optimal (resp. $\beta$-optimal)
sequential reassembling.

\paragraph{The Linear Case.}

A possible and natural variation (or restriction) of graph
reassembling is one which we call \emph{linear}. If, at every step of
the reassembling process, we require that the cluster pair $(A,B)$ to
be merged is such that one of the two clusters, $A$ or $B$ (or both at
the first step), is a singleton set, then the resulting reassembling
is \emph{linear}.  The binary tree $\B$ describing a linear
reassembling of a graph $G$ with $n$ vertices is therefore a
degenerate tree of height $h = n-1$.

Clearly, there can be no non-trivial parallel step which merges two
(or more) cluster pairs in linear reassembling, \ie, no step which
simultaneously merges disjoint cluster pairs $(A_1,B_1)$ and
$(A_2,B_2)$ to form the non-singleton clusters $(A_1\uplus B_1)$ and
$(A_2\uplus B_2)$, before they are merged to form the cluster
$\bigl((A_1\uplus B_1) \uplus (A_2\uplus B_2)\bigr)$.  

Another useful way of understanding a linear reassembling of graph $G$
is in its sequential formulation (when the reassembling is strict):
The order $\EdgePerm$ for reconnecting the edges is such that the next
edge to be reconnected is always adjacent to an already re-reconnected
edge, which enforces the requirement that the next cluster pair
$(A,B)$ to be merged is always such that $A$ or $B$ is a singleton
set.

There are other natural variations of graph reassembling, such as
\emph{balanced reassembling}, whose binary tree $\B$
maximizes the merging of cluster pairs at every parallel step and
whose height $h$ is therefore $\lceil\log n\rceil$. We study these
in follow-up reports.

\paragraph{Background and Motivation.}
Besides questions of optimization and the variations which it
naturally suggests, graph \emph{reassembling} (and the related operation of
graph \emph{assembling}, not considered in this paper) is part of the
execution by programs in a domain-specific language (DSL) for the
design of flow networks
\cite{BestKfoury:dsl11,Kfoury:sblp11,Kfoury:SCP2014,%
SouleBestKfouryLapets:eoolt11}. 
In network \emph{reassembling}, the network is taken apart 
and reassembled in an order determined by the designer;
in network \emph{assembling}, the order in which components are put together
is pre-determined, which is the order in which components become
available to the designer. 

A flow network is a directed graph where vertices and edges are assigned 
various attributes that regulate flow through the network.%
   \footnote{Such networks are typically more complex 
             than the capacited directed graphs that algorithms
             for max-flow (and other related quantities) and its
             generalizations (\eg, multicommodity max-flow) operate on.}
Programs for flow-network design are meant to connect network
components in such a way that \emph{typings at their interfaces},
\ie, formally specified properties at their common boundaries, 
are satisfied. Network typings guarantee there are no conflicting
data types when different components are connected, and insure that
desirable properties of safe and secure operation are not violated by
these connections, \ie, they are \emph{invariant properties} of the
whole network construction.

A typing $\tau$ for a \emph{network component} $A$ (or \emph{cluster}
$A$ in this report's terminology) formally expresses a constraining
relationship between the variables denoting the outer ports of $A$ (or
the edge-boundary $\bridges{}{A}$ in this report). The smaller the set
of outer ports of $A$ is, the easier it is to formulate the typing
$\tau$ and to test whether it is compatible with the typing $\tau'$ of
another network component $A'$. Although every outer port of $A$ is
directed, as input port or output port, the complexity of the
formulation of $\tau$ depends only on the number of outer ports (or
$\size{\bridges{}{A}}$ in this report), not on their directions. 

If $k$ is a uniform upper bound on the number of outer ports of all
network components, the time complexity of reassembling the network
without violating any component typing $\tau$ can be made linear in
the size $n$ of the completed network and exponential in the bound $k$
-- not counting the preprocessing time $f(n)$ to determine an
appropriate reassembling order. Hence, the smaller are $k$ and $f(n)$,
the more efficient is the construction of the entire network. From
this follows the importance of minimizing the preprocessing time
$f(n)$ for finding a reassembling strategy that also minimizes the
bound $k$.

\paragraph{Main Results.}
In this report, we restrict attention to the linear case of
graph reassembling, which is interesting and natural in its own right. 
We first prove that $\alpha$-optimization and $\beta$-optimization
of linear reassembling are both NP-hard problems. We obtain these results
by showing that:
\begin{itemize}[itemsep=0pt,parsep=2pt,topsep=5pt,partopsep=0pt] 
\item
   $\alpha$-optimization of linear reassembling and 
   \emph{minimum-cutwidth linear arrangment} ($\mincut$) \\
   are polynomially-reducible to each other,
\item
   $\beta$-optimization of linear reassembling and 
   \emph{minimum-cost linear arrangement} ($\ola$) \\
   are polynomially-reducible to each other.
\end{itemize}
Both $\mincut$ and $\ola$ have been extensively studied: They
are both NP-hard in general, but, in a few non-trivial special cases,
amenable to low-degree polynomial-time solutions~\cite{petit:2011}. 
By our polynomial-reducibility results, these polynomial-time solutions
are transferable to the $\alpha$-optimization and $\beta$-optimization
of linear reassembling. This leaves open the problem of identifying 
classes of graphs, whether of practical or theoretical significance,
for which there are low-degree polynomial-time solutions for
our two optimization problems.

\paragraph{Organization of the Report.}

In Section~\ref{sect:problem} we give precise formal definitions of
several notions underlying our entire examination. The formulation of
some of these (\eg, our definition of binary trees) is not standard,
but which we purposely choose in order to facilitate the subsequent
analysis.

In Section~\ref{sect:examples} we give several examples to 
illustrate notions discussed in this Introduction, in 
Section~\ref{sect:problem}, as well as in later sections.

In Sections~\ref{sect:alpha-optimization} and~\ref{sect:beta-optimization}
we prove our NP-hardness results about $\alpha$-optimization and
$\beta$-optimization. Some of the long technical proofs for these
sections are delayed to Appendix~\ref{sect:remaining-proofs-for-linear}. 

In the final short Section~\ref{sect:future}, we point to open 
problems and to further current research on 
$\alpha$-optimization and
$\beta$-optimization of graph reassembling.



\section{Formal Definitions and Preliminary Lemmas}
\label{sect:problem}

Let $G = (V,E)$ be a simple (no self-loops and no multi-edges)
undirected graph, with $\size{V} = n\geqslant 1$ and
$\size{E} = m \geqslant 1$.  Throughout, there is no loss of
generality in assuming that $G$ is connected.

As pointed out in Section~\ref{sect:intro},
there are two distinct definitions of the reassembling of
$G$. The first is easier to state informally: This is the 
definition of \emph{sequential reassembling}.
The second definition, \emph{binary reassembling},
is more convenient for the optimization problems we want to study.
The analysis in this report and follow-up reports is based
on the formal definition of binary reassembling and its variations;
we delay a formal definition of sequential reassembling
to Appendix~\ref{sect:sequential}, where
we also sketch the proof of the equivalence of the two definitions
when reassembling is \emph{strict} 
(see Definition~\ref{rem:strict-binary-reassembling}).

\subsection{Binary Graph-Reassembling}
\label{sect:binary-reassembling}

Our notion of a \emph{binary reassembling} of graph $G$ presupposes
the notion of a \emph{binary tree} over a finite set
$V=\Set{v_1,\ldots,v_n}$.  Our definition of \emph{binary trees} is
not standard, but is more convenient for our purposes.%
   \footnote{A standard definition of a binary tree $T$ makes
   $T$ a subset of the set of finite binary strings $\Set{0,1}^*$ such that:
\begin{itemize}[itemsep=0pt,parsep=0pt,topsep=2pt,partopsep=0pt] 
\item $T$ is prefix-closed, \ie, if $t\in T$ and $u$ is a prefix of $t$,
      then $u\in T$.
\item Every node has two children, \ie, $t\,0\in T$ iff $t\,1\in T$
      for every $t\in \Set{0,1}^*$.
\end{itemize}
The \emph{root node} of $T$ is the empty string $\varepsilon$,
and a \emph{leaf node} of $T$ is a string $t\in T$ without children, \ie,
both $t\,0\not\in T$ and $t\,1\not\in T$. 
}

\begin{definition}{Binary trees}
\label{defn:binaryTrees}
An \emph{(unordered) binary tree} $\B$ over $V=\Set{v_1,\ldots,v_n}$
is a collection of non-empty subsets of $V$ satisfying three
conditions:
\begin{enumerate}[itemsep=0pt,parsep=2pt,topsep=5pt,partopsep=0pt] 
\item For every $v\in V$, the singleton set $\Set{v}$ is in $\B$.
\item The full set $V$ is in $\B$.
\item For every $X\in\B$, there is a unique $Y\in\B$ such that:
      $X\cap Y=\varnothing$ and $(X\cup Y)\in\B$.
\end{enumerate}
The \emph{leaf nodes} of $\B$ are the singleton sets $\Set{v}$, and
the \emph{root node} of $\B$ is the full set $V$. Depending on the
context, we may refer to the members of $\B$ as its \emph{nodes} or as
its \emph{clusters}. Several expected properties of $\B$, reproducing
familiar ones of a standard definition, are stated in the next two
propositions.
\end{definition}

\begin{proposition}[Properties of binary trees]
\label{prop:propertiesOne}
Let $\B$ be a binary tree as in Definition~\ref{defn:binaryTrees}, let $v\in V$,
and let:
\begin{equation*}
\label{eq:path} 
\tag{$\dag$}
    \Set{v} = X_0\ \subsetneq\ X_1 \ \subsetneq X_2 
                  \ \subsetneq\ \cdots\ \subsetneq\ X_p = V
\end{equation*}
be a maximal sequence of nested clusters from $\B$. We then have:
\begin{enumerate}[itemsep=0pt,parsep=2pt,topsep=5pt,partopsep=0pt] 
\item The sequence in~\eqref{eq:path} is uniquely defined, \ie,
      every maximal nested sequence starting from $\Set{v}$ is the same.
\item For every cluster $Y\in\B$, if $\Set{v}\subseteq Y$, then
      $Y\in\Set{X_0,\ldots,X_p}$.
\item There are pairwise disjoint clusters 
      $\Set{Y_0,\ldots,Y_{p-1}} \subseteq \B$ such that, for every 
      $0 \leqslant i < p$:
\[    X_i \cap Y_{i} = \varnothing \quad\text{and}\quad
      X_i \cup Y_{i} = X_{i+1}.
\]
\end{enumerate}
\end{proposition}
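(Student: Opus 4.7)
The plan is to first establish the key structural lemma that $\B$ is a \emph{laminar family}: for any two clusters $X, Y \in \B$, either $X \cap Y = \varnothing$, $X \subseteq Y$, or $Y \subseteq X$. Once laminarity is in hand, all three parts of Proposition~\ref{prop:propertiesOne} follow from laminarity together with condition~3 of Definition~\ref{defn:binaryTrees}.

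To prove laminarity, I would assume for contradiction the existence of a \emph{crossing} pair $X, Y \in \B$ (meaning $X \cap Y$, $X \setminus Y$, and $Y \setminus X$ are all non-empty) and pick one minimizing $\size{X} + \size{Y}$. By condition~3, each non-root cluster $X$ has a unique sibling $\sigma(X) \in \B$ with $X \cup \sigma(X) \in \B$. A careful case analysis on the intersections of $\sigma(X)$ with $Y$ (and symmetrically $\sigma(Y)$ with $X$) should either yield a strictly smaller crossing pair, contradicting minimality, or exhibit two distinct candidates for $\sigma(X)$, contradicting uniqueness in condition~3. This laminarity argument is the technical heart, and I expect it to be the main obstacle.

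With laminarity established, part~(1) is obtained by iterating the sibling operation: starting from $X_0 = \Set{v}$ and setting $X_{i+1} := X_i \cup \sigma(X_i)$, we get a strictly ascending chain whose sizes are bounded by $\size{V}$, so it terminates at $X_p = V$. Uniqueness of $\sigma$ at each step makes the chain uniquely determined, and maximality (no cluster fits strictly between $X_i$ and $X_{i+1}$) follows because any intermediate cluster $Z$ with $X_i \subsetneq Z \subsetneq X_{i+1}$ would cross $\sigma(X_i)$, contradicting laminarity.

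Part~(2) follows directly: if $\Set{v} \subseteq Y \in \B$, then $Y$ and each $X_i$ share $v$, so by laminarity one contains the other; letting $i$ be largest with $X_i \subseteq Y$ gives $X_i \subseteq Y \subseteq X_{i+1}$ (or $Y = X_p = V$ if $i = p$), and the maximality established in part~(1) forces $Y \in \Set{X_i, X_{i+1}}$. For part~(3), set $Y_i := \sigma(X_i) \in \B$; by construction $X_i \cap Y_i = \varnothing$ and $X_i \cup Y_i = X_{i+1}$. Pairwise disjointness is immediate: for $i < j$ we have $Y_i \subseteq X_{i+1} \subseteq X_j$ while $X_j \cap Y_j = \varnothing$, hence $Y_i \cap Y_j = \varnothing$.
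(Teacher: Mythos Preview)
Your proposal is correct but inverts the paper's order of dependence. The paper proves Proposition~\ref{prop:propertiesOne} \emph{first}, directly from condition~3 of Definition~\ref{defn:binaryTrees} (each cluster has a unique sibling, hence a unique parent, so the chain from $\Set{v}$ to $V$ is forced), and only \emph{afterwards} derives laminarity as part~1 of Proposition~\ref{prop:propertiesTwo}: if $X\cap Y\neq\varnothing$, pick $v$ in the intersection; by part~2 of Proposition~\ref{prop:propertiesOne} both $X$ and $Y$ lie on the unique maximal chain from $\Set{v}$, hence are comparable. You do the reverse, establishing laminarity as a lemma before touching the proposition.

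What you gain is an explicit justification of maximality in part~(1): the paper's one-liner (``unique parents'') does not spell out why no cluster of $\B$ can sit strictly between consecutive terms of the parent-chain, whereas your crossing argument against $\sigma(X_i)$ does. What the paper gains is that laminarity becomes a one-line corollary rather than the ``main obstacle'' you anticipate; your minimal-crossing-pair sketch is the least fleshed-out part of your plan (a cleaner alternative: observe that condition~3 makes $\B$ a rooted tree under the parent map $X\mapsto X\cup\sigma(X)$, with the singletons as leaves, from which laminarity is immediate). One small point to tighten in your part~(1): you construct the $\sigma$-chain and argue it is maximal, but the proposition is stated as ``every maximal chain from $\Set{v}$ is the same''; you should add the easy step (again via laminarity, comparing an arbitrary $X_{i+1}$ with $X_i\cup\sigma(X_i)$) that any given maximal chain must coincide term-by-term with the $\sigma$-chain.
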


\noindent
Based on this proposition, we use the following terminology:
\begin{itemize}[itemsep=0pt,parsep=2pt,topsep=5pt,partopsep=0pt] 
\item We call the sequence in~\eqref{eq:path}, which is unique by part 1,
      the \emph{path} from the leaf node $\Set{v}$ to the root node $V$. 
\item Every cluster containing $v$ is one of the nodes along this unique 
      path, according to part 2.
\item In part 3, we call $Y_i$ the unique \emph{sibling} of 
      $X_i$, whose unique common \emph{parent} is $X_{i+1}$, for every 
      $0 \leqslant i < p$.
\end{itemize}

\begin{proof}
Part 1 is a consequence of the third condition in 
Definition~\ref{defn:binaryTrees}, which prevents any cluster/node
in $\B$ from having two distinct parents. 

For part 2, consider the nested sequence $\Set{v}\subseteq Y\subseteq
V$, and extend it to a maximal nested sequence, which is uniquely defined
by part 1. This implies $Y$ is one of the nodes along the path
from $\Set{v}$ to the root $V$.

Part 3 is another consequence of the third condition in 
Definition~\ref{defn:binaryTrees}: For every $0 \leqslant i < p$,
$Y_i$ is the unique cluster in $\B$ such that $X_i\cap Y_i =\varnothing$
and $X_i\cup Y_i\in\B$. Remaining details omitted.
\end{proof}

\begin{proposition}[Properties of binary trees]
\label{prop:propertiesTwo}
Let $\B$ be a binary tree as in Definition~\ref{defn:binaryTrees}.
We then have:
\begin{enumerate}[itemsep=0pt,parsep=2pt,topsep=5pt,partopsep=0pt] 
\item For all clusters $X,Y\in\B$,
      if $X\cap Y\neq\varnothing$ then $X\subseteq Y$
      or $Y\subseteq X$.
\item For every cluster $X\in\B$, the sub-collection
      of clusters ${\B}_{X} := \Set{\,Y\in\B\;|\;Y\subseteq X\,}$ is a
      binary tree over $X$, with root node $X$.
\item $\B$ is a collection of $(2n-1)$ clusters.
\end{enumerate}
\end{proposition}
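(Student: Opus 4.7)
The plan is to prove the three parts in order, with part~1 feeding into part~2, and both feeding into part~3.

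For part~1, I will lean on the uniqueness of the maximal nested chain supplied by Proposition~\ref{prop:propertiesOne}. Given $X,Y\in\B$ with $X\cap Y\neq\varnothing$, I pick any $v\in X\cap Y$ and apply part~2 of that proposition to conclude that both $X$ and $Y$ sit on the unique maximal nested sequence $\Set{v}=X_0\subsetneq\cdots\subsetneq X_p=V$. Since this sequence is totally ordered by strict inclusion, comparability $X\subseteq Y$ or $Y\subseteq X$ is immediate.

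For part~2, I will directly verify the three axioms of Definition~\ref{defn:binaryTrees} for $\B_X$ as a binary tree over $X$. The singleton and root-cluster axioms follow at once from $\Set{v}\in\B$ and $\Set{v}\subseteq X$ for $v\in X$, and from $X\in\B_X$ by definition. The substantive axiom is the sibling axiom: given $Z\in\B_X$ with $Z\subsetneq X$, I fix a leaf $\Set{v}\subseteq Z$, locate both $Z$ and $X$ on the unique path from $\Set{v}$ to $V$ (writing $Z=X_i$ and $X=X_j$ with $i<j$), and read off the sibling $Y_i$ of $Z$ from Proposition~\ref{prop:propertiesOne} part~3. Since $Y_i\subseteq X_{i+1}\subseteq X_j=X$, the sibling $Y_i$ already lies in $\B_X$; uniqueness of $Y_i$ within $\B_X$ is inherited from its uniqueness within $\B$.

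For part~3, I will induct on $n=|V|$. The base $n=1$ forces $\B=\SET{\Set{v}}$, giving $|\B|=1=2n-1$. For the inductive step, part~3 of Definition~\ref{defn:binaryTrees} applied along any leaf-to-root path furnishes two complementary clusters $X_{p-1},Y_{p-1}\in\B$ with $V=X_{p-1}\uplus Y_{p-1}$. By part~2 above, each of $\B_{X_{p-1}}$ and $\B_{Y_{p-1}}$ is a binary tree on a strictly smaller vertex set, and the induction hypothesis gives $|\B_{X_{p-1}}|=2|X_{p-1}|-1$ and $|\B_{Y_{p-1}}|=2|Y_{p-1}|-1$. The technical heart of the argument is the disjoint-union identity
\[
   \B \;=\; \B_{X_{p-1}}\;\uplus\;\B_{Y_{p-1}}\;\uplus\;\Set{V},
\]
whose disjointness part uses part~1 (any cluster shared between $\B_{X_{p-1}}$ and $\B_{Y_{p-1}}$ would have to be contained in $X_{p-1}\cap Y_{p-1}=\varnothing$), and whose exhaustiveness part requires me to rule out a cluster $Z$ with $X_{p-1}\subsetneq Z\subsetneq V$ (and symmetrically for $Y_{p-1}$). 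That sandwich case is the only spot that needs care: picking any $v\in X_{p-1}\subseteq Z$, Proposition~\ref{prop:propertiesOne} part~2 forces $Z$ to be one of the $X_j$ on the path, leaving no room strictly between $X_{p-1}$ and $V$. Summing the three cardinalities then yields $(2|X_{p-1}|-1)+(2|Y_{p-1}|-1)+1=2n-1$, and everything else is bookkeeping against the two previous results.
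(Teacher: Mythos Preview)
Your proof is correct and follows essentially the same approach as the paper's: part~1 via the unique leaf-to-root path from Proposition~\ref{prop:propertiesOne}, part~2 by direct verification of the three axioms (which the paper simply calls ``straightforward details omitted''), and part~3 by induction after splitting $\B$ into the two subtrees rooted at the children of $V$. The only cosmetic difference is how you locate those children: the paper picks a largest proper cluster $X\subsetneq V$ and argues its sibling must complement it to $V$, whereas you read $X_{p-1}$ and $Y_{p-1}$ off a leaf-to-root path; both yield the same two-block partition, and your exhaustiveness check for the disjoint-union identity is more explicit than the paper's ``draw the desired conclusion.''
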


\begin{proof}
For part 1, let $Z = X\cap Y \neq \varnothing$ and consider
the maximal sequence of nested clusters which 
extends $Z\subseteq X\subseteq V$ or $Z\subseteq Y\subseteq V$.
By part 1 of Proposition~\ref{prop:propertiesOne}, both $X$ and
$Y$ must occur along this nested sequence.

For part 2, we directly check that $\B_X$ satisfies the three defining
properties of a binary tree. Straightforward details omitted.

Part 3 is proved by induction on $n\geqslant 1$. For the induction
hypothesis, we assume the statement is true for every binary tree 
with less than $n$ leaf nodes, and we then prove that the statement is true
for an arbitrary binary tree over $V$ with $\size{V} = n$. Consider
a largest cluster $X\in\B$ such that $X\neq V$. There is a unique
$Y\in\B$ such that $X\cap Y=\varnothing$ and $X\cup Y\in\B$.
Because $X$ is largest in size but smaller than $V$, it must be that
$X\cup Y = V$, which implies that $\Set{X,Y}$ is a two-block partition
of $V$. Consider now the subtrees ${\B}_X$ and ${\B}_Y$, apply the
induction hypothesis to each, and draw th desired conclusion. 
\end{proof}

A common measure for a standard definition of binary trees is $\heightSym$,
which becomes for our notion of binary trees in 
Definition~\ref{defn:binaryTrees}:
\begin{alignat*}{5}
    &\height{}{\B}\ &&:=
    \ && \max\,\bigl\{\,p\;\bigl|\;&& \text{there is $v\in V$ such that }
\\
    & && && && 
      \Set{v} = X_0 \subsetneq X_1 \subsetneq\cdots\subsetneq X_p = V
      \text{ is a maximal sequence of nested clusters}\,\bigr\} .
\end{alignat*}
For a particular node/cluster $X \in\B$, the subtree of $\B$ rooted at
$X$ is ${\B}_X$, by part 2 in Proposition~\ref{prop:propertiesTwo}.
The height of $X$ in $\B$ is therefore
$\height{\B}{X} := \height{}{{\B}_X}$.

If $X,Y\subseteq V$ are disjoint sets of vertices, $\bridges{G}{X,Y}$ is
the subset of edges of $G$ with one endpoint in each of $X$ and $Y$.
If $Y = V - X$, we write $\bridges{G}{X}$ instead of $\bridges{G}{X,V-X}$.

\begin{definition}{Binary reassembling}
\label{defn:binary-reassembling}
A \emph{binary reassembling of the graph $G = (V,E)$} is simply
defined by a pair $(G,\B)$ where $\B$ is a binary tree over $V$, as in 
Definition~\ref{defn:binaryTrees}.

Given a binary reassembling $(G,\B)$ of $G$, two measures are of particular
interest for our later analysis, namely, for every cluster $X\in\B$, the
\emph{degree} of $X$ and the \emph{height} of $X$:
\[ 
  \degr{G,\B}{X}\ :=\ \size{\bridges{G}{X}}
\quad\text{and}\quad
  \height{G,\B}{X}\ :=\ \height{\B}{X} .
\] 
If the context makes clear the binary reassembling $(G,\B)$ -- respectively,
the binary tree $\B$ -- relative
to which these measures are defined, we write $\degr{}{X}$ and
$\height{}{X}$ -- respectively,  $\degr{G}{X}$ and $\height{G}{X}$ -- 
instead of $\degr{G,\B}{X}$ and $\height{G,\B}{X}$.%
     \footnote{ Our binary reassembling of $G$ can be viewed
        as the ``hierarchical clustering'' of $G$, similar to a method of
        analysis in data mining, though used for a different purpose. Our
        binary reassembling mimicks what is called ``agglomerative, or
        bottom-up, hierarchical clustering'' in data mining.}
\end{definition}

\begin{definition}{Strict binary reassembling}
\label{defn:strict-binary-reassembling}
\label{rem:strict-binary-reassembling}
We say the binary reassembling $(G,\B)$ in
Definition~\ref{defn:binary-reassembling} is \emph{strict} if 
it satisfies the following condition:
For all clusters $X,Y\in\B$, if $X$ and $Y$ are sibling nodes,
then $\bridges{G}{X,Y} \neq \varnothing$.
(In Definition~\ref{defn:binary-reassembling}, when we merge 
sibling clusters $X,Y\in\B$, we do not require that
$\bridges{G}{X,Y} \neq \varnothing$.)
\end{definition}

\subsection{Optimization Problems}
\label{sect:optimization-problems}

The following definition repeats a definition 
in Section~\ref{sect:intro} more formally.

\begin{definition}{Measures on the reassembling of a graph}
\label{def:different-measures}
Let $(G,\B)$ be a binary reassembling of $G$.
We define the measures $\alpha$ and $\beta$
on $(G,\B)$ as follows:
\Hide{
We define the measures $\alpha_i$ and $\beta_i$
on $(G,\B)$ according to height $i$, 
for every $0\leqslant i\leqslant k = \height{}{\B}$, as follows:
\begin{alignat*}{10}
    & \alpha_i(G,\B) &&:= 
     \ &&\max\, &&\SET{\,\degr{G,\B}{X}\; \bigl| \; X\in\B \text{ and }
     \height{\B}{X} = i\,},
\\[1.2ex]
    & \beta_i(G,\B) &&:= 
     \ &&\sum\, &&\SET{\,\degr{G,\B}{X}\; \bigl| \; X\in\B \text{ and }
     \height{\B}{X} = i\,}.
\end{alignat*}
Based on the preceding, we define the following measures:
\begin{alignat*}{10}
   & \alpha(G,\B)\ &&:=\ &&\max\ &&\SET{\alpha_0(G,\B),\ldots,\alpha_{k}(G,\B)},
\\[1.2ex]
   & \beta(G,\B)\ &&:=\ &&\sum\ &&\SET{\beta_0(G,\B),\ldots,\beta_{k}(G,\B)}.
\end{alignat*}
}
\begin{alignat*}{10}
    & \alpha(G,\B) &&:= 
     \ &&\max\, &&\SET{\,\degr{G,\B}{X}\; \bigl| \; X\in\B \,},
\\[1.2ex]
    & \beta(G,\B) &&:= 
     \ &&\sum\, &&\SET{\,\degr{G,\B}{X}\; \bigl| \; X\in\B \,}.
\end{alignat*}
An optimization problem arises with the minimization of each of
these measures. For example, the optimal $\alpha$-measure of graph $G$ is:
\[
  \alpha (G)\:=\ \min\,
  \SET{\,\alpha (G,\B)\;\bigl|\;(G,\B)\text{ is a binary reassembling of $G$}\,} .
\]
We say the binary reassembling $(G,\B)$ is \emph{$\alpha$-optimal}
iff $\alpha(G,\B) = \alpha(G)$. We leave to the reader
the obvious similar definition of what it means
for the binary reassembling $(G,\B)$ to be \emph{$\beta$-optimal}.
\end{definition}

\subsection{Linear Graph-Reassembling}
\label{sect:linear-reassembling}

Let $G = (V,E)$ be a simple undirected graph with $\size{V} = n$.  We
say the binary reassembling $(G,\B)$ is \emph{linear} if $\B$ is
a \emph{linear binary tree} over $V$, \ie, all the clusters of size 
$\geqslant 2$ forms a single nested chain of length $(n-1)$:
\begin{equation*}
\label{eq:linear1} 
\tag{A}
    X_1 \ \subsetneq X_2 
                  \ \subsetneq\ \cdots\ \subsetneq\ X_{n-1} = V
\end{equation*} 
This implies the height of $\B$ is
$(n-1)$. By part 3 in Proposition~\ref{prop:propertiesOne}, there are
$n$ leaf nodes/singleton clusters $\Set{Y_0,\ldots,Y_{n-1}} \subseteq \B$ 
such that $X_1 = Y_0\cup Y_1$ and for every $1 \leqslant i \leqslant n-2$:
\begin{equation*}
\label{eq:linear2} 
\tag{B}
      X_i \cap Y_{i+1} = \varnothing \quad\text{and}\quad
      X_i \cup Y_{i+1} = X_{i+1}.
\end{equation*} 
We use the letter $\LL$ 
to denote a linear binary tree, and write $(G,\LL)$ to denote a linear
reassembling of $G$.

In Definition~\ref{def:linear-arrangement}, 
we mostly use the notation and conventions of~\cite{petit:2011} and
the references therein. We write $\set{v\,w}$ to denote the edge
connecting vertex $v$ and vertex $w$.

\begin{definition}{Linear arrangements and cutwidths}
\label{def:linear-arrangement} 
A \emph{linear arrangement} $\varphi$ of the graph 
$G = (V,E)$, where $\size{V} = n$, is a bijection 
$\varphi: V\to\Set{1,\ldots,n}$.  We refer to this linear
arrangement by writing $(G,\varphi)$. 

Following~\cite{petit:2011},
given linear arrangement $(G,\varphi)$ and $i\in\Set{1,\ldots,n}$, we
define a two-block partition of the vertices, $V =
L(G,\varphi,i)\uplus R(G,\varphi,i)$, where:
\[
    L(G,\varphi,i) := \Set{\,v\in V\;|\;\varphi(v)\leqslant i\,}
    \quad\text{and}\quad
    R(G,\varphi,i) := \Set{\,w\in V\;|\;\varphi(w) > i\,} .
\]
The \emph{edge cut at position $i$}, denoted $\zeta(G,\varphi,i)$, is
the number of edges connecting $L(G,\varphi,i)$ and $R(G,\varphi,i)$:
\[
    \zeta(G,\varphi,i) := \ssize{\Set{\,\set{v\,w}\in E\;|
          \; v\in L(G,\varphi,i) \text{ and }
             w\in R(G,\varphi,i) \,}} .
\]
In our notation in Definition~\ref{defn:binary-reassembling}, we have:
\[
    \zeta(G,\varphi,i)\ =\ \ssize{\Bridges{}{L(G,\varphi,i),R(G,\varphi,i)}}
    \ =\ \Degr{}{L(G,\varphi,i)} .
\]
The \emph{length of\ $\set{v\,w}$\ in the linear arrangement $(G,\varphi)$}, 
denoted $\xi(G,\varphi,\set{v\,w})$, is 
``$1$ + the number of vertices between $v$ and $w$'':
\[
    \xi(G,\varphi,\set{v\,w}) := \ssize{\varphi(v)-\varphi(w)} .
\]
The $\alpha$-measure and $\beta$-measure of the linear arrangement
$(G,\varphi)$ are defined by:
\begin{alignat*}{9}
  & \alpha(G,\varphi)\ &&:=
  \ && \max\, && \Set{\,\zeta(G,\varphi,i)\;|\;1\leqslant i\leqslant n\;}
  \ &&= \ && 
  \max\, && \SET{\,\Degr{}{L(G,\varphi,i)}\;\bigl|\;1\leqslant i\leqslant n\;},
\\
  & \beta(G,\varphi)\ &&:=
  \ && \sum\, && \Set{\,\zeta(G,\varphi,i)\;|\;1\leqslant i\leqslant n\;}
  \ &&= \ && 
  \sum\, && \SET{\,\Degr{}{L(G,\varphi,i)}\;\bigl|\;1\leqslant i\leqslant n\;}.
\end{alignat*}
In the literature, $\alpha(G,\varphi)$ is called the \emph{cutwidth}
of the linear arrangement $(G,\varphi)$. The \emph{cost} of 
the linear arrangement $(G,\varphi)$ is usually defined as  the
total length of all the edges relative to 
$(G,\varphi)$, \ie, the cost is the measure $\gamma(G,\varphi)$ given by:
\[
   \gamma(G,\varphi)\ := 
   \ \sum\Set{\,\xi(G,\varphi,\set{v\,w})\,|\,\set{v\,w}\in E\,}.
\]
However, by Lemma~\ref{lem:equality-of-measures} below, 
$\beta(G,\varphi)$ is equal to $\gamma(G,\varphi)$.
\end{definition}

\begin{lemma}
\label{lem:equality-of-measures}
For every linear arrangement $(G,\varphi)$, we have
$\beta(G,\varphi) = \gamma(G,\varphi)$.
\end{lemma}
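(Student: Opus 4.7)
The plan is a straightforward double-counting argument: I would rewrite $\beta(G,\varphi) = \sum_{i=1}^{n} \zeta(G,\varphi,i)$ as a sum over edges by swapping the order of summation, and then show that the contribution of each single edge equals its length.

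Concretely, I would first fix an edge $\set{v\,w} \in E$ with, say, $\varphi(v) < \varphi(w)$, and characterize exactly the positions $i\in\Set{1,\ldots,n}$ at which this edge crosses the cut $(L(G,\varphi,i),R(G,\varphi,i))$. By Definition~\ref{def:linear-arrangement}, $v \in L(G,\varphi,i)$ iff $\varphi(v)\leqslant i$ and $w\in R(G,\varphi,i)$ iff $\varphi(w) > i$, so the edge contributes to $\zeta(G,\varphi,i)$ precisely when $\varphi(v)\leqslant i < \varphi(w)$. The number of such indices is exactly $\varphi(w)-\varphi(v) = \ssize{\varphi(v)-\varphi(w)} = \xi(G,\varphi,\set{v\,w})$.

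Next I would interchange the order of summation:
\[
  \beta(G,\varphi) \;=\; \sum_{i=1}^{n} \zeta(G,\varphi,i)
  \;=\; \sum_{i=1}^{n}\;\sum_{\set{v\,w}\in E}\; \bm{1}\bigl[\,\set{v\,w} \text{ crosses cut } i\,\bigr]
  \;=\; \sum_{\set{v\,w}\in E}\;\sum_{i=1}^{n}\; \bm{1}\bigl[\,\set{v\,w} \text{ crosses cut } i\,\bigr].
\]
Substituting the count computed in the previous paragraph gives $\beta(G,\varphi) = \sum_{\set{v\,w}\in E} \xi(G,\varphi,\set{v\,w}) = \gamma(G,\varphi)$.

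There is really no obstacle here; the only mild care needed is in the boundary cases, specifically verifying that cuts at $i = n$ contribute $0$ (since $R(G,\varphi,n) = \varnothing$) and that the per-edge count $\varphi(w)-\varphi(v)$ correctly handles edges whose endpoints are adjacent positions. Both follow directly from the inequalities $\varphi(v)\leqslant i < \varphi(w)$, so the proof is essentially one line of summation-swap once the contribution of a single edge is pinned down.
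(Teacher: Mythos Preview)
Your proposal is correct and is essentially the same double-counting argument the paper gives: the paper observes that an edge $\set{v\,w}$ with $\varphi(v)=i<j=\varphi(w)$ contributes exactly one unit to each of the $(j-i)$ consecutive edge cuts $\zeta(G,\varphi,i),\ldots,\zeta(G,\varphi,j-1)$, and packages this as an induction on $m$ (deleting one edge drops both sides by $j-i$), whereas you package the identical observation as a direct swap of summation order. The underlying idea is the same.
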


\begin{proof}
We have to prove that
\[
    \sum\, \Set{\,\zeta(G,\varphi,i)\;|\;1\leqslant i\leqslant n\;}
    \ =
    \ \sum\, \Set{\,\xi(G,\varphi,\set{v\,w})\;|\;\set{v\,w}\in E\;} .
\]
This equality holds whether $G$ is connected or not. So, a
formal proof (omitted) can be written by induction on the number
$m\geqslant 0$ of edges in $G$.  But informally, for every edge
$\set{v\,w}\in E$, if $\varphi(v) = i$ and $\varphi(w) = j$ with
$i<j$, then its length $\xi(G,\varphi,\set{v\,w}) = j-i$. In this
case, the length of edge $\set{v\,w}$ contributes one unit to each of
$(j-i)$ consecutive edge cuts:
$\zeta(G,\varphi,i),\ldots,\zeta(G,\varphi,j-1)$. Hence, if we delete
edge $\set{v\,w}$ from the graph, we decrease the two quantities:
\[
       \sum\, \Set{\,\zeta(G,\varphi,i)\;|\;1\leqslant i\leqslant n\;}
       \quad\text{and}\quad
       \sum\, \Set{\,\xi(G,\varphi,\set{v\,w})\;|\;\set{v\,w}\in E\;} .
\]
by exactly the same amount $(j-i)$. The desired conclusion follows.
\end{proof}

\begin{definition}{Optimal linear arrangements}
\label{def:optimal-linear-arrangements}
Let $G = (V,E)$ be a simple undirected graph.
We say the linear arrangement $(G,\varphi)$ is \emph{$\alpha$-optimal} if:
\[
   \alpha(G,\varphi)\ =
   \ \min\,\SET{\,\alpha(G,\varphi')\;\bigl|
   \;(G,\varphi')\text{ is a linear arrangement}\,}.
\]
The \emph{$\alpha$-optimal linear arrangement problem} is the problem of 
defining a bijection $\varphi:V\to\Set{1,\ldots,n}$ such that
$(G,\varphi)$ is $\alpha$-optimal. We define similarly the
\emph{$\beta$-optimal linear arrangement problem}.
\end{definition}

\begin{definition}{Linear arrangement induced by linear reassembling}
\label{def:linear-arrangement-induced}
Let $G = (V,E)$ be a simple undirected graph and $(G,\LL)$ be a linear
reassembling of $G$. Using the notation in~\eqref{eq:linear1}
and~\eqref{eq:linear2} in the opening paragraph of
Section~\ref{sect:linear-reassembling}, the $n$ leaf nodes (or singleton
clusters) of $\LL$ are: $Y_0, Y_1, Y_2,\ldots,Y_{n-1}$. Observe that the order of
the singletons $Y_2,\ldots,Y_{n-1}$ and, therefore, the order of the $(n-2)$ 
vertices in $Y_2\cup\cdots\cup Y_{n-1}$ is uniquely determined by the chain 
in~\eqref{eq:linear1}, but this is not the case for the order in which we 
write $Y_0$ and $Y_1$, \ie, the first cluster $X_1$ in~\eqref{eq:linear1}
is equal to both $Y_0\cup Y_1$ and $Y_1\cup Y_0$.

We want to extract a linear arrangement $(G,\varphi)$ from the linear
reassembling $(G,\LL)$. This is achieved by
defining $\varphi:V\to\Set{1,\ldots,n}$ as follows:
\begin{itemize}[itemsep=0pt,parsep=2pt,topsep=5pt,partopsep=0pt] 
\item Let $Y_0 = \Set{v}$ and $Y_1 = \Set{v'}$. 
      \begin{itemize}[itemsep=0pt,parsep=2pt,topsep=2pt,partopsep=0pt] 
      \item If $\degr{}{v}\leqslant\degr{}{v'}$ then set 
            $\varphi(v) := 1$ and $\varphi(v') := 2$.
      \item If $\degr{}{v} \geqslant \degr{}{v'}$ then set 
            $\varphi(v') := 1$ and $\varphi(v) := 2$.
      \end{itemize}
\item For every $2\leqslant i\leqslant n-1$, if $Y_i = \Set{v}$ 
      then set $\varphi(v) :=\ i+1$.
\end{itemize}
It is possible that $\degr{}{v} = \degr{}{v'}$, in which case $\varphi$
may place $v$ first and $v'$ second, or $v'$ first and $v$ second.
This ambiguity is harmless for our analysis, in that 
it does not affect the $\alpha$-measure and the $\beta$-measure of
the linear arrangement $(G,\varphi)$.

Whether  $\varphi$ places $v$ first and $v'$ second, or 
$v'$ first and $v$ second, we call $(G,\varphi)$ a linear arrangement of 
$G$ \emph{induced} by the linear reassembling $(G,\LL)$. 

Note that, for every $1\leqslant i\leqslant n-1$,
we have the equality $X_i = L(G,\varphi,i)$, 
where $L(G,\varphi,i)$ is the set of vertices at position $i$ and
to the left of it, as in Definition~\ref{def:linear-arrangement}.
\end{definition}

\begin{definition}{Linear reassembling induced by linear arrangement}
\label{def:linear-reassembling-induced}
Let $G = (V,E)$ be a simple undirected graph and let
$(G,\varphi)$ be a linear arrangement of $G$. We extract a 
linear reassembling $(G,\LL)$ from
$(G,\varphi)$. The $n$ leaf nodes/singleton clusters of $\LL$ are:
\[
    \SET{\,\Set{v}\;|\;v\in V\,} = 
    \SET{\,\Set{{\varphi}^{-1}(i)}\;|\;1\leqslant i\leqslant n\,}.
\]
The $(n-1)$ non-leaf nodes/clusters of $\LL$ are:
\begin{alignat*}{8}
  &  X_1\ &&:=\quad &&\Set{{\varphi}^{-1}(1),\ \ &&{\varphi}^{-1}(2)}, 
     \\[1ex]
  &  X_2\ &&:=\ &&
       \Set{{\varphi}^{-1}(1),\ &&{\varphi}^{-1}(2),\ &&{\varphi}^{-1}(3)}, 
     \\
  &  && &&\quad \ldots \quad 
     \\
  & X_{n-1}\ &&:=\ &&\Set{{\varphi}^{-1}(1),\ &&{\varphi}^{-1}(2),
        \ &&{\varphi}^{-1}(3),\ \ldots\ ,\ {\varphi}^{-1}(n)}\ =\ V.
\end{alignat*}
We call $(G,\LL)$ the linear reassembling
of $G$ \emph{induced} by the linear arrangement $(G,\varphi)$.
\end{definition}

\section{Examples}
\label{sect:examples}

We present several simple examples to illustrate notions mentioned in
Sections~\ref{sect:intro} and~\ref{sect:problem}.  We first introduce
a convenient notation for specifying binary trees over a set $V$ of
vertices. Let $2^V$ denote the power set of $V$. We define a binary
operation $\set{\;\X\;\Y\;}$ for all $\X\subseteq 2^V$ and
$\Y \subseteq 2^V$ by:
\[
   \set{\ \X\ \Y\ }\ :=
   \ \bigl(\X\cup\Y\bigr)\ \cup\ \bigl((\bigcup\X)\cup (\bigcup\Y)\bigr)
\] 
The examples below illustrate
how we use this operation ``$\set{\phantom{S}}$''.

We overload the overline notation
``$\set{\phantom{S}}$'' to denote a nonempty subset of $V$,
\ie, $\set{v_1\,\cdots\,v_k}$ means $\Set{v_1,\ldots,v_k}$.
In particular, the edge connecting $v$ and $w$ is denoted by
the two-vertex set $\set{v\,w} = \Set{v,w}$. 

The context will make clear whether we use ``$\set{\phantom{S}}$'' to
refer to a set of subsets of $V$ (in the case of binary trees) or to
just a subset of $V$.

\begin{example}
\label{ex:hypercube-graph}
The hypercube graph $Q_3$ is shown on the left in
Figure~\ref{fig:cube}, and three of its reassemblings are 
shown on the right in Figure~\ref{fig:cube}. The top reassembling
is neither balanced nor linear; the middle one
is \emph{balanced}; and the bottom one is \emph{linear}.

\begin{figure}[hc!]
  \begin{center}
     \includegraphics[width=0.35\textwidth,height=5cm]
          {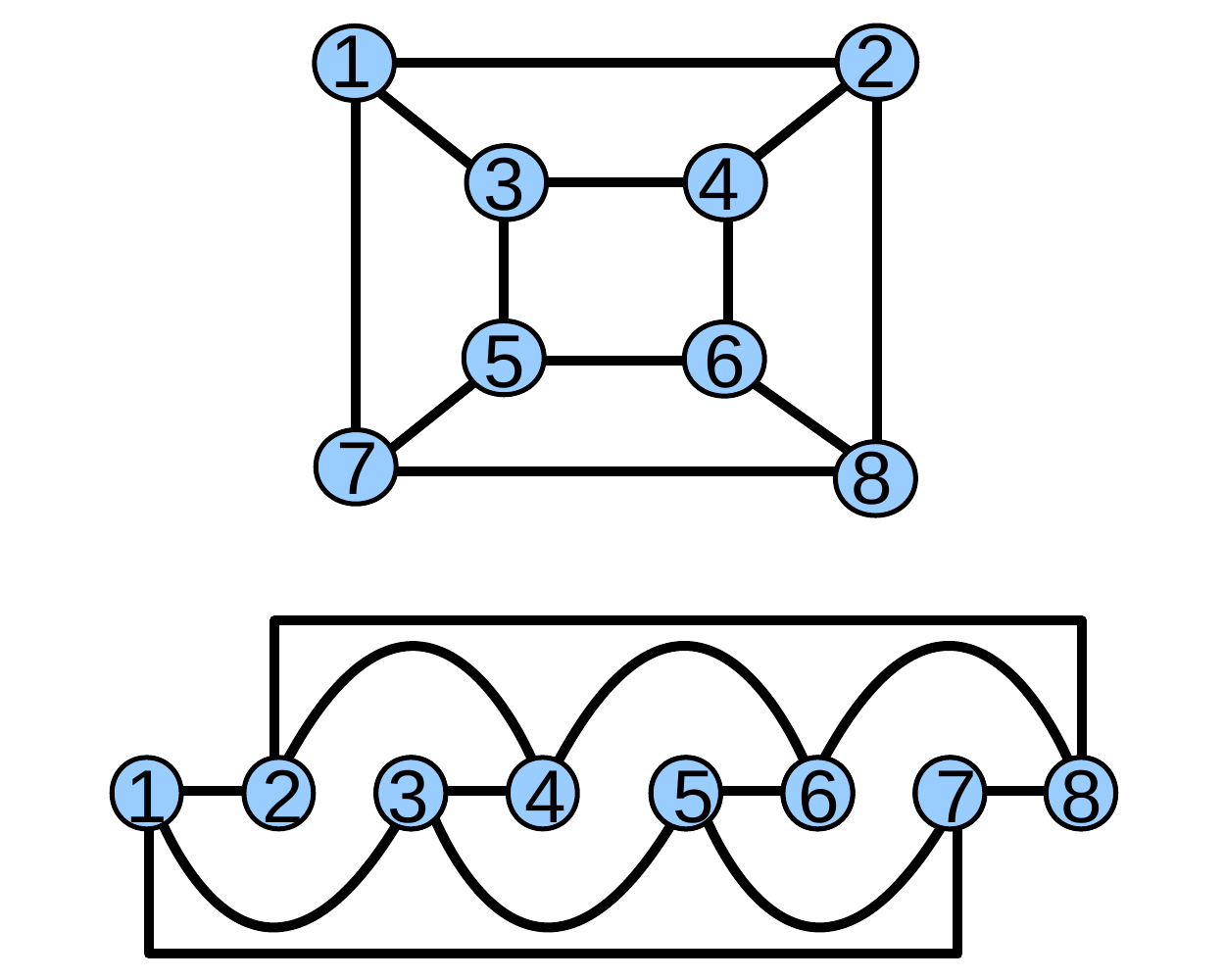}
     \qquad\qquad
     \includegraphics[width=0.35\textwidth,height=5cm]
          {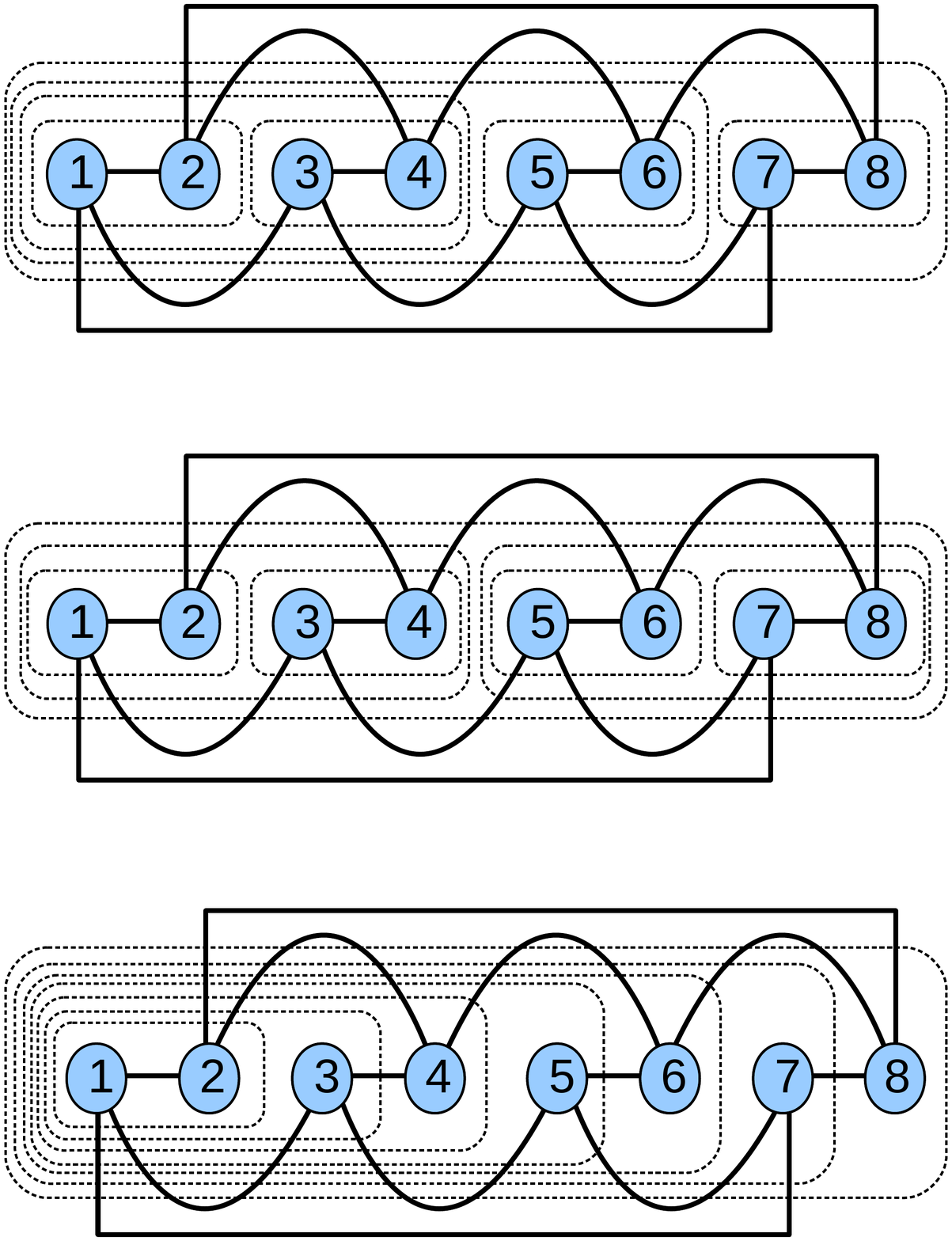}
  \end{center}
     \caption{Two different drawings of the hypercube graph $Q_3$ (on the left)
     and three of its reassemblings (on the right).}
   \label{fig:cube}
\end{figure}

For each of the three reassemblings
on the right in Figure~\ref{fig:cube}, from top to bottom,
we list the unique binary tree $\B$ over the vertices
$\Set{1,\ldots,8}$ underlying it (in its \emph{binary
reassembling} formulation) and one of the orderings ${\EdgePerm}$ 
of the edges $\Set{\set{1\,2},\ldots,\set{7\,8}}$ 
inducing it (in its \emph{sequential reassembling} formulation):

{
\begin{custommargins}{-1.70cm}{0cm} 
\begin{minipage}{0.99\textwidth}
\begin{alignat*}{8}
& {\B}_1\ &&=\quad && 
  \begin{cases}
  \quad \set{\ \set{\ \set{\ \set{\ 1\ \ 2\ }\ \ \set{\ 3\ \  4\ }\ }
   \ \ \ \set{\ 5\ \ 6\ }\ }\ \ \ \set{\ 7\ \ 8\ }\ }
  \end{cases}
     &&  {\EdgePerm}_1^{Q_3} &&=\ && 
     \set{\ 1\ 2\ }\ \ \set{\ 3\ 4\ }\ \ \set{\ 5\ 6\ }\ \ \set{\ 7\ 8\ }
     \ \ \set{\ 1\ 3\ }\ \ \set{\ 3\ 5\ }\ \ \set{\ 5\ 7\ }\ \ \cdots
     \quad && 
\\[3ex] 
& {\B}_2\ &&=\quad && 
     \begin{cases}
     \quad \set{\ \set{\ \set{\ 1\ \ 2\ }\ \ \set{\ 3\ \ 4\ }\ }
     \ \ \ \set{\ \set{\ 5\ \ 6\ }\ \ \set{\ 7\ \ 8\ }\ }\ } 
     \qquad \end{cases}
     &&  {\EdgePerm}_2^{Q_3} &&=\ && 
     \set{\ 1\ 2\ }\ \ \set{\ 3\ 4\ }\ \ \set{\ 5\ 6\ }\ \ \set{\ 7\ 8\ }
     \ \ \set{\ 1\ 3\ }\ \ \set{\ 5\ 7\ }\ \ \set{\ 3\ 5\ }\ \ \cdots
     \quad && \text{(balanced)}
\\[2ex] 
& {\B}_3\ &&=\quad && 
     \begin{cases}
  \quad \set{\ \set{\ \set{\ \set{
  \ \set{\ \set{\ \set{\ 1\ 2\ }\ \ 3\ }\ \ 4\ }\ \ 5\ }\ \ 6\ }\ \ 7\ }\ \ 8\ }
     \qquad \end{cases}
     &&  {\EdgePerm}_3^{Q_3} &&=\ && 
     \set{\ 1\ 2\ }\ \ \set{\ 1\ 3\ }\ \ \set{\ 3\ 4\ }\ \ \set{\ 3\ 5\ }
     \ \ \set{\ 5\ 6\ }\ \ \set{\ 5\ 7\ }\ \ \set{\ 7\ 8\ }\ \ \cdots
     \quad &&  \text{(linear)}
\end{alignat*}
\end{minipage}
\end{custommargins}
}

\medskip\medskip
\noindent
where, for simplicity, we write just ``$v$'' instead of the cumbersome
$\Set{\Set{v}} = \set{\ \set{\ v\ }\ }$.  Thus, for example, two of
the subsets of ${\B}_1$ above appear as ``$\set{\;1\;2\;}$'' and
``$\set{\;\set{\;1\;2\;}\ \set{\;3\;4\;}\;}$'', and if we expand them in full,
we obtain:
\[
   \set{\ 1\ 2\ } = \SET{\Set{1},\Set{2},\Set{1,2}}
   \quad\text{and}\quad
   \set{\ \set{\ 1\ 2\ }\ \ \set{\ 3\ 4\ }\ } = 
   \SET{\Set{1},\Set{2},\Set{3},\Set{4},\Set{1,2},\Set{3,4},\Set{1,2,3,4}}.
\] 
The ellipsis ``$\ldots$'' in the definition
of ${\EdgePerm}_i^{Q_3}$ above are the remaining edges of $Q_3$, which can
be listed in any order without changing the reassembling.%
   \footnote{ We qualify ${\EdgePerm}_i^{Q_3}$ with the superscript
   ``${Q_3}$'' because it depends on the graph ${Q_3}$. The same
   ordering of the edges may not be valid for a sequential
   reassembling of another $8$-vertex graph with a set of edges
   different from that of ${Q_3}$.  This is not the case for the
   binary tree ${\B}$ underlying the binary reassembling $(G,\B)$ of a
   graph $G = (V,E)$; that is, regardless of the placement of edges in
   $G$, the tree ${\B}$ over $V$ is valid for the binary reassembling
   $(G,\B)$ and again for the binary reassembling $(G',\B)$ of every
   graph $G' = (V,E')$ over the same set $V$ of vertices.}
A simple calculation of the $\alpha$-measure and $\beta$-measure of these 
three reassemblings of $Q_3$ produces:
\begin{alignat*}{5}
  & \alpha(Q_3,{\B}_1)\ =\ \alpha(Q_3,{\B}_2)\ =\ 4\quad && \text{and}\quad
    && \alpha(Q_3,{\B}_3)\ =\ 5,
\\
  & \beta(Q_3,{\B}_1)\ =\ \beta(Q_3,{\B}_2)\ =\ 48\quad && \text{and}\quad
    && \beta(Q_3,{\B}_3)\ =\ 49 .
\end{alignat*}

\noindent
By exhaustive inspection (details omitted), $(Q_3,{\B}_1)$ is both
$\alpha$-optimal and $\beta$-optimal for the class of all binary
reassemblings of $Q_3$. Because the $\alpha$-measure and
$\beta$-measure of $(Q_3,{\B}_1)$ and those of $(Q_3,{\B}_2)$ are
equal, $(Q_3,{\B}_2)$ is also both $\alpha$-optimal and $\beta$-optimal
for the class of all binary reassemblings and, therefore, for the 
smaller class of all \emph{balanced} reassemblings of which it is
a member.
By exhaustive inspection again, $(Q_3,{\B}_3)$ is both
$\alpha$-optimal and $\beta$-optimal for the sub-class of
all \emph{linear} reassemblings, but not for the full class
of all binary reassemblings of $Q_3$.
\end{example}

\begin{example}
\label{ex:complete}
The complete graph $K_8$ on $8$ vertices is shown in
Figure~\ref{fig:complete}. We can carry out three reassemblings of
$K_8$ by using the binary trees ${\B}_1$, ${\B}_2$, and ${\B}_3$, from
Example~\ref{ex:hypercube-graph} again.

\medskip
\begin{custommargins}{-1.0cm}{0cm} 
\begin{tabular}{l l}
\begin{minipage}{0.62\textwidth}
   A straightforward
   calculation of the $\alpha$-measure and $\beta$-measure of the
   resulting reassemblings $(K_8,{\B}_1)$, $(K_8,{\B}_2)$, and $(K_8,{\B}_3)$
   produces the following values:
{  
\begin{alignat*}{5}
  & \alpha(K_8,{\B}_1) = \alpha(K_8,{\B}_2) = \alpha(K_8,{\B}_3) =\ 16 ,
\\
  & \beta(K_8,{\B}_1) =\ 132, \quad
    \beta(K_8,{\B}_2) =\ 136, \quad 
    \beta(K_8,{\B}_3)\ =\ 133.
\end{alignat*}
}  
\!\!Because of the symmetries of $K_8$ (``every permutation of the $8$ vertices
produces another graph isomorphic to $K_8$''), all \emph{balanced} reassemblings
are isomorphic and so are all \emph{linear} reassemblings. Hence,
$(K_8,{\B}_2)$ is trivially $\alpha$-optimal and $\beta$-optimal for
the class of all balanced
reassemblings of $K_8$, and $(K_8,{\B}_3)$ is trivially $\alpha$-optimal 
and $\beta$-optimal for the class of all linear reassemblings of $K_8$.
\end{minipage}
&
\begin{minipage}{0.33\textwidth}
   \begin{figure}[H] 
     \begin{center}
       \includegraphics[scale=.35,trim=0cm 2.18cm 0cm 0cm,clip]
       {./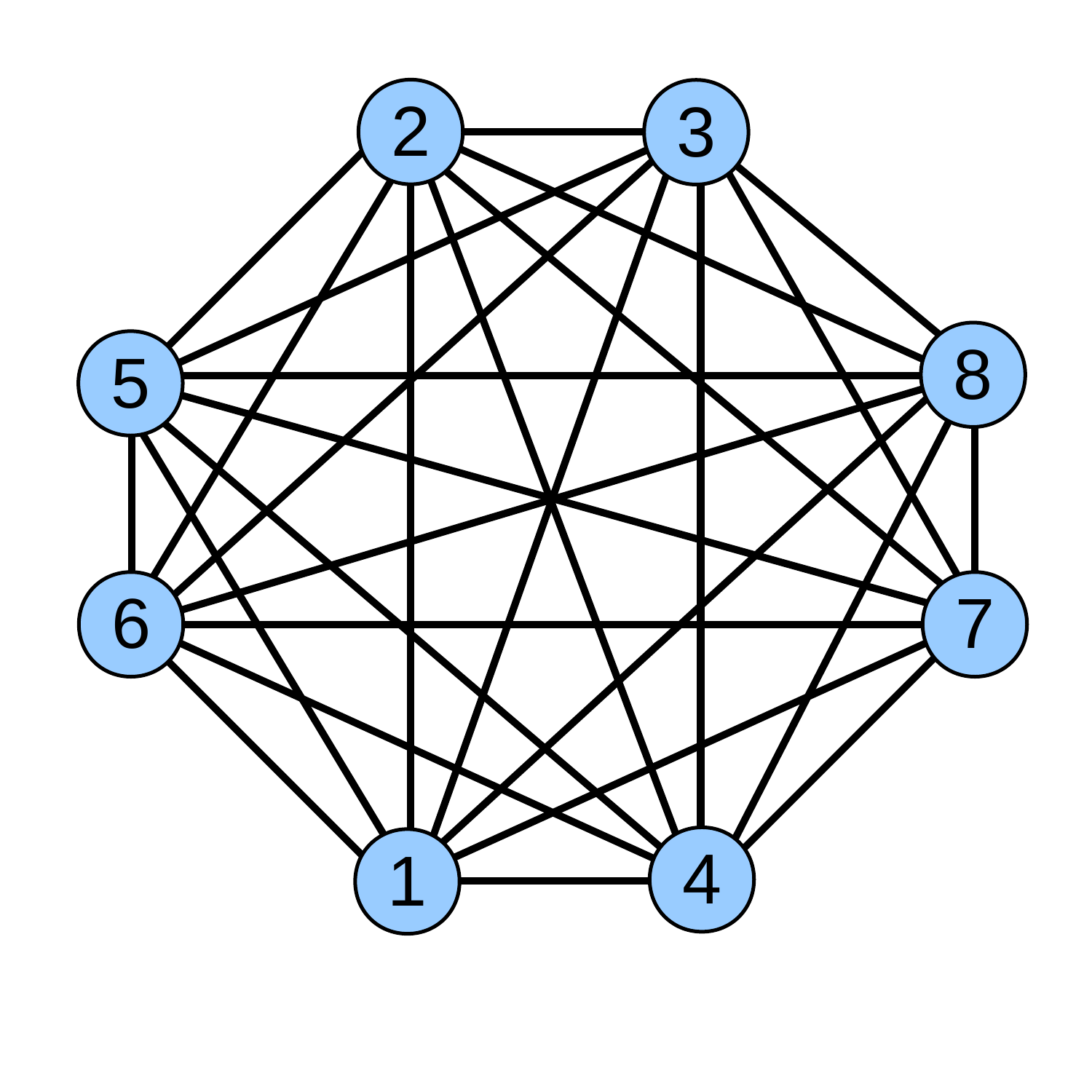}
     \end{center}
      \caption{\footnotesize Complete graph $K_8$.}
      \label{fig:complete}
   \end{figure}
\end{minipage}
\end{tabular}
\end{custommargins}

\smallskip
By exhaustive inspection (details omitted), it turns out that
$(K_8,{\B}_1)$ is $\alpha$-optimal for the class of all binary
reassemblings, but it is not $\beta$-optimal for the same class.  
The underlying tree of a
$\beta$-optimal binary reassembling of $K_8$ turns out to be
the following ${\B}_4$ over the vertices $\Set{1,\ldots,8}$,
shown with an ordering ${\EdgePerm}_4^{K_8}$ 
of the edges which induces a sequential ordering equal to
$(K_8,{\B}_4)$:
\[
  {\B}_4\ =\quad  
  \begin{cases}
  \quad \set{\ \set{\ \set{\ \set{\ \set{\ 1\ \ 2\ }\ \ \set{\ 3\ \  4\ }\ }
   \ \ \ \set{\ 5\ \ 6\ }\ }\ \ \ 7\ }\ \ 8\ }
  \end{cases}
  \qquad
  {\EdgePerm}_4^{K_8} = 
     \set{\ 1\ 2\ }\ \ \set{\ 3\ 4\ }\ \ \set{\ 5\ 6\ }\ \ \set{\ 1\ 3\ }
     \ \ \set{\ 1\ 5\ }\ \ \set{\ 1\ 7\ }\ \ \set{\ 1\ 8\ }\ \ \cdots
\]
where the ellipsis ``$\ldots$'' are the remaining edges in any order.
The resulting $\beta$-measure is $\beta(K_8,{\B}_4) = 127$.
\end{example}

\begin{example}
\label{ex:star}
The star graph $S_7$, with $7$ leaves and one internal vertex,
is shown in Figure~\ref{fig:star}. We can carry out four reassemblings of
$S_7$ by using the binary trees ${\B}_1$, ${\B}_2$, ${\B}_3$, and ${\B}_4$, 
from Examples~\ref{ex:hypercube-graph} and~\ref{ex:complete} again.

\medskip
\begin{custommargins}{-1.0cm}{0cm} 
\begin{tabular}{l l}
\begin{minipage}{0.62\textwidth}
   From a straightforward
   calculation of the $\alpha$-measure and $\beta$-measure of the
   reassemblings $(S_7,{\B}_1)$, $(S_7,{\B}_2)$, $(S_7,{\B}_3)$,
    and $(S_7,{\B}_4)$:
{  
\begin{alignat*}{5}
  & \alpha(S_7,{\B}_1) =
    \alpha(S_7,{\B}_2) =
    \alpha(S_7,{\B}_3) = \alpha(S_7,{\B}_4) = 7,
\\
  & \beta(S_7,{\B}_1) =\ 32,
   \quad \beta(S_7,{\B}_2) =\ 34,
\\
  & \beta(S_7,{\B}_3) =\ 35, \quad \beta(S_7,{\B}_4) =\ 31.
\end{alignat*}
}  
\!\!The preceding four reassemblings are all $\alpha$-optimal,
each for its own class of reassemblings, \ie, $(S_7,{\B}_2)$
is $\alpha$-optimal for the class of \emph{balanced} reassemblings of $S_7$
and $(S_7,{\B}_3)$ for the class of \emph{linear} reassemblings of $S_7$. 
It turns out that only $(S_7,{\B}_2)$  is $\beta$-optimal for its own 
class, the class of \emph{balanced} reassemblings of $S_7$. None of the four
is $\beta$-optimal for the class of all binary reassemblings of $S_7$.
\end{minipage}
&
\begin{minipage}{0.33\textwidth}
   \begin{figure}[H] 
     \begin{center}
       \includegraphics[scale=.35,trim=0cm 1.90cm 0cm 0cm,clip]
       {./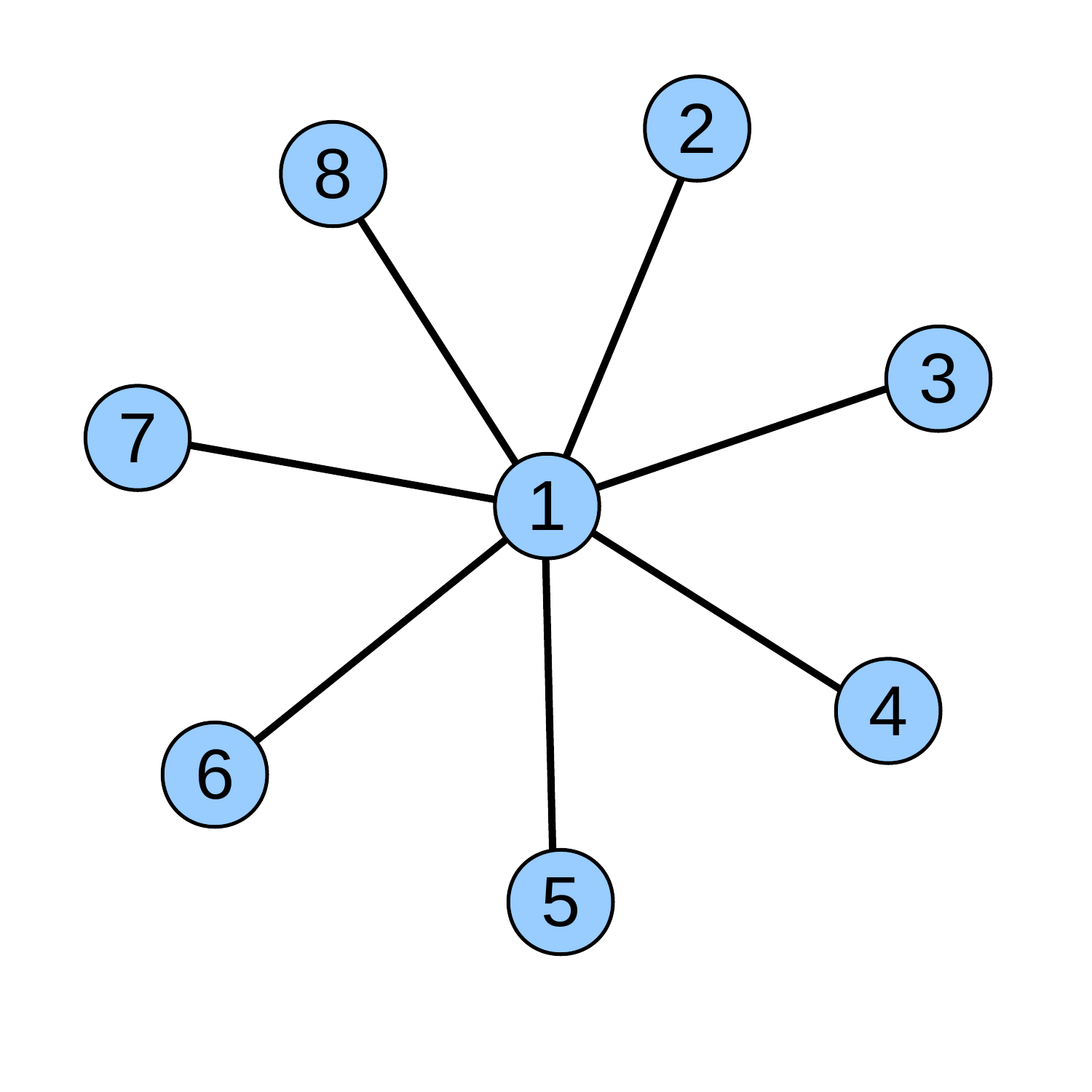}
     \end{center}
      \caption{\footnotesize Star graph $S_7$.}
      \label{fig:star}
   \end{figure}
\end{minipage}
\end{tabular}
\end{custommargins}

\smallskip
Of the four binary reassemblings above, only $(S_7,{\B}_3)$ is \emph{strict};
the three other are not strict, \ie, the three other merge some cluster
pairs $(A,B)$ such that $\bridges{}{A,B} = \varnothing$.
Because $(S_7,{\B}_1)$, $(S_7,{\B}_2)$, and $(S_7,{\B}_4)$ are
not strict, it is not possible to re-define them as
sequential reassemblings, each relative to an appropriate edge ordering.
An ordering ${\EdgePerm}_3^{S_7}$ of the edges that induces a sequential
reassembling equal to $(S_7,{\B}_3)$ is:
\[
  {\EdgePerm}_3^{S_7} = 
  \set{\ 1\ 2\ }\ \ \set{\ 1\ 3\ }\ \ \set{\ 1\ 4\ }\ \ \set{\ 1\ 5\ }
     \ \ \set{\ 1\ 6\ }\ \ \set{\ 1\ 7\ }\ \ \set{\ 1\ 8\ } .
\]
By exhaustive inspection (details omitted), the following
is a binary tree ${\B}_5$ over the vertices $\Set{1,\ldots,8}$
such that $(S_7,{\B}_5)$ is $\beta$-optimal for the class of all
binary reassemblings of $S_7$. Since $(S_7,{\B}_5)$ is not strict,
there is no corresponding ordering ${\EdgePerm}_5^{S_7}$ of the edges:
\[
  {\B}_5\ =\quad  
  \begin{cases}
  \quad 
  \set{\ \set{\ \set{\ \set{
  \ \set{\ \set{\ \set{\ 2\ 3\ }\ \ 4\ }\ \ 1\ }\ \ 5\ }\ \ 6\ }\ \ 7\ }\ \ 8\ }
  \qquad
  \end{cases}
\]
The resulting $\beta$-measure is $\beta(S_7,{\B}_5) = 29$. Note that 
${\B}_5$ is also linear. Hence, $(S_7,{\B}_5)$ is also
$\beta$-optimal for the class of linear reassemblings of $S_7$.
\end{example}

\begin{example}
\label{ex:from-arrangement-to-reassembling}
\label{ex:from-arrangement-to-reassembling-1}
The binary tree ${\B}_3$ in
Examples~\ref{ex:hypercube-graph}, \ref{ex:complete},
and~\ref{ex:star}, is a linear binary tree. Written in full, using the
notation in~\eqref{eq:linear1} at the beginning of
Section~\ref{sect:linear-reassembling}, the non-singleton sets of
${\B}_3$ are:
\begin{alignat*}{13}
  & X_1\ &&=\quad && 
  \set{\ 1\ 2\ }, \qquad
    &&  X_2\ &&=\quad && 
  \set{\ 1\ 2\ 3\ }, \qquad
    && X_3\ &&=\quad && 
  \set{\ 1\ 2\ 3\ 4\ }, \qquad
    && X_4\ &&=\quad && 
  \set{\ 1\ 2\ 3\ 4\ 5\ }, \qquad
\\[1ex]
  & X_5\ &&=\quad && 
  \set{\ 1 \ 2\ 3\ 4\ 5\ 6\ }, \qquad
  && X_6\ &&=\quad && 
  \set{\ 1\ 2\ 3\ 4\ 5\ 6\ 7\ }, \qquad
  && X_7\ &&=\quad && 
  \set{\ 1\ 2\ 3\ 4\ 5\ 6\ 7\ 8\ }.
  \qquad && 
\end{alignat*}
The singleton sets of ${\B}_3$ are:
\[
   Y_0 = \set{1},\quad Y_1 = \set{2},\quad Y_2 = \set{3},\quad Y_3 = \set{4},
  \quad Y_4 = \set{5},\quad Y_5 = \set{6},\quad Y_6 = \set{7},
   \quad Y_7 = \set{8} .
\]
The linear arrangement $\varphi_3$ induced by the linear reassembling
$(S_7,{\B}_3)$ in Example~\ref{ex:star} is 
(see Definition~\ref{def:linear-arrangement-induced}):
\[
   \varphi_3(2) = 1,\quad \varphi_3(1) = 2,\quad \varphi_3(3) = 3,
   \quad \varphi_3(4) = 4,\quad \varphi_3(5) = 5,\quad \varphi_3(6) = 6,
   \quad \varphi_3(7) = 7,\quad \varphi_3(8) = 8,
\]
rather than the linear arrangement $\varphi'$:
\[
   \varphi_3'(1) = 1,\quad \varphi_3'(2) = 2,\quad \varphi_3'(3) = 3,
   \quad \varphi_3'(4) = 4,\quad \varphi_3'(5) = 5,\quad \varphi_3'(6) = 6,
   \quad \varphi_3'(7) = 7,\quad \varphi_3'(8) = 8,
\]
because $\degr{S_7}{2} = 1 < 7 = \degr{S_7}{1}$. The difference between
$\varphi_3$ and $\varphi_3'$ is in the placement of the two first vertices:
vertex ``$1$'' and vertex ``$2$''.
\end{example}

\begin{example}
\label{ex:from-arrangement-to-reassembling-2}
The binary tree ${\B}_5$ in
Example~\ref{ex:star} is a linear binary tree. As in
Example~\ref{ex:from-arrangement-to-reassembling-1}, it is
straightforward to specify the singleton and non-singleton sets of
${\B}_5$ (omitted here) to fit the notation of~\eqref{eq:linear1} 
and~\eqref{eq:linear2}  at the beginning of
Section~\ref{sect:linear-reassembling}. There are two possible
linear arrangements, $\varphi_5$ and $\varphi_5'$,
which are induced by the linear reassembling $(S_7,{\B}_5)$,
because $\degr{S_7}{2} = \degr{S_7}{3} = 1$ 
(see Definition~\ref{def:linear-arrangement-induced}), namely:
\begin{alignat*}{12}
 & \varphi_5(2) = 1,\quad && \varphi_5(3) = 2,\quad && \varphi_5(4) = 3, \quad
 && \varphi_5(1) = 4,\quad && \varphi_5(5) = 5, \quad 
 && \varphi_5(6) = 6,\quad && \varphi_5(7) = 7,\quad && \varphi_5(8) = 8,
\\[1.5ex]
 & \varphi_5'(3) = 1,\quad && \varphi_5'(2) = 2,\quad && \varphi_5'(4) = 3, \quad
 && \varphi_5'(1) = 4,\quad && \varphi_5'(5) = 5, \quad 
 && \varphi_5'(6) = 6,\quad && \varphi_5'(7) = 7,\quad && \varphi_5'(8) = 8.
\end{alignat*}
The difference between
$\varphi_5$ and $\varphi_5'$ is in the placement of the two first vertices:
vertex ``$2$'' and vertex ``$3$''. In contrast to $\varphi_3$ and $\varphi_3'$
in Example~\ref{ex:from-arrangement-to-reassembling-1}, 
both $\varphi_5$ and $\varphi_5'$ are valid as linear arrangements
induced by the linear reassembling $(S_7,{\B}_5)$. A comparison
between $\varphi_3$ and $\varphi_5$ is shown in
Figure~\ref{fig:from-arrangement-to-reassembling}.
\end{example}

\begin{figure}[hc!]
  \begin{center}
  \begin{minipage}{0.45\textwidth}
     \includegraphics[scale=.6,trim=0cm 4.0cm 0cm 0cm,clip]%
          {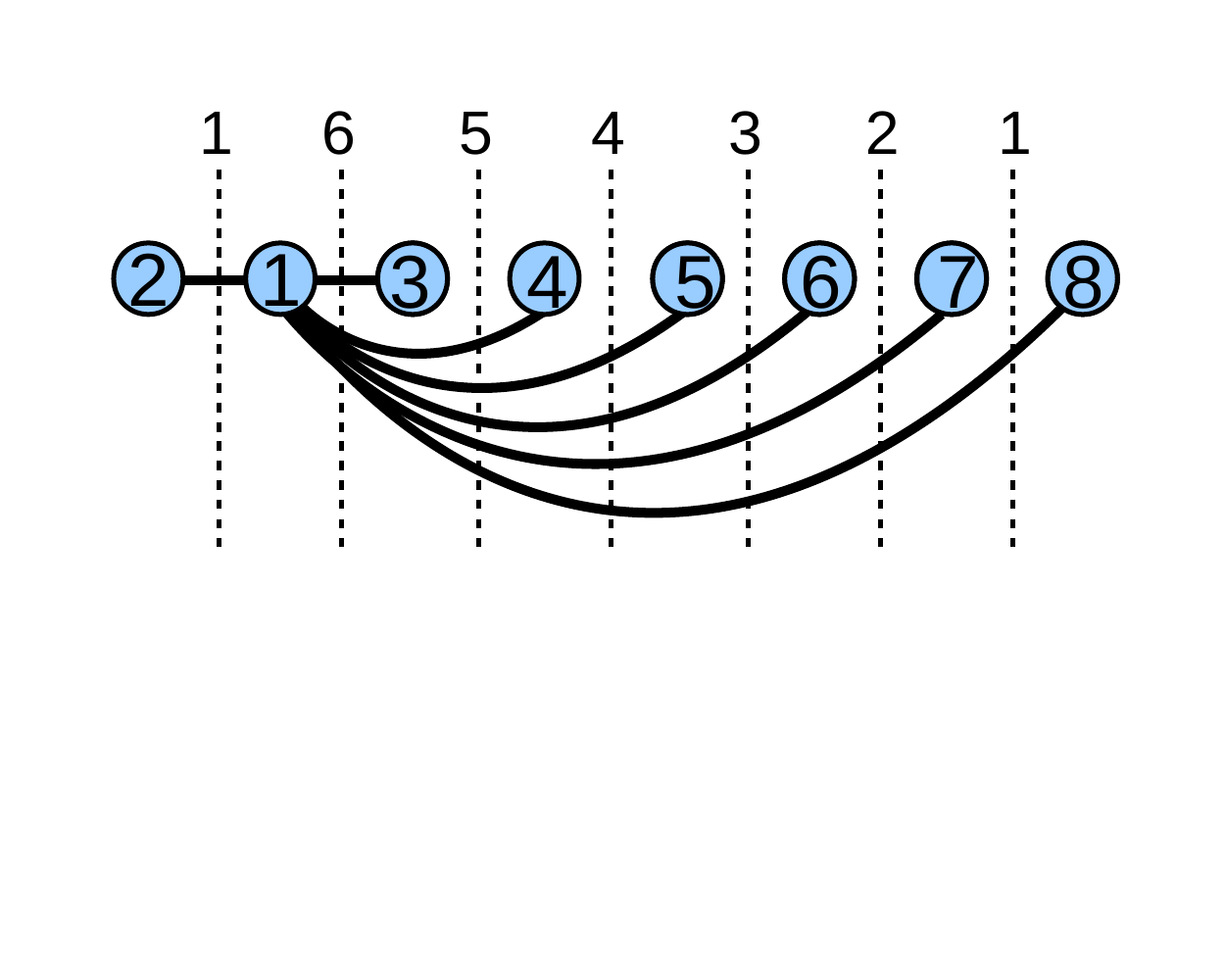}
     \begin{alignat*}{8}
     & \alpha(S_7,\varphi_3) =
       \ &&\max\;&& \Set{1,\,6,\,5,\,4,\,3,\,2,\,1}\;&&=\;&& 6
     \\
     & \beta(S_7,\varphi_3) =
       \ &&\sum\;&& \Set{1,\,6,\,5,\,4,\,3,\,2,\,1}\ &&=\;&& 22
     \end{alignat*}
  \end{minipage}
     \quad 
  \begin{minipage}{0.45\textwidth}
     \includegraphics[scale=.6,trim=0cm 4.0cm 0cm 0cm,clip]
          {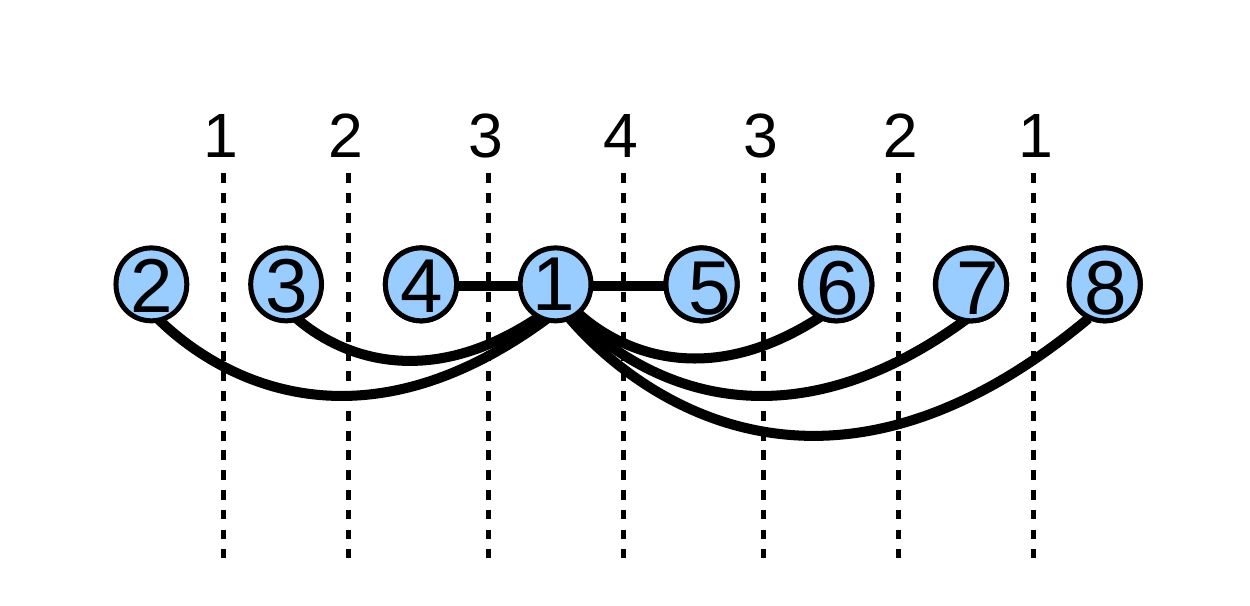}
     \begin{alignat*}{8}
     & \alpha(S_7,\varphi_5) =
       \ &&\max\;&& \Set{1,\,2,\,3,\,4,\,3,\,2,\,1}\;&&=\;&& 4
     \\
     & \beta(S_7,\varphi_5) =
       \ &&\sum\;&& \Set{1,\,2,\,3,\,4,\,3,\,2,\,1}\ &&=\;&& 16
     \end{alignat*}
  \end{minipage}
  \end{center}
     \caption{Comparison of linear arrangements $(S_7,\varphi_3)$ in
     Example~\ref{ex:from-arrangement-to-reassembling-1} and $(S_7,\varphi_5)$ 
     in Example~\ref{ex:from-arrangement-to-reassembling-2}.}
   \label{fig:from-arrangement-to-reassembling}
\end{figure}

\begin{example}
\label{ex:from-reassembling-to-arrangement}
This is a continuation of Example~\ref{ex:from-arrangement-to-reassembling}.
The linear reassembling induced by the linear arrangement $(S_7,\varphi_3)$ 
is precisely $(S_7,{\B}_3)$, but so is the linear reassembling induced by 
the linear arrangement $(S_7,\varphi_3')$ again the same $(S_7,{\B}_3)$,
according to Definition~\ref{def:linear-reassembling-induced}. 
This means: \emph{linear arrangements make distinction that linear
reassemblings do not make.} This difference is in the placement
of the two first vertices, specifically:
\begin{itemize}[itemsep=0pt,parsep=2pt,topsep=5pt,partopsep=0pt] 
\item The $\alpha$-measure and $\beta$-measure of a \emph{linear arrangement}
      generally depend on which vertex is placed first and
      which is placed second.
\item The $\alpha$-measure and $\beta$-measure of a \emph{linear reassembling}
      do \emph{not} distinguish between a first and second vertex
      and do \emph{not} depend on which is placed first and
      which is placed second.
\end{itemize}
As an example, consider the linear arrangements  $(S_7,\varphi_3)$ 
and  $(S_7,\varphi_3')$. Their $\alpha$-measure and $\beta$-measure are:
\begin{alignat*}{7}
    & \alpha(S_7,\varphi_3) \ &&=\ && 6,\qquad && \beta(S_7,\varphi_3) 
    \ &&=\ && 22 ,
\\
    & \alpha(S_7,\varphi_3') \ &&=\ && 7,\qquad && \beta(S_7,\varphi_3') 
    \ &&=\ && 28 .
\end{alignat*}
For the linear reassembling $(S_7,{\B}_3)$
induced by both $(S_7,\varphi_3)$ and  $(S_7,\varphi_3')$, we have:
$\alpha(S_7,{\B}_3) = 7$ and $\beta(S_7,{\B}_3) = 35$, as noted
in Example~\ref{ex:star}. Moreover, while both $(S_7,\varphi_3)$ and
$(S_7,\varphi_3')$ are neither $\alpha$-optimal nor $\beta$-optimal, 
$(S_7,{\B}_3)$ is $\alpha$-optimal (though not $\beta$-optimal).
\end{example}

\section{$\alpha$-Optimization of Linear Reassembling Is NP-Hard}
\label{sect:alpha-optimization}

We prove that the \emph{$\alpha$-optimality of linear arrangements} 
(in the literature: the \emph{minimum-cutwidth linear arrangement} problem)
and the \emph{$\alpha$-optimality of linear reassemblings} are reducible to
each other in polynomial time.

\begin{definition}{Chordal graph, triangulation, clique number, treewidth}
\label{def:chordal}
Let $G = (V,E)$ be a simple undirected graph. The following are standard
notions of graph theory~\cite{diestel2012}.
\begin{itemize}[itemsep=0pt,parsep=2pt,topsep=5pt,partopsep=0pt] 
\item
   $G$ is a \emph{chordal graph} if every cycle of length of $4$ or more has 
   a \emph{chord}, \ie, an edge connecting two vertices that are not consecutive
   in the cycle.
\item
   A \emph{triangulation of $G$} is a chordal graph $G' = (V',E')$ where $V = V'$ 
   and $E\subseteq E'$. In such a case, we say that $G$ can be 
   \emph{triangulated into} $G'$, not uniquely in general.
\item
   The \emph{clique number} of $G$, denoted $\omega(G)$, is the size of a
   largest clique in $G$.
\item
   There are different equivalent definitions of
   the \emph{treewidth}. We here use a definition, or a consequence of
   the original definition, which is more convenient for our
   purposes~\cite{BronnerRies2006,diestel2012}.  The \emph{treewidth}
   of $G$ is:
\[
   \min\,\Set{\,\omega(G')\;|\;\text{$G'$ is a triangulation of $G$}\,}\ -\ 1.
\]
\end{itemize}
In words, among all triangulations $G'$ of $G$, we choose a $G'$ whose
clique number is smallest: The \emph{treewidth} of $G$ is one less than
the clique number of such a $G'$.
\end{definition}

\begin{lemma}
\label{lem:thilikos1}
For every positive integers $\Delta$ and $k$,
there is an algorithm $\A$ using $\Delta$ and $k$ as fixed parameters,
such that, given an arbitrary simple undirected graph
$G = (V,E)$ as input to $\A$, if:
\begin{enumerate}[itemsep=0pt,parsep=2pt,topsep=5pt,partopsep=0pt] 
\item the maximum vertex degree in $G$ is $\leqslant \Delta$, and
\item the treewidth of $G$ is $\leqslant k$,
\end{enumerate}
then $\A$ computes
a minimum-cutwidth linear arrangement of $G$ in time $\bigO{n^{\Delta k^2}}$
where  $n = \size{V}$.
\end{lemma}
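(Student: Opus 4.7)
My plan is to treat this as a parameterized algorithmic result in the spirit of Thilikos--Bodlaender--Serna, proved by dynamic programming over a tree decomposition. I would proceed in four stages.

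\textbf{Stage 1: obtain a small tree decomposition.} First I would compute a tree decomposition $(T,\{X_t\}_{t\in T})$ of $G$ of width at most $k$. Since $k$ is fixed, Bodlaender's algorithm produces such a decomposition in time linear in $n$ (with a constant depending on $k$). I would then normalize $T$ into a nice tree decomposition with $O(n)$ \emph{leaf}, \emph{introduce}, \emph{forget}, and \emph{join} nodes, each bag of size at most $k+1$; this preserves the width and is standard.

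\textbf{Stage 2: design the DP state.} For each bag $X_t$, I would maintain a table whose entries are indexed by ``partial arrangements''. A partial arrangement at $t$ consists of:\ (i) a choice of positions $1,\ldots,n_t$ assigned to the vertices of the \emph{subgraph introduced below $t$}, where $n_t$ is the number of vertices already forgotten, (ii) for each vertex $v\in X_t$, a single position chosen in $\{1,\ldots,n\}$ representing where $v$ will eventually be placed, and (iii) the value of the running maximum cut encountered so far. Since each vertex has degree at most $\Delta$, the set of edges that can cross any cut at position $i$ and involve a vertex in $X_t$ or already-placed vertex is controlled by $\Delta(k{+}1)$. Consequently, it suffices to record, for each of the $\Delta$ ``open'' half-edges of each bag vertex, whether its other endpoint lies to the left or to the right of the current position. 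The number of such signatures per bag is $O(n^{k+1}\cdot 2^{\Delta(k+1)})$, and once the bound $\Delta,k$ are fixed, this is $n^{O(k)}$ per bag.

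\textbf{Stage 3: DP transitions and correctness.} I would give the update rule at each nice-tree node: at a \emph{leaf} the table is trivial; at an \emph{introduce} node I insert the new vertex at its chosen global position and refresh the half-edge signature using the edges of $G$ incident to it inside the current bag; at a \emph{forget} node I project out the forgotten vertex while recording the contribution of its incident bridge edges to the running maximum cut; at a \emph{join} node I combine two compatible tables by matching the common bag's vertex-to-position assignments and summing the half-edge signatures. Correctness follows from the standard tree-decomposition separator property: once a vertex is forgotten, its interactions with the remainder of the graph are fully captured by the recorded signature.

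\textbf{Stage 4: time analysis and the target exponent.} The running time is the number of states per bag times the cost of a join, times $O(n)$ nodes. Each bag carries at most $k+1$ vertices, each with at most $\Delta$ incident edges, which after accounting for the interaction between bag vertices and the global position choices yields an overall exponent bounded by $\Delta k^2$, giving the stated $\bigO{n^{\Delta k^2}}$ bound. I expect the main obstacle to be the state-space bookkeeping: one has to verify that the signature carries exactly the information needed so that a forgotten vertex never needs to be consulted again, and that the join step respects the ``no double-counting of edges crossing the current cut'' invariant. Once this invariant is properly formulated and maintained, the rest is a routine tree-DP argument and the exponent follows from a direct product count.
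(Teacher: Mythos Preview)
The paper does not prove this lemma at all: its entire proof is a one-sentence citation of Theorem~4.2 in Thilikos--Serna--Bodlaender, so there is nothing to compare your argument \emph{to} within the present paper. What you have written is an attempted reconstruction of the cited external result, and it is in the right spirit (dynamic programming over a nice tree decomposition), but there is a genuine gap in your DP state.

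The problem is at the \emph{join} node. Your state records (i) global positions of the bag vertices, (ii) a left/right bit for each of the at most $\Delta(k{+}1)$ half-edges incident to bag vertices, and (iii) the running maximum cut so far. This is not enough to combine two partial layouts correctly. At a join, the two children share the bag $X_t$ but have disjoint sets of forgotten vertices; the cut at any fixed position $i$ in the merged layout is the \emph{sum} of the edge contributions from each side plus the bag edges. Knowing only the running maximum on each side does not let you compute the maximum of the merged profile: the two sides' maxima may occur at different positions, and at the position where the true merged maximum lies, each side might be strictly below its own maximum. You therefore need, for each child, the entire cut profile (or a faithful compression of it) across all positions, not just its max.

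This is exactly what the Thilikos--Serna--Bodlaender ``characteristic'' encodes: roughly, the sequence of cut values at the breakpoints determined by the bag vertices, together with how those cuts can still change as the remaining edges are resolved. The reason the bound is $n^{\Delta k^2}$ rather than $n^{O(k)}$ is that the relevant cut values are themselves bounded in terms of $\Delta$ and $k$ (a graph of treewidth $\leqslant k$ and maximum degree $\leqslant\Delta$ has cutwidth $O(\Delta k)$), and the characteristic is a vector of such bounded integers of length governed by $k$; counting these vectors is what produces the $\Delta k^2$ in the exponent. Your Stage~4 accounting (``after accounting for the interaction \ldots\ yields an overall exponent bounded by $\Delta k^2$'') hand-waves precisely this step. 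If you want to give an actual proof rather than a citation, you must replace the left/right bits and the single running max by a full cut-profile characteristic and redo the state count from that.
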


\begin{proof}
This is Theorem 4.2 in~\cite{thilikosSernaBodlaender2001}, where
the algorithm not only computes the value of a minimum cutwidth, but
can be adjusted to output the corresponding  minimum-cutwidth
linear arrangement $\varphi : V\to\Set{1,\ldots,n}$.
\end{proof}

In Lemma~\ref{lem:thilikos2}, 
a $\emph{cut vertex}$ in $G$ is a vertex whose removal
increases the number of connected components.

\begin{lemma}
\label{lem:thilikos2}
There is an algorithm $\A$ such that, given an arbitrary simple undirected graph
$G = (V,E)$ as input to $\A$, if:
\begin{enumerate}[itemsep=0pt,parsep=2pt,topsep=5pt,partopsep=0pt] 
\item every vertex in $G$ has degree $\leqslant 3$, and
\item every vertex in $G$ of degree $= 3$ is a cut vertex,
\end{enumerate}
then $\A$ computes 
a minimum-cutwidth linear arrangement of $G$ in time $\bigO{n^{12}}$
where  $n = \size{V}$.
\end{lemma}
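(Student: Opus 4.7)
The plan is to reduce the statement to Lemma~\ref{lem:thilikos1} by showing that the two hypotheses force the treewidth of $G$ to be at most $2$. Since the maximum vertex degree is bounded by $\Delta = 3$ by hypothesis~(1), once we establish the bound $k \leqslant 2$ on treewidth, Lemma~\ref{lem:thilikos1} will yield an algorithm for $\ola$-free minimum-cutwidth linear arrangement running in time $\bigO{n^{\Delta k^2}} = \bigO{n^{3\cdot 4}} = \bigO{n^{12}}$, which is exactly the claimed bound.

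First I would analyze the block decomposition of $G$. Recall that a vertex is a cut vertex of $G$ iff it lies in at least two distinct blocks (biconnected components), and the edges incident to a cut vertex are partitioned across its blocks. Fix any block $B$ of $G$ and any vertex $v\in B$. If $v$ has degree $\leqslant 2$ in $G$, then its degree within $B$ is also $\leqslant 2$. If instead $\degree{v} = 3$ in $G$, then by hypothesis~(2) $v$ is a cut vertex, so $v$ lies in at least two blocks and its three incident edges are distributed among these blocks with at least one edge in each; hence the degree of $v$ inside $B$ is at most $2$. Thus every vertex of every block has degree $\leqslant 2$ within that block, which forces each block to be either a single edge, a path, or a cycle.

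Next I would invoke the standard fact that a graph whose blocks are all paths and cycles has treewidth at most $2$: any single cycle or path has treewidth $\leqslant 2$, and the treewidth of a graph equals the maximum treewidth over its blocks (tree decompositions of individual blocks can be glued along the cut vertices without increasing bag size). Therefore $G$ has treewidth $k \leqslant 2$, and combined with $\Delta = 3$, the bound from Lemma~\ref{lem:thilikos1} applies and gives the $\bigO{n^{12}}$ running time.

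The main obstacle, such as it is, is the block-structure argument in the second paragraph — specifically the observation that hypothesis~(2) prevents any block from containing a vertex of block-degree $3$. Everything else is either an appeal to the classical fact about treewidth of cactus-like graphs or a direct plug-in into Lemma~\ref{lem:thilikos1}; no genuinely new combinatorial work is required beyond verifying that the parameters line up to produce the exponent $\Delta k^2 = 12$.
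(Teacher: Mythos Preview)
Your proposal is correct and follows the same high-level plan as the paper: bound the treewidth of $G$ by $2$ and then invoke Lemma~\ref{lem:thilikos1} with $\Delta=3$ and $k=2$ to get the running time $\bigO{n^{\Delta k^2}} = \bigO{n^{12}}$. The only difference is in how the treewidth bound is argued. The paper shows directly that any two cycles of $G$ must be vertex-disjoint (otherwise one produces a degree-$3$ vertex that is not a cut vertex), so $G$ is a tree of pairwise non-overlapping rings and hence has treewidth $\leqslant 2$. You instead go through the block decomposition: hypothesis~(2) forces every vertex to have degree $\leqslant 2$ inside each block, so every block is a single edge or a cycle, and you then appeal to the standard fact that the treewidth of a graph equals the maximum treewidth over its blocks. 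Your argument is a bit cleaner and more modular, leaning on textbook facts about block--cut trees; the paper's argument is more self-contained but arrives at the same structural picture (a cactus). Either route yields the exponent $12$ immediately.
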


\begin{proof}
We show that the treewidth $k$ of $G$ is $\leqslant 2$. Because 
the maximum vertex degree $\Delta$ of $G$ is $\leqslant 3$, 
Lemma~\ref{lem:thilikos1} implies the existence of an algorithm $\A$ which
runs in time  $\bigO{n^{\Delta k^2}} = \bigO{n^{12}}$.

To show that $k\leqslant 2$,
consider a vertex $v$ of degree $= 3$, which is therefore a cut vertex.
The removal of $v$ can have one of two possible outcomes: 
\begin{enumerate}[itemsep=0pt,parsep=2pt,topsep=5pt,partopsep=0pt] 
\item[(a)] disconnect $G$ into $3$ components, or
\item[(b)] disconnect $G$ into $2$ components.
\end{enumerate}
If every vertex of degree $= 3$ satisfies condition (a), then $G$ is tree whose
treewidth is $1$, since its clique number $\omega(G) = 2$ in this case.

If $C_1$ and $C_2$ are cycles in $G$, each with $3$ vertices or more,
then $C_1$ and $C_2$ are non-overlapping, \ie, $C_1$ and $C_2$ have no
vertex in common and no edge in common. If they have an edge
$\set{v_1\,w_1}$ in common, then there is an edge $\set{v_2\,w_2}\in
C_1\cap C_2$ such that $\degree{v_2}=3$ (or, resp., $\degree{w_2}=3$) and 
$v_2$ (or, resp., $w_2$) is not a cut vertex, contradicting the
hypothesis. If $C_1$ and $C_2$ have no edge in common, but do have
a vertex $v$ in common, then $\degree{v_2} > 3$, again contradicting
the hypothesis.

In case one or more vertices satisfy condition (b), $G$ can be
therefore viewed as a finite collection of non-overlapping
rings $\Set{R_1,\ldots,R_p}$, each ring being a cycle with at least
$3$ vertices, satisfying condition (c):
\begin{enumerate}[itemsep=0pt,parsep=2pt,topsep=5pt,partopsep=0pt] 
\item[(c)] if two distinct rings $\Set{R_i,R_j}$, with $i\neq j$, 
     are connected by a path $P_{i,j}$, then the removal of all the
     vertices and edges of $P_{i,j}$
     (in particular the two endpoints of $P_{i,j}$, one in $R_i$ and
     one in $R_j$, which are necessarily vetices of degree $=3$)
     disconnects $G$ into $2$ components.
\end{enumerate}
Another way of expressing (c) is that, if all the rings 
$\Set{R_1,\ldots,R_p}$ are each contracted to a single vertex, then the
result is a tree (where some of the internal vertices may now have degree 
larger than $3$). Since the clique number of a ring is $3$,
the treewidth of a ring is $2$, and the desired
conclusion follows.
\end{proof}

\begin{lemma}
\label{lem:one-vertex-not-a-cut-vertex1}
Let $G = (V,E)$ be a simple undirected graph, where
every vertex has degree $\leqslant 3$, and let $(G,\LL)$
be a linear reassembling of $G$. Consider the longest
chain of nested clusters of size $\geqslant 2$, as in~\eqref{eq:linear1}
in the opening paragraph of Section~\ref{sect:linear-reassembling}: 
\[
   X_1 \ \subsetneq X_2 
                  \ \subsetneq\ \cdots\ \subsetneq\ X_{n-1} = V .
\]
\textbf{Conclusion:} If there is one vertex of degree $= 3$ in $G$
which is \emph{not} a cut vertex, then \\
$\max\,\Set{\,\degr{}{X_1},\ldots,\degr{}{X_{n-1}}}\,\geqslant 3$.
\end{lemma}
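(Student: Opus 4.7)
The plan is to locate the degree-3 non-cut vertex $v$ in the linear chain and bound the edge-boundary degrees of the two clusters on either side of it. Recall (from the opening of Section~\ref{sect:linear-reassembling}) that the leaves $Y_0,\ldots,Y_{n-1}$ of $\LL$ determine the chain via $X_1 = Y_0 \cup Y_1$ and $X_i = X_{i-1}\cup Y_i$ for $i\geqslant 2$. So there is a unique index $j$ with $v \in Y_j$, and I split the argument on the value of $j$.

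First, suppose $v \in \Set{Y_0, Y_1}$, so $X_1 = \Set{v,w}$ for some $w\neq v$. If $w$ is not adjacent to $v$, then all three edges incident to $v$ cross the cut, giving $\degr{}{X_1}\geqslant 3$. If $w$ is adjacent to $v$, then $\degr{}{X_1} = (\degr{}{v}-1) + (\degr{}{w}-1) = 1 + \degr{}{w}$; here I use the non-cut hypothesis to conclude $\degr{}{w}\geqslant 2$ (otherwise $w$'s only neighbor is $v$, and removing $v$ would isolate $w$, making $v$ a cut vertex), so again $\degr{}{X_1}\geqslant 3$. Next, suppose $j = n-1$, \ie, $v$ enters last; then $X_{n-2} = V\setminus\Set{v}$ and $\degr{}{X_{n-2}} = \degr{}{v} = 3$.

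The main case is $2\leqslant j\leqslant n-2$, where both $L := X_{j-1}$ and $R := V\setminus X_j$ are nonempty. Let $a,c$ count the neighbors of $v$ in $L$ and $R$ respectively (so $a+c = 3$), and let $b$ count the edges of $G$ with one endpoint in $L$ and one in $R$. Because $v$ is not a cut vertex, $G - v$ is connected; since $L$ and $R$ partition $V\setminus\Set{v}$ into two nonempty parts, this forces $b\geqslant 1$. A direct count gives $\degr{}{X_{j-1}} = a+b$ and $\degr{}{X_j} = b+c$, so
\[
   \degr{}{X_{j-1}} + \degr{}{X_j} \;=\; (a+c) + 2b \;=\; 3 + 2b \;\geqslant\; 5,
\]
and therefore $\max\Set{\degr{}{X_{j-1}},\degr{}{X_j}} \geqslant 3$. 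In every case one of the clusters $X_i$ in the chain has edge-boundary degree at least $3$, yielding the claim. The only thing requiring care is the bookkeeping at the endpoints of the chain (the $j\in\Set{0,1,n-1}$ cases above), which is why I treat them separately; the bulk of the argument is the two-cut inequality, whose inputs are simply $\degr{}{v}=3$ and the one-edge lower bound $b\geqslant 1$ granted by the non-cut property.
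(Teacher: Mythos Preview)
Your proof is correct and, in fact, cleaner than the paper's. The paper first establishes an auxiliary fact---that the hypothesis forces at least \emph{two} degree-$3$ non-cut vertices (via an argument about cycles and chords)---solely in order to avoid the boundary case where the distinguished vertex sits in $Y_0$; it then picks such a vertex $v$ with index $i\geqslant 1$ and runs a somewhat ad~hoc case analysis on $\degr{}{X_{i-1}}$ and $\degr{}{X_i}$.

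You dispense with the ``two non-cut vertices'' step entirely by handling the endpoint cases $j\in\{0,1\}$ and $j=n-1$ directly, and your main case is driven by the single identity
\[
   \degr{}{X_{j-1}} + \degr{}{X_j} \;=\; (a+c) + 2b \;=\; 3 + 2b,
\]
together with the observation that $b\geqslant 1$ because $L$ and $R$ are a nontrivial bipartition of the connected graph $G-v$. This two-cut sum is the right organizing device: it replaces the paper's chain of ``cannot be $0$, cannot be $1$'' exclusions with a one-line inequality, and it makes transparent exactly where the non-cut hypothesis enters. The paper's route buys the side fact about a second non-cut vertex (not used elsewhere), whereas your route buys a shorter, more self-contained argument.
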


\begin{proof}
We first show there are least two vertices of degree $=3$ which are
not cut vertices. Let $v$ be a vertex of degree $= 3$ which is not a
cut vertex, and let $\Set{\set{v\,x},\set{v\,y},\set{v\,z}}$ be the
three edges incident to $v$. 
Because $v$ is not a cut vertex, any two edges in 
$\Set{\set{v\,x},\set{v\,y},\set{v\,z}}$ are consecutive edges in
a cycle containing $v$. Let $C(v,x,y)$ be a cycle containing
edges $\Set{\set{v\,x},\set{v\,y}}$, and define similarly 
cycles $C(v,x,z)$ and $C(v,y,z)$. If any of these three cycles contains
a chord, then the two endpoints of the chord are vertices of degree $=3$
which are not cut vertices. If none of these three cycles contain a chord,
then we can combine any two of them, because they share an edge, to 
form another cycle with a chord, which again implies the existence of
two vertices of degree $=3$ which are not cut vertices.

To conclude the proof, consider the clusters of $\LL$ 
of size $\geqslant 2$: These are $\Set{X_1,\ldots,X_{n-1}}$, and
the corresponding singleton clusters are $\Set{Y_0,\ldots,Y_{n-1}}$, 
as in~\eqref{eq:linear1} and~\eqref{eq:linear2} 
in Section~\ref{sect:linear-reassembling}. By the preceding
argument, there are at least two vertices of degree $=3$ which are 
not cut vertices. Let one of these two be $v$, with $Y_i = \Set{v}$ 
for some $i\geqslant 1$. 

We have $X_{i-1} \cap Y_i =\varnothing$ and $X_{i-1} \cup Y_i = X_i$.
There are two cases: (1) For some vertex $w\in X_{i-1}$, there is an
edge $\set{v\,w}\in E$, and (2) for every vertex $w\in X_{i-1}$, there
is no such edge. We consider case (1) and leave the other (easier)
case (2) to the reader.

We cannot have $\degr{}{X_{i-1}} = 0$, otherwise $G$ is
disconnected, nor can we have $\degr{}{X_{i-1}} = 1$, otherwise
$v$ is a cut vertex. Hence, $\degr{}{X_{i-1}} \geqslant 2$. If
$\degr{}{X_{i-1}} \geqslant 3$, this is already the conclusion of
the lemma and there is nothing else to prove.
Suppose $\degr{}{X_{i-1}} = 2$, the case left to consider.

Similarly, we cannot have $\degr{}{X_{i}} = 0$, otherwise $G$ is
disconnected, nor $\degr{}{X_{i}} = 1$, otherwise
$v$ is a cut vertex. Hence, $\degr{}{X_{i}} \geqslant 2$.
But if $\degr{}{X_{i-1}} = 2$ and $\degr{}{X_{i}} \geqslant 2$,
with $\degree{v} = 3$, then it must be that 
$\degr{}{X_{i}} = 3$.
\end{proof}

\begin{lemma}
\label{lem:one-vertex-not-a-cut-vertex2}
Let $G = (V,E)$ be a simple undirected graph, where
every vertex has degree $\leqslant 3$ and where one vertex 
of degree $= 3$ is \emph{not} a cut vertex.
Let $(G,\LL)$ be a linear reassembling and $(G,\varphi)$ a linear
arrangement. 

\noindent
\textbf{Conclusion:} If $(G,\LL)$ is induced by $(G,\varphi)$,
or if $(G,\varphi)$ is induced by $(G,\LL)$, then:
\begin{itemize}[itemsep=0pt,parsep=2pt,topsep=5pt,partopsep=0pt] 
    \item $\alpha(G,\LL) = \alpha(G,\varphi)$.
    \item $(G,\LL)$ is $\alpha$-optimal iff $(G,\varphi)$ is $\alpha$-optimal.
\end{itemize}
\end{lemma}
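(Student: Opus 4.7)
My plan is to translate both $\alpha(G,\LL)$ and $\alpha(G,\varphi)$ into almost-identical expressions over the chain $X_1 \subsetneq \cdots \subsetneq X_{n-1} = V$ that $(G,\LL)$ and $(G,\varphi)$ share when one is induced by the other, and then invoke Lemma~\ref{lem:one-vertex-not-a-cut-vertex1} to absorb the only discrepancy between the two expressions.

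First I would unwind the definitions in parallel. By Definitions~\ref{def:linear-arrangement-induced} and~\ref{def:linear-reassembling-induced}, whenever $(G,\varphi)$ and $(G,\LL)$ are induced from each other we have $L(G,\varphi,k) = X_{k-1}$ for every $2 \leqslant k \leqslant n$, while $L(G,\varphi,1) = \{\varphi^{-1}(1)\}$ is the singleton inside $X_1$ of \emph{smaller} vertex-degree (ties broken arbitrarily). Therefore
\[
    \alpha(G,\varphi)\ =\ \max\,\bigl\{\,\degree{\varphi^{-1}(1)},\ \degr{}{X_1},\,\ldots,\,\degr{}{X_{n-1}}\,\bigr\},
\]
whereas, because the leaf nodes of $\LL$ contribute \emph{all} singleton degrees,
\[
    \alpha(G,\LL)\ =\ \max\,\bigl\{\,\max_{v\in V}\degree{v},\ \degr{}{X_1},\,\ldots,\,\degr{}{X_{n-1}}\,\bigr\}.
\]
The two formulas differ only in how they pick up vertex degrees from singletons: $\alpha(G,\varphi)$ grabs a single vertex degree, $\alpha(G,\LL)$ grabs the maximum of all of them. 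Both of those contributions are bounded above by $3$ under the hypothesis. But by Lemma~\ref{lem:one-vertex-not-a-cut-vertex1}, the presence of a non-cut vertex of degree $3$ in $G$ forces $\max\{\degr{}{X_1},\ldots,\degr{}{X_{n-1}}\}\geqslant 3$, so the singleton contributions are dominated by the non-singleton ones on both sides. The two maxima therefore collapse to the same value $\max\{\degr{}{X_1},\ldots,\degr{}{X_{n-1}}\}$, which gives $\alpha(G,\LL) = \alpha(G,\varphi)$.

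For the second bullet I would argue by contrapositive, using the symmetry already built into Definitions~\ref{def:linear-arrangement-induced} and~\ref{def:linear-reassembling-induced}. If $(G,\LL)$ were not $\alpha$-optimal among linear reassemblings, take a strictly better linear reassembling $(G,\LL')$ and let $(G,\varphi')$ be any linear arrangement it induces; since the hypotheses on $G$ are properties of $G$ alone, the first bullet applies to $\LL'$ as well, giving $\alpha(G,\varphi') = \alpha(G,\LL') < \alpha(G,\LL) = \alpha(G,\varphi)$, contradicting the $\alpha$-optimality of $(G,\varphi)$. The reverse direction is symmetric: from a strictly better arrangement $(G,\varphi')$ pass to its induced reassembling $(G,\LL')$ and apply the first bullet again.

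The only thing I would need to be careful about is that the passage between arrangements and reassemblings is not bijective, as Example~\ref{ex:from-reassembling-to-arrangement} illustrates (two arrangements differing only in their first two positions can induce the same reassembling, and their $\alpha$-measures can disagree). This is precisely why the degree-$3$ non-cut-vertex hypothesis is essential: without it, one could move a vertex of large degree to the first position and inflate $\alpha(G,\varphi)$ while leaving $\alpha(G,\LL)$ unchanged. I expect this bookkeeping around the first-position ambiguity to be the one subtle point, but because the first bullet already shows the equality independently of how the ambiguity is resolved, the optimality equivalence in the second bullet follows without further case analysis.
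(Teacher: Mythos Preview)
Your proposal is correct and follows exactly the approach the paper intends: the paper's own proof is a one-paragraph sketch invoking Lemma~\ref{lem:one-vertex-not-a-cut-vertex1} to argue that the singleton-cluster degrees (all $\leqslant 3$) can be ignored because some non-singleton cluster already has edge-boundary degree $\geqslant 3$, and you have supplied precisely those details. Your handling of the second bullet via the induced-pair symmetry and your remark on the first-position ambiguity are also in line with the paper's reasoning.
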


\begin{proof}
Straightforward consequence of Lemma~\ref{lem:one-vertex-not-a-cut-vertex1},
the definitions of $\alpha(G,\LL)$ and $\alpha(G,\varphi)$, and what it
means for $(G,\LL)$ to be induced by $(G,\varphi)$ and 
for $(G,\varphi)$ to be induced by $(G,\LL)$. When there is  
at least one vertex of degree $= 3$ which is \emph{not} a cut vertex,
and therefore at least two of them by the proof of 
Lemma~\ref{lem:one-vertex-not-a-cut-vertex1}, we can
ignore the degrees of singleton clusters in the computation of 
$\alpha(G,\LL)$. All details omitted.
\end{proof}

\begin{theorem}
\label{thm:reducing-linear-arrangements-to-linear-reassemblings}
For the class of simple undirected graphs $G$ where every vertex
has degree $\leqslant 3$, the $\alpha$-optimality of linear
arrangements $(G,\varphi)$ is polynomial-time reducible to the
$\alpha$-optimality of linear reassemblings $(G,\LL)$.
\end{theorem}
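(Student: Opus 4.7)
The plan is to split into two cases based on a polynomial-time test of a structural property of $G$, namely whether every vertex of degree $3$ in $G$ is a cut vertex (this can be decided in linear time by the standard DFS-based cut-vertex algorithm).

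First, suppose every vertex of degree $3$ in $G$ is a cut vertex. Then the hypotheses of Lemma~\ref{lem:thilikos2} are satisfied, and we obtain an $\alpha$-optimal (minimum-cutwidth) linear arrangement $(G,\varphi)$ directly in time $\bigO{n^{12}}$ without any appeal to the linear-reassembling oracle. This disposes of the ``easy'' case.

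Second, suppose there exists a vertex of degree $3$ in $G$ that is \emph{not} a cut vertex. I invoke the oracle for $\alpha$-optimal linear reassembling on $G$ to obtain a linear reassembling $(G,\LL)$ with $\alpha(G,\LL) = \alpha(G)$, and then extract from $(G,\LL)$ an induced linear arrangement $(G,\varphi)$ according to Definition~\ref{def:linear-arrangement-induced}. I claim this $(G,\varphi)$ is $\alpha$-optimal. By Lemma~\ref{lem:one-vertex-not-a-cut-vertex2}, $\alpha(G,\LL) = \alpha(G,\varphi)$. For the lower bound in the other direction, take any $\alpha$-optimal linear arrangement $(G,\varphi^\ast)$ and let $(G,\LL^\ast)$ be the linear reassembling induced by it (Definition~\ref{def:linear-reassembling-induced}); by Lemma~\ref{lem:one-vertex-not-a-cut-vertex2} applied in the other direction, $\alpha(G,\LL^\ast) = \alpha(G,\varphi^\ast)$. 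Since $(G,\LL)$ is $\alpha$-optimal among linear reassemblings, $\alpha(G,\LL) \leqslant \alpha(G,\LL^\ast)$, so $\alpha(G,\varphi) \leqslant \alpha(G,\varphi^\ast)$, giving $\alpha$-optimality of $(G,\varphi)$.

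The extraction step and the cut-vertex test are clearly polynomial, so the overall reduction is polynomial time. The main obstacle, and the only delicate point, is already absorbed into Lemma~\ref{lem:one-vertex-not-a-cut-vertex2}: one has to be careful that the $\alpha$-measure of a linear reassembling is not inflated by the degrees of singleton leaf clusters $\Set{v}$ compared to the edge-cut values $\zeta(G,\varphi,i)$ tracked by a linear arrangement. This is exactly why the hypothesis ``some degree-$3$ vertex is not a cut vertex'' is needed: by Lemma~\ref{lem:one-vertex-not-a-cut-vertex1} the maximum $\degr{}{X_i}$ over non-singleton clusters is already $\geqslant 3$, and since $G$ has maximum degree $\leqslant 3$, each singleton degree is $\leqslant 3$ and therefore contributes nothing new to the maximum, so the measures on the reassembling side and the arrangement side coincide.
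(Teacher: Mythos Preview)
Your proof is correct and follows essentially the same two-case split as the paper: use Lemma~\ref{lem:thilikos2} directly when every degree-$3$ vertex is a cut vertex, and otherwise call the reassembling oracle and return the induced arrangement, appealing to Lemma~\ref{lem:one-vertex-not-a-cut-vertex2}. Your explicit derivation of the optimality of $(G,\varphi)$ via $(G,\varphi^\ast)$ and $(G,\LL^\ast)$ is a bit more than needed, since the second bullet of Lemma~\ref{lem:one-vertex-not-a-cut-vertex2} already gives the ``iff'' directly, but the extra detail and the closing explanation of why the non-cut-vertex hypothesis is required are accurate and welcome.
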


More explicitly, a polynomial-time algorithm $\A$, which 
returns an $\alpha$-optimal linear reassembling $(G,\LL)$ of
a graph $G$ where every vertex has degree $\leqslant 3$,
can be used to return an $\alpha$-optimal linear arrangement $(G,\varphi)$.

\begin{proof}
Consider an arbitrary $G$ where every vertex has degree $\leqslant 3$.
If every vertex in $G$ of degree $= 3$ is a cut vertex, we use the
algorithm in Lemma~\ref{lem:thilikos2} to compute an $\alpha$-optimal
linear arrangement $(G,\varphi)$ in time $\bigOO{n^{12}}$. If
there is a vertex in $G$ of degree $= 3$ which is not a cut vertex,
we first compute an $\alpha$-optimal linear reassembling $(G,\LL)$
and then return the linear arrangement $(G,\varphi)$ induced by
$(G,\LL)$. The desired conclusion follows from 
Lemma~\ref{lem:one-vertex-not-a-cut-vertex2}.
\end{proof}

\begin{corollary}
\label{cor:NP-hardness-of-alpha-optimality-of-linear-reassemblings}
For the class of all simple undirected graphs $G$,
the computation of $\alpha$-optimal linear reassemblings $(G,\LL)$
is an NP-hard problem.
\end{corollary}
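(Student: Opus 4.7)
The plan is to combine Theorem~\ref{thm:reducing-linear-arrangements-to-linear-reassemblings} with the known NP-hardness of $\mincut$ (minimum-cutwidth linear arrangement) restricted to graphs of maximum degree $\leqslant 3$. Theorem~\ref{thm:reducing-linear-arrangements-to-linear-reassemblings} already gives, for this restricted class, a polynomial-time reduction from $\alpha$-optimality of linear arrangements to $\alpha$-optimality of linear reassemblings; so an NP-hardness result about the former, for the same restricted class, transports to an NP-hardness result about the latter.

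First, I would invoke the classical theorem (due to Gavril, and later refined by Makedon--Papadimitriou--Sudborough, available in the survey~\cite{petit:2011}) that computing $\alpha(G,\varphi)$-optimal linear arrangements is NP-hard even when $G$ is restricted to simple undirected graphs in which every vertex has degree $\leqslant 3$. Call this restricted problem $\mincut_{\leqslant 3}$. I would then argue, via Theorem~\ref{thm:reducing-linear-arrangements-to-linear-reassemblings}, that a hypothetical polynomial-time algorithm computing an $\alpha$-optimal linear reassembling $(G,\LL)$ for any input simple graph $G$ would, in particular, work for $G$ of maximum degree $\leqslant 3$; composing it with the polynomial-time transformation from $(G,\LL)$ to an induced $(G,\varphi)$ (guaranteed by the theorem to preserve $\alpha$-optimality in this regime) would then solve $\mincut_{\leqslant 3}$ in polynomial time, contradicting its NP-hardness.

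The only subtle point I would flag in writing this up is that Theorem~\ref{thm:reducing-linear-arrangements-to-linear-reassemblings} actually dispatches the two sub-cases separately: when every degree-3 vertex is a cut vertex, the algorithm of Lemma~\ref{lem:thilikos2} settles $\mincut$ in $\bigOO{n^{12}}$ time without any appeal to linear reassembling, so the reduction really only produces hard instances in the remaining case (some degree-3 vertex is not a cut vertex), where Lemma~\ref{lem:one-vertex-not-a-cut-vertex2} guarantees $\alpha(G,\LL)=\alpha(G,\varphi)$. I would therefore verify that the standard NP-hardness reduction for $\mincut$ on max-degree-3 graphs actually yields (or can be perturbed to yield) instances of the latter type; alternatively, one can observe that the polynomial-time decidable sub-case where every degree-3 vertex is a cut vertex can be peeled off as a polynomial-time preprocessing step without affecting hardness.

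The main obstacle is therefore not conceptual but bibliographic: pinning down a reference whose NP-hardness reduction stays within the ``some degree-3 vertex is not a cut vertex'' regime (trivial to arrange, e.g., by attaching a small non-tree gadget to any reduction output). Once that is in place, the corollary follows by contraposition: a polynomial-time algorithm for $\alpha$-optimal linear reassembling on arbitrary simple graphs would, through the above composition, give a polynomial-time algorithm for $\mincut_{\leqslant 3}$, which is NP-hard; hence no such algorithm exists unless $\mathbf{P}=\mathbf{NP}$, establishing the NP-hardness of computing $\alpha$-optimal linear reassemblings on the full class of simple undirected graphs. \QED
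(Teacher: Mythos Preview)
Your proposal is correct and follows the same approach as the paper's proof: assume a polynomial-time algorithm for $\alpha$-optimal linear reassemblings on arbitrary graphs, restrict to inputs of maximum degree $\leqslant 3$, invoke Theorem~\ref{thm:reducing-linear-arrangements-to-linear-reassemblings}, and contradict the NP-hardness of $\mincut$ on that class.

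Your ``subtle point'' paragraph, however, manufactures an obstacle that is not there. Theorem~\ref{thm:reducing-linear-arrangements-to-linear-reassemblings} is stated and proved for the \emph{entire} class of graphs of maximum degree $\leqslant 3$; the case split between ``every degree-$3$ vertex is a cut vertex'' (handled by Lemma~\ref{lem:thilikos2}) and ``some degree-$3$ vertex is not a cut vertex'' (handled via Lemma~\ref{lem:one-vertex-not-a-cut-vertex2}) is internal to its proof, not a restriction on its applicability. So you do not need to check which sub-case the known NP-hardness reductions land in, nor do you need to attach gadgets: the reduction of Theorem~\ref{thm:reducing-linear-arrangements-to-linear-reassemblings} is already total on the class, and your ``alternatively'' clause is precisely what that theorem already does. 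The paper's own proof accordingly cites Theorem~\ref{thm:reducing-linear-arrangements-to-linear-reassemblings} directly and the NP-hardness of $\mincut$ on max-degree-$3$ graphs~\cite{makedon1985,monien1988}, with no further discussion of sub-cases.
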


\begin{proof}
If there is a polynomial-time algorithm $\A$ to compute, for an 
arbitrary simple undirected graph, an $\alpha$-optimal
linear reassembling, then the same algorithm $\A$
can be used to compute in polynomial-time an $\alpha$-optimal
linear reassembling $(G,\LL)$ for a graph $G$  where every vertex
has degree $\leqslant 3$. By 
Theorem~\ref{thm:reducing-linear-arrangements-to-linear-reassemblings},
$\A$ can be further adapted to compute an $\alpha$-optimal
linear arrangement $(G,\varphi)$ for such a graph $G$ in polynomial time.
But the latter problem (in the literature: 
the \emph{minimum-cutwidth linear arrangement} problem) is known 
to be NP-hard~\cite{makedon1985,monien1988}.
\end{proof}

\begin{remark}
\label{rem:open-problem1}
To the best of our knowledge, the complexity status of
the \emph{minimum-cutwidth linear
arrangement} problem for $k$-regular graphs for a fixed $k\geqslant 3$
is an open problem. If it were known to be NP-hard, we would be able
to simplify our proof of
Theorem~\ref{thm:reducing-linear-arrangements-to-linear-reassemblings}
and its corollary considerably. In particular, we would be able to
eliminate Lemmas~\ref{lem:thilikos1} and~\ref{lem:thilikos2} and the
supporting Definition~\ref{def:chordal}, as well as simplify
Lemmas~\ref{lem:one-vertex-not-a-cut-vertex1}
and~\ref{lem:one-vertex-not-a-cut-vertex2} by restricting them to
$k$-regular graphs.
\end{remark}

Theorem~\ref{thm:reducing-linear-arrangements-to-linear-reassemblings} and
Corollary~\ref{cor:NP-hardness-of-alpha-optimality-of-linear-reassemblings}
together say the $\alpha$-optimality of linear arrangements $(G,\varphi)$
is polynomial-time reducible to the $\alpha$-optimality of linear
reassemblings $(G,\LL)$.
For completeness, we show the converse in the next theorem.

\begin{theorem}
\label{thm:reducing-linear-reassemblings-to-linear-arrangements}
For the class of simple undirected graphs $G$ in general,
the $\alpha$-optimality of linear reassemblings $(G,\LL)$ is 
polynomial-time reducible to the $\alpha$-optimality of linear
arrangements $(G,\varphi)$.
\end{theorem}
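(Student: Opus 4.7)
The plan is to give a direct reduction: given any polynomial-time algorithm $\A$ that returns an $\alpha$-optimal linear arrangement, I compute $(G,\varphi^{*}) := \A(G)$ and then output the linear reassembling $(G,\LL^{*})$ \emph{induced} by $(G,\varphi^{*})$ in the sense of Definition~\ref{def:linear-reassembling-induced}. Since the induced reassembling can clearly be written down in time $\bigO{n}$ from $\varphi^{*}$, the whole reduction runs in time polynomial in $n$ plus the running time of $\A$. The only thing that needs proof is that $(G,\LL^{*})$ is $\alpha$-optimal among all linear reassemblings of $G$.

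The key technical step is a formula relating the two $\alpha$-measures through the induced construction. For any linear reassembling $(G,\LL')$ with non-singleton clusters $X_1\subsetneq\cdots\subsetneq X_{n-1}=V$ and any linear arrangement $(G,\varphi')$ \emph{induced} by $(G,\LL')$, the clusters $X_i$ coincide with the left sets $L(G,\varphi',i+1)$ for $1\leqslant i\leqslant n-1$, while $L(G,\varphi',1)=\{v_1'\}$ with $v_1'=(\varphi')^{-1}(1)$. Reading off the definitions of the two measures one obtains
\begin{align*}
  \alpha(G,\varphi')\ &=\ \max\SET{\,\degr{}{v_1'},\,\degr{}{X_1},\ldots,\degr{}{X_{n-2}}\,},
\\
  \alpha(G,\LL')\ &=\ \max\SET{\,\max_{v\in V}\degr{}{v},\,\degr{}{X_1},\ldots,\degr{}{X_{n-2}}\,},
\end{align*}
(the term $\degr{}{X_{n-1}}=\degr{}{V}=0$ is harmless and may be dropped). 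Since $\max_{v\in V}\degr{}{v}\geqslant \degr{}{v_1'}$, these combine into the single identity $\alpha(G,\LL')=\max\bigl(\max_{v\in V}\degr{}{v},\,\alpha(G,\varphi')\bigr)$. The same identity applies to $(G,\LL^{*})$ and its generator $(G,\varphi^{*})$.

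With this identity in hand, $\alpha$-optimality is immediate: for an arbitrary linear reassembling $(G,\LL')$ of $G$, let $(G,\varphi')$ be any arrangement induced by $(G,\LL')$; by the optimality of $\varphi^{*}$ we have $\alpha(G,\varphi^{*})\leqslant\alpha(G,\varphi')$, hence
\[
   \alpha(G,\LL^{*})\ =\ \max\bigl(\textstyle\max_{v}\degr{}{v},\,\alpha(G,\varphi^{*})\bigr)
   \ \leqslant\ \max\bigl(\textstyle\max_{v}\degr{}{v},\,\alpha(G,\varphi')\bigr)
   \ =\ \alpha(G,\LL') .
\]
I expect the main obstacle to be merely notational: carefully lining up $X_i=L(G,\varphi',i+1)$, distinguishing the singleton position $i=1$ (which in $\varphi'$ contributes only $\degr{}{v_1'}$, whereas in $\LL'$ every singleton contributes its own degree), and verifying that the Definition~\ref{def:linear-arrangement-induced} tie-breaking rule (smaller-degree vertex placed first) does not alter the argument since $\max_v\degr{}{v}$ dominates $\degr{}{v_1'}$ anyway. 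Once these bookkeeping items are settled the identity and hence the reduction follow by pure substitution, with no need for any structural assumption on $G$ (in particular, unlike Theorem~\ref{thm:reducing-linear-arrangements-to-linear-reassemblings}, no bound on the maximum degree is required).
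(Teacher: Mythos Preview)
Your proposal is correct and takes essentially the same approach as the paper: both compute an $\alpha$-optimal arrangement $\varphi^{*}$, output the induced reassembling $\LL^{*}$, and justify optimality via the relationship $\alpha(G,\LL')=\max\bigl(\max_{v}\degr{}{v},\,\alpha(G,\varphi')\bigr)$ between a linear reassembling and any arrangement it induces (or that induces it). The only cosmetic difference is that the paper argues by contradiction---assuming some $\LL'$ beats $\LL^{*}$ and deriving that the arrangement $\varphi'$ induced by $\LL'$ would then beat $\varphi^{*}$---whereas you isolate the identity explicitly and argue directly; the two arguments are interchangeable.
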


\begin{proof}
In Appendix~\ref{sect:remaining-proofs-for-linear}.
\end{proof}

\section{$\beta$-Optimization of Linear Reassembling Is NP-Hard}
\label{sect:beta-optimization}

We prove that the \emph{$\beta$-optimality of linear arrangements} 
(in the literature: the \emph{minimum-cost linear arrangement} problem
or also the \emph{optimal linear arrangement} problem)
and the \emph{$\beta$-optimality of linear reassemblings} are reducible to
each other in polynomial time. Towards this end, we prove an 
intermediate result, which is also of independent interest 
(Theorem~\ref{thm:equivalence-of-anchored-beta-quivalence} which
presupposes Definition~\ref{def:anchored}).

\begin{definition}{Anchored linear reassemblings}
\label{def:restricted}
\label{def:anchored}
Let $G = (V,E)$ be a simple undirected graph and let $w\in V$.
Let $(G,\LL)$ be a linear reassembling of $G$, whose longest
chain of nested clusters of size $\geqslant 2$, as in~\eqref{eq:linear1}
in the opening paragraph of Section~\ref{sect:linear-reassembling}, is: 
\[
   X_1 \ \subsetneq X_2 
                  \ \subsetneq\ \cdots\ \subsetneq\ X_{n-1} = V 
\]
and whose corresponding singleton clusters are $\Set{Y_0,\ldots,Y_{n-1}}$,
as determined by~\eqref{eq:linear2} in the opening paragraph of 
Section~\ref{sect:linear-reassembling}. We say $(G,\LL)$ is a 
\emph{linear reassembling anchored at $w\in V$} iff
there is a vertex $w'\in V$ such that:
\[
    Y_0 = \Set{w},\quad Y_1 = \Set{w'},\quad\text{and}\  
    \degr{G}{w}\leqslant\degr{G}{w'}.
\]
Note that we require that the immediate sibling $Y_1 = \Set{w'}$ of
the leaf node $Y_0 = \Set{w}$ satisfy the condition
$\degr{G}{w}\leqslant\degr{G}{w'}$. This implies that, given an
arbitrary vertex $w\in V$, we cannot anchor a linear reassembling at
$w$ unless we find another vertex $w'\in V$ such that
$\degr{G}{w}\leqslant\degr{G}{w'}$ and then make $\Set{w}$ and
$\Set{w'}$ sibling leaf-nodes. This is a technical restriction to
simplify the statement of
Lemma~\ref{lem:equivalence-when-starting-vertex-is-the-same}.%
   \footnote{Thus, we cannot say that the linear reassembling
    $(S_7,{\B}_3)$, in Examples~\ref{ex:star}
    and~\ref{ex:from-arrangement-to-reassembling}, is anchored at
    vertex ``$1$'', though we can say that $(S_7,{\B}_3)$ is anchored
    at vertex ``$2$''.  More generally, in the case of a star graph
    $S_k$ with $k\geqslant 3$ leaves: There is no linear reassembling
    of $S_k$ anchored at the internal vertex of $S_k$.}
\end{definition}

\begin{definition}{Anchored linear arrangements}
\label{def:anchored-arrangement}
Let $G = (V,E)$ be a simple undirected graph and let $w\in V$.
Let $(G,\varphi)$ be a linear arrangement of $G$. We say $(G,\varphi)$ is a 
\emph{linear arrangement anchored at $w\in V$} iff there is 
a vertex $w'\in V$ such that: 
\[
    \varphi(w) = 1,\quad \varphi(w') = 2,\quad\text{and}\  
    \degr{G}{w}\leqslant\degr{G}{w'}.
\]
Again, as in Definition~\ref{def:anchored}, the condition
$\degr{G}{w}\leqslant\degr{G}{w'}$ is imposed in order to simplify the
statement of Lemma~\ref{lem:equivalence-when-starting-vertex-is-the-same}.  
It is worth noting that, if we relax this condition and allow
$\degr{G}{w} > \degr{G}{w'}$, then the new arrangement $\varphi'$
which permutes the positions of $w$ and $w'$, \ie:
\[
   \varphi'(w') = 1, \quad \varphi'(w) = 2, \quad\text{and}
   \ \varphi'(v) = \varphi(v)\ \text{for all $v\in V-\Set{w,w'}$},
\]
is such that 
$\beta(G,\varphi') < \beta(G,\varphi)$. In words, if we allowed
$\degr{G}{w} > \degr{G}{w'}$, the linear arrangement $(G,\varphi)$ would
not be $\beta$-optimal.%
   \footnote{A similar statement applies to the 
   $\alpha$-measure: If we allowed $\degr{G}{w} > \degr{G}{w'}$, 
   then the new arrangement $\varphi'$ would be such that
   $\alpha(G,\varphi') \leqslant \alpha(G,\varphi)$, but
   not necessarily $\alpha(G,\varphi') < \alpha(G,\varphi)$.
   }
\end{definition}

\begin{example}
\label{ex:anchored}
Consider the linear reassemblings $(S_7,{\B}_3)$ and $(S_7,{\B}_5)$ in
Example~\ref{ex:star}.  $(S_7,{\B}_3)$ is anchored at vertex ``$2$'',
but cannot be anchored at vertex ``$1$'', while $(S_7,{\B}_5)$ is
anchored at vertex ``$2$'', and can be anchored again at vertex
``$3$''.
Both $(S_7,{\B}_3)$ and $(S_7,{\B}_5)$ are $\alpha$-optimal and,
a fortiori, $\alpha$-optimal for the class of all linear reassemblings
of $S_7$ anchored at vertex ``$2$''.
Moreover, $\beta(S_7,{\B}_3) = 35$ and $\beta(S_7,{\B}_5) = 29$, so
that $(S_7,{\B}_3)$ is not $\beta$-optimal for the class of all linear
reassemblings of $S_7$ anchored at vertex ``$2$'', while
$(S_7,{\B}_5)$ is $\beta$-optimal for the same class.

Consider now the linear arrangements $(S_7,\varphi_3)$ and 
$(S_7,\varphi_5)$ induced by the linear
reassemblings $(S_7,{\B}_3)$ and $(S_7,{\B}_5)$, respectively.
$\varphi_3$ and $\varphi_5$ are given in  
Example~\ref{ex:from-arrangement-to-reassembling-1}
and Example~\ref{ex:from-arrangement-to-reassembling-2}.
Both $(S_7,\varphi_3)$ and $(S_7,\varphi_5)$ are anchored at vertex
``$2$''. Moreover, $(S_7,\varphi_3)$ cannot be anchored at vertex
``$1$'' (the sibling leaf of ``$2$'' in $\varphi_3$), while $(S_7,\varphi_5)$
can be anchored again at vertex ``$3$'' (the sibling leaf of ``$2$'' in 
$\varphi_5$).

$(S_7,\varphi_3)$ is neither $\alpha$-optimal nor $\beta$-optimal
for the class of all linear arrangements of $S_7$ anchored at
``$2$''; hence, $(S_7,\varphi_3)$ is neither $\alpha$-optimal nor 
$\beta$-optimal for the super-class of all linear arrangements of $S_7$.
By contrast, $(S_7,\varphi_5)$ is both $\alpha$-optimal and $\beta$-optimal
for the class of all linear arrangements of $S_7$;
hence, $(S_7,\varphi_5)$ is both $\alpha$-optimal and 
$\beta$-optimal for the sub-class of all linear arrangements of $S_7$
anchored at ``$2$''.
\end{example}

Let $(G,{\LL})$ be a linear reassembling anchored at vertex $w\in V$.
We say $(G,{\LL})$ is \emph{$\beta$-optimal relative to anchor $w$} iff: 
\[
   \beta(G,{\LL})\ =
   \ \min\,\SET{\,\beta(G,{\LL'})\;\bigl|
   \;(G,{\LL'})\text{ is a linear reassembling anchored at $w$}\,} .
\]
Clearly, $(G,{\LL})$ is a \emph{$\beta$-optimal 
linear reassembling}, with no anchor restriction, iff:
\begin{alignat*}{5}
   & \beta(G,{\LL})\ &&=
   \ &&\min\,\bigl\{\,\beta(G,{\LL'})\;\bigl|
   \;&&\text{there is a vertex $w\in V$ and}
   \\
   & && && &&(G,{\LL'})\text{ is a linear reassembling $\beta$-optimal relative 
       to anchor $w$}\,\bigr\}.
\end{alignat*}
Similar obvious conditions apply to what it means
for $(G,{\varphi})$ to be a \emph{$\beta$-optimal 
linear arrangement relative to anchor $w$}.

\begin{lemma}
\label{lem:equivalence-when-starting-vertex-is-the-same}
Let $G = (V,E)$ be a simple undirected graph and $w\in V$.  Let
$(G,\LL)$ be a linear reassembling of $G$ anchored at $w$, and
$(G,\varphi)$ be a linear arrangement of $G$ anchored at $w$, such
that:
\begin{itemize}[itemsep=0pt,parsep=5pt,topsep=5pt,partopsep=0pt] 
\item[] $(G,\varphi)$ is induced by $(G,\LL)$ 
        \quad or\quad $(G,\LL)$ is induced by
        $(G,\varphi)$.
\end{itemize}
%
\textbf{Conclusion:} $(G,\LL)$ is
$\beta$-optimal relative to anchor $w$ iff $(G,\varphi)$ is
$\beta$-optimal relative to anchor $w$.
\end{lemma}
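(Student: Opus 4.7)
My plan is to show that for a fixed anchor $w$, the measures $\beta(G,\LL)$ and $\beta(G,\varphi)$ differ by a constant that depends only on $G$ and $w$, so minimizing one is equivalent to minimizing the other. The inducing relationship already makes the correspondence between $\LL$ and $\varphi$ a bijection once the anchor is fixed (anchoring resolves the only ambiguity in the inducing maps, namely the ordering of the two first singletons), so it is enough to prove this one equality of measures.

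First I will write the chain of non-singleton clusters of $\LL$ as
\[
    X_1 \subsetneq X_2 \subsetneq \cdots \subsetneq X_{n-1} = V,
\]
with singleton leaves $Y_0,\ldots,Y_{n-1}$ as in \eqref{eq:linear1} and \eqref{eq:linear2}. By Definitions~\ref{def:linear-arrangement-induced} and~\ref{def:linear-reassembling-induced}, together with the anchoring condition $Y_0 = \{w\}$ and $\varphi(w) = 1$, the correspondence forces $L(G,\varphi,i+1) = X_i$ for $1 \leqslant i \leqslant n-1$ and $L(G,\varphi,1) = \{w\}$. Expanding the two $\beta$-measures from Definitions~\ref{def:different-measures} and~\ref{def:linear-arrangement} yields
\[
   \beta(G,\LL) \;=\; \sum_{v\in V}\degr{G}{v} \,+\, \sum_{i=1}^{n-1}\degr{G}{X_i}
   \;=\; 2m \,+\, \sum_{i=1}^{n-1}\degr{G}{X_i},
\]
\[
   \beta(G,\varphi) \;=\; \sum_{i=1}^{n}\Degr{}{L(G,\varphi,i)}
   \;=\; \degr{G}{w} \,+\, \sum_{i=1}^{n-1}\degr{G}{X_i},
\]
where I use that $\degr{G}{V} = 0$ kills the top term and that the $n$ singleton degrees sum to $2m$.

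Subtracting gives $\beta(G,\LL) - \beta(G,\varphi) = 2m - \degr{G}{w}$, which depends only on $G$ and $w$, not on $\LL$ or $\varphi$. Since the anchoring correspondence is a bijection between anchored reassemblings and anchored arrangements at $w$, minimizing $\beta(G,\LL)$ over anchored reassemblings is equivalent to minimizing $\beta(G,\varphi)$ over anchored arrangements. Hence $(G,\LL)$ is $\beta$-optimal relative to anchor $w$ iff $(G,\varphi)$ is.

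The main obstacle is bookkeeping: one must be careful with the off-by-one in the correspondence $X_i \leftrightarrow L(G,\varphi,i+1)$ and with the contribution of the singleton clusters to $\beta(G,\LL)$, which has no direct counterpart on the arrangement side and is precisely what produces the constant offset $2m - \degr{G}{w}$. The anchoring condition $\degr{G}{w}\leqslant\degr{G}{w'}$ is what ensures that the inducing maps of Definitions~\ref{def:linear-arrangement-induced} and~\ref{def:linear-reassembling-induced} are mutually inverse on the anchored classes, so the bijection step is clean.
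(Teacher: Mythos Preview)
Your proof is correct and follows essentially the same approach as the paper: both compute $\beta(G,\LL)=2m+\sum_{i}\degr{G}{X_i}$ and $\beta(G,\varphi)=\degr{G}{w}+\sum_{i}\degr{G}{X_i}$, observe that the difference $2m-\degr{G}{w}$ is a constant depending only on $G$ and $w$, and conclude that the two minimization problems coincide. Your explicit mention of the bijection between anchored reassemblings and anchored arrangements (and the role of the anchoring condition in removing the $Y_0/Y_1$ ambiguity) makes the logical structure slightly more explicit than the paper's version, but the substance is the same.
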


\begin{proof}
Let $d = \degr{}{w} \geqslant 1$ and 
$\Delta = \sum\Set{\degr{}{v}\,|\,v\in V\text{ and } v\neq w}$.
Consider the case when arrangement $(G,\varphi)$ is induced 
by reassembling $(G,\LL)$. (We omit the case when reassembling
$(G,\LL)$ is induced by arrangement $(G,\varphi)$, which is
treated similarly.) From Definition~\ref{def:different-measures}, 
\[
  \beta(G,\LL) = d + \Delta + 
  \sum\,\Set{\,\degree{X_i}\;|\;1\leqslant i\leqslant n-1\,}
\]  
where $X_1,\ldots,X_{n-1}$ are all the clusters of size $\geqslant 2$
in $\LL$. From Definitions~\ref{def:linear-arrangement}
and~\ref{def:linear-arrangement-induced}, 
\[
  \beta(G,\varphi) = d + 
  \sum\,\Set{\,\degree{X_i}\;|\;1\leqslant i\leqslant n-1\,} .
\]  
Hence, both $\beta(G,\LL)$ and $\beta(G,\varphi)$ are minimized when
the same quantity
$\sum\Set{\degree{X_i}\,|\,1\leqslant i\leqslant n-1}$ is minimized.
The desired conclusion follows.
\end{proof}

\begin{remark}
\label{rem:anchored}
There is an obvious definition of \emph{anchored $\alpha$-optimality},
similar to that of \emph{anchored $\beta$-optimality} above.
However, results for the latter do not necessarily 
have counterparts for the former.  In particular, the conclusion of
Lemma~\ref{lem:equivalence-when-starting-vertex-is-the-same} does not
hold for $\alpha$-optimality. Specifically, there are simple
counter-examples showing the existence of a simple undirected graph
$G(V,E)$ with a distinguished vertex $w\in V$ such that:
\begin{itemize}[itemsep=0pt,parsep=2pt,topsep=5pt,partopsep=0pt] 
\item there is a linear reassembling $(G,\LL)$ which is $\alpha$-optimal
      relative to anchor $w$,
\item but the linear arrangement $(G,\varphi)$ induced by  $(G,\LL)$
      is not $\alpha$-optimal relative to anchor $w$.
\end{itemize}
Such a counter-example is the linear reassembling $(S_7,{\B}_3)$
and the linear arrangement $(S_7,\varphi_3)$ it induces, in 
Example~\ref{ex:anchored}, both anchored at vertex ``$2$'':
the former is $\alpha$-optimal for the class of all linear reassemblings
of $S_7$ anchored at ``$2$'', the latter is not
$\alpha$-optimal for the class of all linear arrangements
of $S_7$ anchored at ``$2$''.

There is an examination, yet to be undertaken, of the relation between
linear reassemblings $(G,\LL)$ and linear arrangements $(G,\varphi)$
that are $\alpha$-optimal relative to the same anchor, similar to our
study of anchored $\beta$-optimality below. This examination we
do not pursue in this report.
\end{remark}

\begin{theorem}
\label{thm:equivalence-of-anchored-beta-quivalence}
For the class of all simple undirected graphs $G = (V,E)$,
each with a distinguished vertex $w\in V$, the two following
problems are polynomial-time reducible to each other:
\begin{itemize}[itemsep=0pt,parsep=2pt,topsep=5pt,partopsep=0pt] 
\item the $\beta$-optimality of linear arrangements $(G,\varphi)$
         anchored at $w$,
\item the $\beta$-optimality of linear reassemblings $(G,\LL)$
         anchored at $w$.
\end{itemize}
\end{theorem}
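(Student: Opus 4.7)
The plan is to let Lemma~\ref{lem:equivalence-when-starting-vertex-is-the-same} do the heavy lifting in both directions; once we have it, each reduction collapses to an $O(n)$-time translation between the two data structures using Definitions~\ref{def:linear-arrangement-induced} and~\ref{def:linear-reassembling-induced}. The only thing to verify beyond the lemma is that the ``anchored at $w$'' property is preserved under these translations.

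For the direction ``oracle for anchored $\beta$-optimal reassembling $\Rightarrow$ algorithm for anchored $\beta$-optimal arrangement,'' I would start with a polynomial-time procedure that, given $G=(V,E)$ and a distinguished $w\in V$, returns a linear reassembling $(G,\LL)$ that is $\beta$-optimal relative to anchor $w$. From $(G,\LL)$ I extract the induced linear arrangement $(G,\varphi)$ via Definition~\ref{def:linear-arrangement-induced}; this takes $O(n)$ time. Because $(G,\LL)$ is anchored at $w$, we have $Y_0=\Set{w}$, $Y_1=\Set{w'}$ with $\degr{G}{w}\leqslant\degr{G}{w'}$, which is precisely the case in which Definition~\ref{def:linear-arrangement-induced} sets $\varphi(w)=1$ and $\varphi(w')=2$; hence $(G,\varphi)$ is itself anchored at $w$. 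Lemma~\ref{lem:equivalence-when-starting-vertex-is-the-same} then forces $(G,\varphi)$ to be $\beta$-optimal relative to anchor $w$.

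For the converse reduction, I would run an oracle that returns a $\beta$-optimal linear arrangement $(G,\varphi)$ anchored at $w$, and then build the induced linear reassembling $(G,\LL)$ by Definition~\ref{def:linear-reassembling-induced}. Here the first nontrivial cluster is $X_1=\Set{\varphi^{-1}(1),\varphi^{-1}(2)}=\Set{w,w'}$, and since $\degr{G}{w}\leqslant\degr{G}{w'}$, I may consistently designate the sibling leaves as $Y_0=\Set{w}$ and $Y_1=\Set{w'}$, so that $(G,\LL)$ is anchored at $w$. Applying Lemma~\ref{lem:equivalence-when-starting-vertex-is-the-same} one more time yields $\beta$-optimality of $(G,\LL)$ relative to anchor $w$. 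Both constructions are $O(n)$ on top of the single oracle call, so the reductions are polynomial.

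The only subtle point—the would-be obstacle—is confirming that ``anchored at $w$'' truly survives the back-and-forth translation, up to the harmless swap of $Y_0$ and $Y_1$ flagged in Definition~\ref{def:linear-arrangement-induced}. The anchoring inequality $\degr{G}{w}\leqslant\degr{G}{w'}$ is exactly what disambiguates which of $w,w'$ occupies position $1$, so on the anchored class the induction operations are mutual inverses modulo this permitted ambiguity, and no extra combinatorial reasoning is needed beyond Lemma~\ref{lem:equivalence-when-starting-vertex-is-the-same}. Note that Remark~\ref{rem:anchored} warns us not to attempt the analogous argument for $\alpha$-optimality, but for the $\beta$-measure the lemma is tight enough to make both reductions immediate.
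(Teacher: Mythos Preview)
Your proposal is correct and follows essentially the same approach as the paper: the paper's proof is literally the single sentence ``This is an immediate consequence of Lemma~\ref{lem:equivalence-when-starting-vertex-is-the-same},'' and you have simply unpacked what that immediacy means, spelling out the $O(n)$ translations via Definitions~\ref{def:linear-arrangement-induced} and~\ref{def:linear-reassembling-induced} and checking that anchoring at $w$ is preserved in each direction.
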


\noindent
More explicitly, a polynomial-time algorithm $\A$, which returns a
linear reassembling $(G,\LL)$ [resp. a linear arrangement
$(G,\varphi)$] which is $\beta$-optimal relative to anchor $w$ can be
used to return in polynomial time a linear arrangement $(G,\varphi)$
[resp. a linear reassembling $(G,\LL)$] which is $\beta$-optimal
relative to anchor $w$.

\begin{proof} 
This is an immediate consequence of 
Lemma~\ref{lem:equivalence-when-starting-vertex-is-the-same}.
\end{proof}

\begin{definition}{Auxiliary graphs}
\label{def:auxiliary-graphs}
Let $G = (V,E)$ be a simple undirected graph, with $\size{V} = n$
and $\size{E} = m$. For every $w\in V$ we define what we call an 
\emph{auxiliary graph} $G_w = (V_w,E_w)$ as follows:
\begin{itemize}[itemsep=0pt,parsep=5pt,topsep=5pt,partopsep=0pt] 
\item $V_w := V\uplus U$  where $U$
      is a fresh set of $p = \sum\,\Set{\,\degr{G}{v}\;|\;v\in V\,}$ vertices.
\item $E_w := E\uplus D_w$ where
      \(
      D_w := \Set{\,\set{u\,w}\;|\;u\in U}\;\cup\;
             \Set{\,\set{u_1\,u_2}\;|\;u_1,u_2\in U\text{ and } u_1\neq u_2\,}.
      \)
\end{itemize}
Thus, the subgraph of $G_w$ induced by the set $V$ is simply the
original graph $G$, and the subgraph of $G_w$ induced by the set 
$U\cup\Set{w}$ is the complete graph $K_{p+1}$ over $p+1$ vertices.

Informally, $G_w$ is constructed from $G$ and the complete graph
$K_{p+1}$ by identifying vertex $w\in V$ with one of the
vertices of $K_{p+1}$. In particular, $w$ is a cut vertex of the
auxiliary graph $G_w$. We call $w$, which is the common vertex
of $G$ and $K_{p+1}$, the \emph{distinguished vertex} of $G_w$.
\end{definition}

\begin{lemma}
\label{lem:size-of-auxiliary}
If  $G_w = (V_w,E_w)$ is the auxiliary graph for vertex $w\in V$,
as constructed in Definition~\ref{def:auxiliary-graphs},
then $\size{V_w}\leqslant n^2$ and
$\size{E_w}\leqslant (n^4-2n^3+3n^2-2n)/2$.
\end{lemma}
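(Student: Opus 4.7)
The plan is to carry out a direct computation, using only the two elementary facts that the sum of vertex degrees in a simple graph equals twice the number of edges, and that in a simple graph on $n$ vertices the number of edges is at most $\binom{n}{2}$.

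First I would bound the cardinality $p = \sum_{v \in V} \degr{G}{v}$ of the auxiliary set $U$. Since $G$ is simple, $p = 2m \leqslant 2 \binom{n}{2} = n(n-1) = n^2 - n$. This immediately yields
\[
   \ssize{V_w}\ =\ n + p\ \leqslant\ n + (n^2 - n)\ =\ n^2,
\]
which establishes the first half of the lemma.

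Next I would count $\ssize{E_w}$ by splitting it according to the definition $E_w = E \uplus D_w$, and further splitting $D_w$ into the $p$ edges joining $U$ to $w$ and the $\binom{p}{2}$ edges among distinct pairs of vertices in $U$. This gives
\[
   \ssize{E_w}\ =\ m\ +\ p\ +\ \binom{p}{2}\ =\ m\ +\ \frac{p(p+1)}{2}.
\]
Using $m \leqslant (n^2 - n)/2$ and $p \leqslant n^2 - n$, I would then substitute and simplify: the dominant term expands as $(n^2-n)(n^2-n+1) = n^4 - 2n^3 + 2n^2 - n$, so
\[
   \frac{p(p+1)}{2}\ \leqslant\ \frac{n^4 - 2n^3 + 2n^2 - n}{2},
\]
and adding the bound on $m$ yields
\[
   \ssize{E_w}\ \leqslant\ \frac{n^2 - n}{2}\ +\ \frac{n^4 - 2n^3 + 2n^2 - n}{2}\ =\ \frac{n^4 - 2n^3 + 3n^2 - 2n}{2},
\]
as desired.

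There is no real obstacle here; the entire argument is routine algebra once one observes the handshake lemma identity $p = 2m$ and the trivial bound $m \leqslant \binom{n}{2}$. The only thing worth being careful about is not double-counting the edges inside the complete subgraph $K_{p+1}$: the $p$ edges from $U$ to $w$ together with the $\binom{p}{2}$ edges internal to $U$ exactly account for $\binom{p+1}{2}$ edges of $K_{p+1}$, and none of these belong to $E$ (since no vertex of $U$ lies in $V$), so the union $E \uplus D_w$ is genuinely disjoint and the count is exact.
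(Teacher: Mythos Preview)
Your argument is correct and matches the paper's proof essentially line for line: bound $p$ by $n^2-n$ via the handshake lemma and $m\leqslant\binom{n}{2}$, then count $\ssize{E_w}=m+\binom{p+1}{2}$ and substitute. The only cosmetic difference is that the paper writes $\binom{p+1}{2}=(p^2+p)/2$ directly rather than splitting it as $p+\binom{p}{2}$.
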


\begin{proof}
The number $m$ of edges in $G$ is bounded by $(n^2-n)/2$. Hence,
$p = \sum\Set{\degr{}{v}\,|\,v\in V} \leqslant (n^2-n)$, implying
that the total number of vertices $p+n$ in $G_w$ is 
$\leqslant (n^2-n) + n = n^2$.
The number of edges in $K_p$ is $(p^2-p)/2$, and in $K_{p+1}$ it is 
$(p^2+p)/2$, which is $\leqslant \bigl((n^2-n)^2+(n^2-n)\bigr)/2
= (n^4 -2n^3+ 2n^2 -n)/2$.
Hence, the total number of edges in $G_w$ is
$m+ (p^2+p)/2 \leqslant (n^4-2n^3+3n^2-2n)/2$.
\end{proof}

Let $\LL$ be a linear binary tree over $V$ where, as in~\eqref{eq:linear1}
in the opening paragraph of Section~\ref{sect:linear-reassembling}, the longest
chain of nested clusters of size $\geqslant 2$ is: 
\[
   X_1 \ \subsetneq X_2 
                  \ \subsetneq\ \cdots\ \subsetneq\ X_{n-1} = V ,
\]
and let the corresponding singleton clusters be $\Set{Y_0,\ldots,Y_{n-1}}$
as determined by~\eqref{eq:linear2}. The linear tree
$\LL$ is uniquely determined by a sequence of vertices written
in the form:
\[
    [v_1\quad\cdots\quad v_n] 
\]
where $Y_0 = \Set{v_1}, Y_1 = \Set{v_2},\ldots, Y_{n-1} = \Set{v_n}$.
We say $[v_1\ \cdots\ v_n]$ is the
\emph{vertex sequence induced} by $\LL$, and $\LL$ the 
\emph{linear reassembling} (or the \emph{linear binary tree}) \emph{induced}
by  the vertex sequence $[v_1\ \cdots\ v_n]$.

Similarly, if $\varphi : V\to\Set{1,\ldots,n}$ is a linear
arrangement of $V$, then $\varphi$ is uniquely determined by
a sequence of vertices in the same form:
\[
    [v_1\quad\cdots\quad v_n] 
\]
where ${\varphi}^{-1}(1) = v_1, {\varphi}^{-1}(2) = v_2,\ldots, 
{\varphi}^{-1}(n) = v_n$. We say $[v_1\ \cdots\ v_n]$ is the
\emph{vertex sequence induced} by $\varphi$, and $\varphi$
the \emph{linear arrangement induced} by the vertex sequence
$[v_1\ \cdots\ v_n]$.

\medskip
For the auxiliary graph $G_w$, whether we deal with 
a linear reassembling $(G_w,\LL)$ or a linear arrangement 
$(G_w,\varphi)$, it is convenient to consider sequences of the
following form, which interleaves vertices and cutwidths:
\begin{equation*}
\label{eq:sequence-of-vertices-and-cutwidths}
\tag{$\diamondsuit$}
   \sss \ :=\ \bigl[ x_1\quad (r_1,s_1) \quad x_2\quad (r_2,s_2) 
   \quad\cdots\quad\cdots\quad
   x_{n+p-1}\quad (r_{n+p-1},s_{n+p-1}) \quad  x_{n+p} \bigr]
\end{equation*}
where $\Set{x_1,\ldots,x_{n+p}} = V_w = 
       \Set{v_1,\ldots,v_n}\cup\Set{u_1,\ldots,u_p}$,
and for every $1\leqslant i\leqslant n+p-1$:
\begin{alignat*}{5}
  & r_i\ :=\ \degr{G}{\Set{x_1,\ldots,x_i}} 
\quad \text{and} \quad 
   s_i\ :=\ \degr{K_{p+1}}{\Set{x_1,\ldots,x_i}} .
\end{alignat*}
We say the sequence $\sss$
in~\eqref{eq:sequence-of-vertices-and-cutwidths} is 
the \emph{sequence of vertices and cutwidths induced} by
$(G_w,\LL)$ or by $(G_w,\varphi)$, whichever of the two is the case.
The measure $\beta$ on $\sss$ is:
\[
   \beta(\sss)\ :=\ \sum_{1\leqslant i\leqslant n+p-1} (r_i + s_i).
\]

\begin{lemma}
\label{lem:relating-cutwidths}
Consider the sequence of vertices and cutwidths induced
by $(G_w,\LL)$ or by $(G_w,\varphi)$, as just defined.
\textbf{Conclusion:} 
\begin{itemize}[itemsep=0pt,parsep=5pt,topsep=5pt,partopsep=0pt] 
\item  
      For every $1\leqslant i\leqslant n+p-1$, it holds that 
      $r_i+s_i = \degr{G_w}{\Set{x_1,\ldots,x_i}}$.
\item 
      If the sequence in~\eqref{eq:sequence-of-vertices-and-cutwidths} is 
      induced by the linear arrangement $(G_w,\varphi)$, then
      \[
          \beta (G_w,\varphi)\ =\ \beta(\sss)
          \ =\ \sum_{1\leqslant i\leqslant n+p-1} (r_i + s_i) .
      \]
\item 
      If the sequence in~\eqref{eq:sequence-of-vertices-and-cutwidths} is 
      induced by the linear reassembling $(G_w,\LL)$, then
      \[
          \beta (G_w,\LL)\ =\ \Delta\ +\ \beta(\sss) 
          \ =\ \Delta\ +\ \sum_{1\leqslant i\leqslant n+p-1} (r_i + s_i) ,
      \]
      where $\Delta = \sum \Set{\degr{G_w}{v}\,|\,v\in V_w\text{ and } v\neq x_1}$.
\end{itemize}
\end{lemma}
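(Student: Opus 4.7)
The plan is to begin with the first conclusion, since the other two depend on it. Since $E_w = E \uplus D_w$ by construction, every edge of $G_w$ belongs either to $G$ or to the complete graph $K_{p+1}$ on the vertex set $U\cup\Set{w}$, and these two edge sets are disjoint. I would decompose the edge boundary of $\Set{x_1,\ldots,x_i}$ in $G_w$ accordingly: an edge of $G$ has both endpoints in $V$, so it crosses the cut $(\Set{x_1,\ldots,x_i},V_w\setminus\Set{x_1,\ldots,x_i})$ in $G_w$ exactly when it crosses the cut $(\Set{x_1,\ldots,x_i}\cap V,\,V\setminus\Set{x_1,\ldots,x_i})$ in $G$, which is what $r_i$ counts; a symmetric observation handles $K_{p+1}$ and $s_i$. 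Adding the two disjoint contributions yields $r_i + s_i = \degr{G_w}{\Set{x_1,\ldots,x_i}}$, the first bullet.

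For the second bullet, I would note that since the vertex sequence of $\sss$ is induced by $\varphi$, we have $L(G_w,\varphi,i) = \Set{x_1,\ldots,x_i}$ for every $1\leqslant i\leqslant n+p-1$, with $L(G_w,\varphi,n+p)=V_w$ contributing zero. Definition~\ref{def:linear-arrangement} then writes $\beta(G_w,\varphi)$ as $\sum_i \degr{G_w}{L(G_w,\varphi,i)}$, and invoking the first bullet rewrites this sum as $\sum_{i=1}^{n+p-1}(r_i+s_i)=\beta(\sss)$.

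For the third bullet, I would unfold Definition~\ref{def:different-measures}: $\beta(G_w,\LL)$ is the sum of $\degr{G_w}{X}$ over \emph{all} clusters $X\in\LL$, which splits into the $(n+p)$ singleton contributions $\degr{G_w}{x_i}$ and the $(n+p-1)$ non-singleton contributions $\degr{G_w}{X_j}$. Because the vertex sequence induced by $\LL$ is $[x_1\,\cdots\,x_{n+p}]$, we have $X_j=\Set{x_1,\ldots,x_{j+1}}$, so reindexing and applying the first bullet, the non-singleton contributions sum to $\sum_{k=2}^{n+p-1}\degr{G_w}{\Set{x_1,\ldots,x_k}}$ (the $k=n+p$ term is zero as $X_{n+p-1}=V_w$). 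Meanwhile, peeling off $j=1$ in the definition of $\beta(\sss)$ gives $\beta(\sss)=\degr{G_w}{x_1}+\sum_{k=2}^{n+p-1}\degr{G_w}{\Set{x_1,\ldots,x_k}}$. Subtracting produces $\beta(G_w,\LL)-\beta(\sss)=\sum_{i=2}^{n+p}\degr{G_w}{x_i}=\Delta$, which is the identity claimed.

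The whole argument is essentially bookkeeping; the only thing that requires care is tracking how the $(n+p-1)$ non-singleton clusters of $\LL$ correspond to the prefix sets $\Set{x_1,\ldots,x_k}$ for $k=2,\ldots,n+p$, and separating the role of the starting vertex $x_1$, whose singleton degree $\degr{G_w}{x_1}$ appears inside $\beta(\sss)$ via the $j=1$ prefix and is therefore precisely what is excluded from $\Delta$ in the third bullet.
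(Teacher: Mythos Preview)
Your proposal is correct and is precisely the kind of bookkeeping the paper has in mind: the paper's own proof reads in full ``Straightforward consequence of the definitions. All details omitted.'' You have simply supplied those details, and your handling of the index shift between the non-singleton clusters $X_j=\Set{x_1,\ldots,x_{j+1}}$ and the prefix sets in $\beta(\sss)$, together with the observation that the $i=1$ term of $\beta(\sss)$ absorbs the singleton $\degr{G_w}{x_1}$, is exactly the right way to see why $\Delta$ excludes $x_1$.
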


\begin{proof}
Straightforward consequence of the definitions. All details omitted.
\end{proof}

We say that the sequence $\sss$ 
is \emph{scattered} if the vertices of $K_{p+1}$ do not occur consecutively,
\ie, the vertices of $K_{p+1}$ are interspersed with vertices of $V-\Set{w}$.

\begin{lemma}
\label{lem:auxiliary-vertices-together-1}
Let $G_w = (V_w,E_w)$ be the auxiliary graph for vertex $w\in V$, as
constructed in Definition~\ref{def:auxiliary-graphs}.  
Let $\sss$ be the sequence of vertices and cutwidths, 
as in~\eqref{eq:sequence-of-vertices-and-cutwidths},
induced by a $\beta$-optimal linear reassembling $(G_w,\LL)$ or 
by a $\beta$-optimal linear arrangement $(G_w,\varphi)$.
\textbf{Conclusion:} $\sss$ is not scattered.
\end{lemma}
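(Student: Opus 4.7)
The plan is an exchange argument.  By Lemma~\ref{lem:relating-cutwidths}, $\beta$-optimality of $(G_w,\LL)$ or $(G_w,\varphi)$ is equivalent to minimality of $\beta(\sss)$, so it suffices to prove: if $\sss$ is scattered, there exists an $\sss'$ with $\beta(\sss')<\beta(\sss)$. Suppose $\sss$ is scattered, with $K_{p+1}$-positions $i_1<\cdots<i_{p+1}$ and $t := \sum_{j=1}^{p}(\ell_j-1)\geq 1$, where $\ell_j = i_{j+1}-i_j$.  Decompose $\beta(\sss) = T_K(\sss)+T_G(\sss)$, separating the total lengths of $K_{p+1}$-edges and of $G$-edges. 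Since $K_{p+1}$ is the complete graph on $\{w\}\cup U$, double-counting edge-lengths by the gaps they span yields the identity
\[
  T_K(\sss) \;=\; \sum_{j=1}^{p} \ell_j \cdot j(p+1-j),
\]
which attains its minimum value $\binom{p+2}{3}$ precisely when $K_{p+1}$ is contiguous.  Using $j(p+1-j)\geq p$ for every $j\in\{1,\ldots,p\}$, we obtain the scattering-excess bound $T_K(\sss) - \binom{p+2}{3} \geq p t$.

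For the exchange, construct $\sss'$ by placing the $K_{p+1}$-vertices contiguously in their $\sss$-relative order, positioning the block so that $w$'s position is preserved (shifting by the minimum amount needed for the block to fit within $[1,n+p]$), and relocating each sandwiched $V-\{w\}$-vertex to immediately abut the $K_{p+1}$-block on the side of $w$ on which it originally sat (preserving its $\sss$-relative order among sandwiched vertices).  Then $T_K(\sss') = \binom{p+2}{3}$, so $T_K(\sss)-T_K(\sss') \geq pt$.  By this ``cluster-around-$w$'' design, $w$ stays (essentially) in place and every non-sandwiched $V-\{w\}$-vertex stays fixed; only the $t$ sandwiched $V-\{w\}$-vertices displace, each by at most the number of $K_{p+1}$-positions between it and $w$ (so at most $q$ leftward or $p-q$ rightward, writing $w = x_{i_{q+1}}$).

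The remaining task is to show $T_G(\sss') - T_G(\sss) < pt$, which combined with the above yields $\beta(\sss') < \beta(\sss)$ and the desired contradiction.  I expect this to be the main obstacle.  A naive bound ``edge-count times maximum displacement'' gives $\bigO{m\cdot p} = \bigO{m^2}$, which does not beat $pt = 2mt$ when $t$ is small, so a sharper accounting is needed.  The plan is a signed case-analysis over $G$-edge types, classifying each endpoint as $w$, sandwiched-left, sandwiched-right, non-sandwiched left-outside, or non-sandwiched right-outside, and computing each edge's length-change as an explicit signed combination of the sandwiched-vertex displacements $a_{j'}$ (number of $K_{p+1}$-positions in $[i_1,k_{j'}]$) and their right-side analogues.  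The key point is that length-increases from edges of the form $\{w,y\}$ and $\{y_L,v_{\text{right-outside}}\}$ are largely cancelled by length-decreases from edges of the form $\{y_L,v_{\text{left-outside}}\}$ and among sandwiched-left pairs, and symmetrically on the right.  Combined with the facts that $G$ has only $m = p/2$ edges (while $K_{p+1}$ has $\binom{p+1}{2}$) and that the total sandwiched-vertex displacement is bounded by $qt_L + (p-q)t_R \leq pt$, the residual $T_G$-increase is strictly less than $pt$.
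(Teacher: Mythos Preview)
Your decomposition $\beta(\sss)=T_K(\sss)+T_G(\sss)$ and the identity $T_K(\sss)=\sum_{j=1}^{p}\ell_j\cdot j(p+1-j)$ are correct, and so is the excess bound $T_K(\sss)-\binom{p+2}{3}\geq pt$.  The gap is in the exchange itself: moving each sandwiched vertex to \emph{the side of $w$ on which it sat} can push it across almost the entire $K_{p+1}$-block, and then $T_G(\sss')-T_G(\sss)$ is \emph{not} bounded by $pt$.  Concretely, take $G$ the path $v_1\!-\!v_2\!-\!v_3\!-\!v_4$ (so $p=6$), let $w=v_4$, and consider the sequence
\[
   v_1,\ v_2,\ w,\ v_3,\ u_1,\ u_2,\ u_3,\ u_4,\ u_5,\ u_6 .
\]
Here $t=1$, the single sandwiched vertex $v_3$ lies to the right of $w$, and your rule sends it to the far right.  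Then $T_K$ drops by exactly $p=6$, but $v_3$ moves right by $p=6$ while both of its $G$-neighbours ($v_2$ and $w$) are to its left, so $T_G$ rises by $12$; your $\sss'$ has $\beta(\sss')=\beta(\sss)+6$.  More generally, whenever $w$ sits near one end of the $K_{p+1}$-span and a sandwiched vertex sits just past $w$ with its $G$-neighbours on the far side, your displacement is $\Theta(p)$ per vertex while the $T_K$ gain at that gap is only $p$, so the signed case-analysis you sketch cannot close the inequality.

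The paper avoids this by (i) moving \emph{one} sandwiched vertex at a time and (ii) moving it to the \emph{nearer} boundary of the $K_{p+1}$-span, not to the side of $w$.  Formally it sets $\scatter{\sss}=\min\{j-i,\ell-k\}$ where $x_i,x_\ell$ are the outermost $K_{p+1}$-vertices and $x_j,x_k$ the outermost sandwiched $V-\{w\}$-vertices, and pushes the closer of $x_j,x_k$ just outside the span.  The point is that the displacement is then at most $p/2$, and a direct cutwidth calculation (the worst case is displacement exactly $p/2$) gives $\beta(\sss')-\beta(\sss)=-(p^2+2p)/4+(p/2)\,d$, where $d$ is the $G$-degree of the moved vertex; since $G$ is connected on $n$ vertices with $m=p/2$ edges, one has $d\leq n-1\leq m=p/2$, so the change is strictly negative.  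Iterating drives $\scatter{\sss}$ to $0$.  If you want to keep your global-compaction idea, you must at minimum change the rule to ``move each sandwiched vertex to the \emph{closer} end of the span,'' and then replace the coarse bound $pt$ by the sharper per-gap gain $j(p{+}1{-}j)$, matching it against a per-vertex $T_G$ loss of at most $d\cdot\min\{j,p{+}1{-}j\}$; the inequality $d\leq p/2$ is what finally makes this work.
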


\medskip
In words, in a $\beta$-optimal linear reassembling $(G_w,\LL)$ [or
in a $\beta$-optimal linear arrangement $(G_w,\varphi)$, resp.] all
the vertices of $K_{p+1}$ are reassembled consecutively [or arranged
consecutively, resp.] without intervening vertices from $V-\Set{w}$.
 
\begin{proof}
In Appendix~\ref{sect:remaining-proofs-for-linear}.
\end{proof}

Consider again the sequence $\sss$ of vertices and cutwidths
in~\eqref{eq:sequence-of-vertices-and-cutwidths}. Suppose $\sss$ is
not scattered.  This means that the $p+1$ vertices of $K_{p+1}$ occur
consecutively in $\sss$. We say $\sss$ is \emph{balanced}
iff one of two conditions holds:
%
%
\begin{alignat*}{8}
   &(1) \qquad && \Set{x_1,\ldots,x_{n-1}}\ &&=\ && V - \Set{w},
     \qquad && \Set{x_n}\ && =\ && \Set{w},\qquad 
     && \Set{x_{n+1},\ldots,x_{n+p}} = U, 
\\
   & (2)\qquad && \Set{x_1,\ldots,x_{p}}\ &&=\ && U,
     \qquad && \Set{x_{p+1}}\ &&=\ && \Set{w},\qquad 
    && \Set{x_{p+2},\ldots,x_{n+p}} = V -\Set{w}.
\end{alignat*}
In words, $\sss$ is balanced if all the vertices of $V-\Set{w}$ are
on the same side (on the left in (1), or on the right in (2)) of 
the distinguished vertex $w$
\text{and} all the vertices of $U$ are on the other side 
(on the right in (1), or on the left in (2)) of $w$. 
Put differently still, $\sss$ is balanced if
all the vertices of $V-\Set{w}$ are together, all the vertices of 
$U$ are together, and $w$ is between the two sets of vertices.
The following is a refinement of the preceding lemma and its proof
is based on a similar argument.

\begin{lemma}
\label{lem:auxiliary-vertices-together-2}
Let $G_w = (V_w,E_w)$ be the auxiliary graph for vertex $w\in V$, as
constructed in Definition~\ref{def:auxiliary-graphs}.  
Let $\sss$ be the sequence of vertices and cutwidths, 
as in~\eqref{eq:sequence-of-vertices-and-cutwidths},
induced by a $\beta$-optimal linear reassembling $(G_w,\LL)$ or 
by a $\beta$-optimal linear arrangement $(G_w,\varphi)$.
\textbf{Conclusion:} ${\sss}$ is balanced.
\end{lemma}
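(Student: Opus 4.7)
The plan is to refine the exchange argument of Lemma~\ref{lem:auxiliary-vertices-together-1}. By that lemma, the $p+1$ vertices of $K_{p+1}$, namely $U\cup\Set{w}$, occupy $p+1$ consecutive positions $[a,\,a+p]$ in $\sss$. Using Lemma~\ref{lem:equality-of-measures}, write $\beta(\sss) = \beta_K + \beta_G$, where $\beta_K$ is the total length of the edges of $K_{p+1}$ and $\beta_G$ the total length of the edges of $G$ in the arrangement induced by $\sss$. Because the $K_{p+1}$-vertices are confined to the fixed interval $[a,a+p]$, the sum of their pairwise distances depends only on $p$, so $\beta_K$ is constant; only $\beta_G$ remains to be minimized.

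Parameterize $\sss$ by three data: the ordering $\bar\sigma$ of $V$ obtained from $\sss$ by deleting the $U$-vertices; the position $k$ of $w$ inside $\bar\sigma$; and the number $q\in\Set{0,\ldots,p}$ of $U$-vertices placed just before $w$ inside the block. A direct edge-by-edge accounting yields
\[
\beta_G(\sss) \;=\; \beta_G(\bar\sigma) \,+\, p\cdot c \,+\, q\cdot d_L \,+\, (p-q)\cdot d_R ,
\]
where $c$ counts the $G$-edges with one endpoint on each side of $w$ in $\bar\sigma$ and neither endpoint equal to $w$, and $d_L, d_R$ are the $G$-degrees of $w$ to its left and right in $\bar\sigma$. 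Being linear in $q$, this expression is minimized at $q\in\Set{0,p}$, which places $w$ at a boundary of the block; using $c+d_L = \zeta(G,\bar\sigma,k-1)$ and $c+d_R = \zeta(G,\bar\sigma,k)$, the formula collapses to
\[
\beta_G(\sss) \;=\; \beta_G(\bar\sigma) \,+\, p\cdot \min\bigl(\zeta(G,\bar\sigma,k-1),\;\zeta(G,\bar\sigma,k)\bigr) .
\]
Connectedness of $G$ forces $\zeta(G,\bar\sigma,i)\geqslant 1$ for every $i\in\Set{1,\ldots,n-1}$, so if $k\in\Set{2,\ldots,n-1}$ the penalty is at least $p$, whereas at the two configurations $(k,q) = (1,p)$ and $(k,q) = (n,0)$ the penalty vanishes, since $\zeta(G,\bar\sigma,0) = \zeta(G,\bar\sigma,n) = 0$. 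These are exactly the two balanced configurations~(2) and~(1) of the statement.

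The hard part is to verify that this penalty-free optimum is actually the global optimum. I would establish the following key inequality: for every $\bar\sigma^*$ with $w$ at an interior position $k^*\in\Set{2,\ldots,n-1}$, there exists a permutation $\bar\sigma'$ of $V$ with $w$ at position~$1$ or position~$n$ such that
\[
\beta_G(\bar\sigma') - \beta_G(\bar\sigma^*) \;<\; p\cdot \min\bigl(\zeta(G,\bar\sigma^*,k^*{-}1),\;\zeta(G,\bar\sigma^*,k^*)\bigr) .
\]
Once this holds, the $\sss'$ built from $\bar\sigma'$, with $q'\in\Set{0,p}$ chosen to put $w$ at the block boundary facing $V-\Set{w}$, satisfies $\beta_G(\sss') < \beta_G(\sss^*)$, contradicting the $\beta$-optimality of $\sss^*$. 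The natural candidate $\bar\sigma'$ is obtained by sliding $w$ from position $k^*$ to whichever extreme is cheaper, preserving the relative order of the other $n-1$ vertices; the change in $\beta_G$ is then tabulated edge by edge. The principal technical challenge is that the slide cost, which can grow like $n^2$ in the worst case, must be dominated by the penalty $p\cdot\min(\zeta(G,\bar\sigma^*,k^*{-}1),\zeta(G,\bar\sigma^*,k^*))$, which scales linearly with $p = 2m$ and with the cut value. Exploiting the identity $p = \sum_{v\in V}\degr{G}{v}$ together with the connectedness of $G$ is essential to close the bound, in the spirit of the appendix proof of Lemma~\ref{lem:auxiliary-vertices-together-1}.
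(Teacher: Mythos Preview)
Your derivation up through the closed form
\[
\beta_G(\sss)\;=\;\beta_G(\bar\sigma)\;+\;p\cdot c\;+\;q\cdot d_L\;+\;(p-q)\cdot d_R
\]
is correct and is a clean repackaging of the paper's argument: your minimization over $q$ is exactly the content of the paper's cases~(c), (d), (e), which move $w$ to an end of the $U$-block, and your reduced expression $\beta_G(\bar\sigma)+p\cdot\min\bigl(\zeta(G,\bar\sigma,k-1),\zeta(G,\bar\sigma,k)\bigr)$ is what remains after those cases are disposed of.

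The gap is your ``hard part.'' Your key inequality is precisely what underlies the paper's remaining cases~(a) and~(b) --- there too the move is to slide $w$ to an end of $\bar\sigma$, and there too the inequality $\beta_G(\bar\sigma')-\beta_G(\bar\sigma^*)<p\cdot\zeta$ is asserted but not proved (it is ``left to the reader, by a reasoning similar to Lemma~\ref{lem:auxiliary-vertices-together-1}''). So you have not introduced a new gap; you have located and isolated the one already present in the paper, and stated it more sharply. Two cautions, however. First, the slide direction is not ``whichever extreme is cheaper'' but rather \emph{toward the side with the smaller adjacent cut}: when $\zeta_k\leqslant\zeta_{k-1}$ you must slide right (this is the paper's case~(a), $q=0$), and symmetrically for case~(b). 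Second, your concern about $n^2$ growth of the slide cost against a budget of $p\cdot 1$ is legitimate in the na\"ive bound $d(n-k)$, but that bound ignores the coupling between $d_L,d_R$ and $k$: a large $d_L$ forces $k$ large, which shrinks $n-k$. Working this coupling out, together with $p=2m\geqslant 2(n-1)$, is what actually closes the estimate; your sketch does not yet do this.

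In short: your route is essentially the paper's route recast in a single formula, with the same unfinished step at the same place, and your framing makes that step more visible rather than resolving it.
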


\begin{proof}
In Appendix~\ref{sect:remaining-proofs-for-linear}.
\end{proof}

By the preceding lemma, if the sequence $\sss$ 
in~\eqref{eq:sequence-of-vertices-and-cutwidths} is induced
by a $\beta$-optimal linear reassembling $(G_w,\LL)$, or
by a $\beta$-optimal linear arrangement $(G_w,\varphi)$, then
$\sss$ is balanced, either on the left or on the right. For 
the rest of the analysis below, we assume that $\sss$ is balanced
on the right, \ie, all the vertices in $U$ occur first, then $w$,
and then all the vertices of $V-\Set{w}$.

\begin{definition}{Restrictions of linear reassemblings and linear
arrangements}
\label{def:restricted-linear}
Let $\LL$ be a linear binary tree over the set $V$.
If $V' \subseteq V$, the \emph{restriction} of $\LL$ to $V'$,
denoted $(\LL\,|\,V')$, consists of the following clusters:
\[
    (\LL\,|\,V')\ :=\ \SET{\, X\cap V'\;\bigl|\; X\in \LL\,} .
\]
It is a straightforward exercise to show that $(\LL\,|\,V')$
is a linear binary tree over $V'$.

Let $\varphi : V\to\Set{1,\ldots,n}$ be a linear arrangement
of $V$. The \emph{restriction} of $\varphi$ to $V'$, 
denoted $\bigl(\varphi\,|\,V'\bigr)$, is defined as follows.
For every $1\leqslant i \leqslant n'= \size{V'}$, let:
\begin{alignat*}{5}
   & \bigl(\varphi\,|\,V'\bigr)\, (v)\ &&:=\ && i
     \quad && \text{where $v = {\varphi}^{-1}(j)$ and $j\in\Set{1,\ldots,n}$ is}
\\ & && && && \text{the largest integer such that 
      $\ssize{\Set{{\varphi}^{-1}(1),\ldots,{\varphi}^{-1}(j-1)}\cap V'} = i-1$.}
\end{alignat*}
Again here, it is straightforward to show that $\bigl(\varphi\,|\,V'\bigr)$
is a linear arrangement of $V'$ such that:
\[
  \bigl(\varphi\,|\,V'\bigr)^{-1}(1),\ \ldots\ ,
  \ \bigl(\varphi\,|\,V'\bigr)^{-1}(n')
  \quad\text{is a subsequence of}\quad
  {\varphi}^{-1}(1),\ \ldots\ ,\ {\varphi}^{-1}(n) .
\]
Moreover, if $(G,\LL)$ is a linear reassembling 
[resp. $(G,\varphi)$ is a linear arrangement] of 
the simple undirected graph $G = (V,E)$ and $G' = (V',E')$
is the subgraph of $G$ induced by $V'\subseteq V$, then 
$\bigl(G',(\LL\,|\,V')\bigr)$ is a linear reassembling
[resp. $\bigl(G',(\varphi\,|\,V')\bigr)$ is a linear arrangement]
of $G'$.
\end{definition}

\begin{lemma}
\label{lem:beta-optimal-reassembling}
Let $G_w = (V_w,E_w)$ be the auxiliary graph for vertex $w\in V$, as
constructed in Definition~\ref{def:auxiliary-graphs}. 
\begin{enumerate}[itemsep=0pt,parsep=2pt,topsep=5pt,partopsep=0pt] 
\item If $(G_w,\LL)$
      is a $\beta$-optimal linear reassembling of $G_w$ with no anchor
      restriction, then 
      $\bigl(G,(\LL\,|\,V)\bigr)$ is a $\beta$-optimal 
      linear reassembling relative to anchor $w$.
\item If $(G_w,\varphi)$
      is a $\beta$-optimal linear arrangement of $G_w$ with no anchor
      restriction, then 
      $\bigl(G,(\varphi\,|\,V)\bigr)$ is a $\beta$-optimal 
      linear arrangement relative to anchor $w$.
\end{enumerate} 
\end{lemma}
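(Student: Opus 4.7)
My approach is to exploit Lemma~\ref{lem:auxiliary-vertices-together-2} to reduce the unanchored $\beta$-optimization on $G_w$ to the anchored $\beta$-optimization on $G$, via a constant-offset identity between the two $\beta$-measures. I describe part~(1) (reassemblings) in detail; part~(2) (arrangements) is entirely analogous, with the corresponding $\beta$ formula for arrangements replacing the one for reassemblings.

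First, I apply Lemma~\ref{lem:auxiliary-vertices-together-2} to the sequence $\sss$ induced by $(G_w,\LL)$ to conclude that $\sss$ is balanced. Since $\beta$ is invariant under reversal of the underlying vertex sequence (as $\degr{G_w}{X} = \degr{G_w}{V_w - X}$ for any cluster $X$), I may assume without loss of generality that case~(2) of the balanced definition holds: the $p$ vertices of $U$ occupy positions $1$ through $p$, the distinguished vertex $w$ occupies position $p+1$, and the vertices of $V - \{w\}$ occupy positions $p+2$ through $n+p$. Consequently, the restriction $(\LL \,|\, V)$ is a linear binary tree over $V$ whose induced vertex sequence begins with $w$.

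Next, I establish the constant-offset identity
\[
  \beta(G_w,\LL) \;=\; C_w \;+\; \beta\bigl(G,(\LL \,|\, V)\bigr),
\]
where $C_w$ depends only on $G$ and $w$, by partitioning the clusters of $\LL$ into four groups. Singletons of $V_w$ contribute the constant $2\,\ssize{E_w}$. The non-singleton clusters $X_1 \subsetneq \cdots \subsetneq X_{p-1} = U$ contribute $\sum_{i=1}^{p-1}(i+1)(p-i)$, independent of the ordering of $U$ by the $K_{p+1}$-symmetry. The cluster $X_p = U \cup \{w\}$ has $\degr{G_w}{X_p} = \degr{G}{w}$, another constant. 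Finally, each remaining cluster $X_{p+j} = U \cup \{w\} \cup A_j$ with $A_j \subseteq V - \{w\}$ satisfies $\degr{G_w}{X_{p+j}} = \degr{G}{\{w\} \cup A_j}$, matching exactly the degree of the corresponding non-singleton cluster of $(\LL \,|\, V)$.

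Combining the $\beta$-optimality of $(G_w,\LL)$ with this identity forces $(G,(\LL \,|\, V))$ to attain the minimum of $\beta(G,\cdot)$ among linear reassemblings of $G$ with $Y_0 = \{w\}$. The anchor condition $\degr{G}{w} \leq \degr{G}{w'}$ on the sibling leaf $\{w'\} = Y_1$ then follows from the swap argument recalled in the footnote of Definition~\ref{def:anchored-arrangement}: the two possible labelings of the first non-singleton cluster $\{w,w'\}$ (swapping which vertex plays the role of $Y_0$) yield the same chain of nested clusters and hence the same $\beta(G,\cdot)$, so we may choose the labeling that satisfies the degree inequality. This establishes that $(G,(\LL \,|\, V))$ is $\beta$-optimal relative to anchor $w$, which completes part~(1); part~(2) is handled by the same argument, with the different constant $C'_w$ that arises when the singleton contributions are dropped. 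The main technical obstacle is the cluster-by-cluster computation of the offset identity: one must verify that every non-singleton cluster containing all of $U \cup \{w\}$ has its edge-boundary entirely within $V - \{w\}$, so that its $G_w$-degree coincides exactly with the $G$-degree of the corresponding restricted cluster, and that every ordering-dependent contribution on the $G_w$ side is captured by a corresponding contribution on the $G$ side.
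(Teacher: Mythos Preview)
Your approach is essentially the paper's own: invoke Lemma~\ref{lem:auxiliary-vertices-together-2} to force balancedness, assume the right-balanced orientation, then use the fact that $w$ is a cut vertex separating $U$ from $V-\{w\}$ to transfer $\beta$-optimality from $G_w$ to the restriction. The paper's proof is deliberately sketchy (``We omit all formal details''), and your constant-offset identity $\beta(G_w,\LL)=C_w+\beta(G,(\LL\,|\,V))$ is precisely the detail being omitted; your cluster-by-cluster computation of it is correct.

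One small point: your handling of the anchor degree condition does not quite work. You argue that since the two labelings of $\{Y_0,Y_1\}$ give the same $\beta$, ``we may choose the labeling that satisfies the degree inequality.'' But if the sibling $w'$ of $w$ in $(\LL\,|\,V)$ happens to have $\degr{G}{w'}<\degr{G}{w}$, then \emph{neither} labeling with $Y_0=\{w\}$ satisfies the inequality; choosing $Y_0=\{w'\}$ would make the reassembling anchored at $w'$, not at $w$. The paper does not address this point either, so your argument is no less rigorous than the original on this score; it is a minor technical wrinkle in the definition of ``anchored'' that does not affect the downstream use of the lemma in Theorem~\ref{thm:reducing-linear-reassemblings-to-linear-arrangements-2}.
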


\begin{proof}
We prove part 1 only, the proof of part 2 is similar.
By Lemma~\ref{lem:auxiliary-vertices-together-2}, the
sequence $\sss$ induced by a $\beta$-optimal linear reassembling
$(G_w,\LL)$ is balanced. By our assumption preceding 
Definition~\ref{def:restricted}, we take $\sss$ to be balanced on the
right, \ie, all the vertices in $U$ occur first, then $w$,
and then all the vertices of $V-\Set{w}$. There are no edges
connecting vertices in $U$ on the left to vertices in $V-\Set{w}$ on the 
right, with $w$ a cut vertex in the middle. The $\beta$-optimality of 
$(G_w,\LL)$ implies the $\beta$-optimality of the linear
reassembling $\bigl(G,(\LL\,|\,V)\bigr)$ of the subgraph 
$G = (V,E)$ of  $G_w = (V_w,E_w)$. We omit all formal details.
\end{proof}

\begin{theorem}
\label{thm:reducing-linear-reassemblings-to-linear-arrangements-2}
For the class of all simple undirected graphs $G$, the two following
problems are polynomial-time reducible to each other:
\begin{itemize}[itemsep=0pt,parsep=2pt,topsep=5pt,partopsep=0pt] 
\item the $\beta$-optimality of linear arrangements $(G,\varphi)$,
\item the $\beta$-optimality of linear reassemblings $(G,\LL)$.
\end{itemize}
\end{theorem}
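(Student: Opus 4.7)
The plan is to assemble a reduction out of the three preparatory pieces already in place: Theorem~\ref{thm:equivalence-of-anchored-beta-quivalence} converts between anchored $\beta$-optima of arrangements and reassemblings at the same anchor, Lemma~\ref{lem:beta-optimal-reassembling} lets us manufacture anchored $\beta$-optima by restricting unanchored $\beta$-optima on the auxiliary graph $G_w$, and Lemma~\ref{lem:size-of-auxiliary} guarantees $G_w$ has size polynomial in $\size{V}$. The pivotal observation is that every linear reassembling $(G,\LL)$ and every linear arrangement $(G,\varphi)$ is anchored at some vertex (the leaf $Y_0$, respectively the vertex ${\varphi}^{-1}(1)$), so the overall $\beta$-optimum is the minimum of the anchored $\beta$-optima as $w$ ranges over $V$.

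For the direction ``reassembling reduces to arrangement'', suppose $\A$ is a polynomial-time algorithm that computes a $\beta$-optimal linear arrangement of any simple undirected graph. Given $G = (V,E)$, I iterate over each $w\in V$: construct $G_w$ by Definition~\ref{def:auxiliary-graphs}, invoke $\A$ on $G_w$ to obtain a $\beta$-optimal linear arrangement $\varphi_w$ of $G_w$, and apply part 2 of Lemma~\ref{lem:beta-optimal-reassembling} to conclude that $(\varphi_w\,|\,V)$ is a $\beta$-optimal linear arrangement of $G$ anchored at $w$. Theorem~\ref{thm:equivalence-of-anchored-beta-quivalence} then yields in polynomial time a linear reassembling $(G,\LL_w)$ which is $\beta$-optimal relative to anchor $w$. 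Returning the $\LL_w$ with the smallest $\beta(G,\LL_w)$ over the $n$ candidates produces a $\beta$-optimal linear reassembling of $G$ with no anchor restriction. All intermediate inputs have size polynomial in $\size{V}$ by Lemma~\ref{lem:size-of-auxiliary}, so the whole procedure runs in polynomial time.

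The reverse direction ``arrangement reduces to reassembling'' is entirely symmetric. Given an oracle for $\beta$-optimal linear reassemblings, I apply it to each auxiliary graph $G_w$ to obtain $\LL_w$; part 1 of Lemma~\ref{lem:beta-optimal-reassembling} shows that $(\LL_w\,|\,V)$ is a $\beta$-optimal linear reassembling of $G$ anchored at $w$; Theorem~\ref{thm:equivalence-of-anchored-beta-quivalence} converts it into a $\beta$-optimal linear arrangement of $G$ anchored at $w$; and I again take the best across the $n$ choices of anchor. Polynomial-time complexity follows as before.

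The main obstacle in this argument is already discharged by the sequence of lemmas leading up to the theorem: the design of $G_w$ attaches $K_{p+1}$ at $w$ with $p = \sum_{v}\degr{G}{v}$, which is large enough to make any ``scattered'' interleaving of $U$-vertices with $V-\Set{w}$-vertices, or placing $w$ off-center, strictly worse in $\beta$-cost. Lemmas~\ref{lem:auxiliary-vertices-together-1} and~\ref{lem:auxiliary-vertices-together-2} (whose detailed proofs are delayed to Appendix~\ref{sect:remaining-proofs-for-linear}) are what force $w$ to act as the anchor in any $\beta$-optimal linear order on $G_w$, and with these in hand the theorem is essentially a bookkeeping exercise: we loop over the $n$ possible anchors, solve the cheaper anchored version via the auxiliary-graph trick, and convert sides using Theorem~\ref{thm:equivalence-of-anchored-beta-quivalence}.
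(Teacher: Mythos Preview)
Your proposal is correct and follows essentially the same approach as the paper: loop over all $n$ possible anchors $w$, apply the oracle to the auxiliary graph $G_w$, restrict back to $V$ via Lemma~\ref{lem:beta-optimal-reassembling} to obtain an anchored $\beta$-optimum, convert sides using Lemma~\ref{lem:equivalence-when-starting-vertex-is-the-same} (equivalently Theorem~\ref{thm:equivalence-of-anchored-beta-quivalence}), and select the best of the $n$ candidates. One small caveat: your assertion that \emph{every} linear arrangement is anchored at some vertex is not literally true under Definition~\ref{def:anchored-arrangement} (it fails when $\degr{G}{\varphi^{-1}(1)} > \degr{G}{\varphi^{-1}(2)}$), but as noted there any such arrangement is strictly $\beta$-suboptimal, so the overall $\beta$-optimum is still attained among the anchored ones and your argument goes through.
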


\noindent
More explicitly, a polynomial-time algorithm $\A$, which 
returns a $\beta$-optimal linear reassembling $(G,\LL)$ 
[resp. a $\beta$-optimal linear arrangement $(G,\varphi)$]
of an arbitrary graph $G$, can be used to return a $\beta$-optimal 
linear arrangement $(G,\varphi)$ [resp. a $\beta$-optimal 
linear reassembling $(G,\LL)$] in polynomial time.

\begin{proof}
We compute a $\beta$-optimal linear reassembling $(G_{v_i},{\LL}_i)$ 
[resp. a $\beta$-optimal linear arrangement $(G_{v_i},{\varphi}_i)$] of the
auxiliary graph $G_{v_i}$, one for each vertex 
$v_i\in V = \Set{v_1,\ldots,v_n}$. We next consider
the linear reassembling $(G,(\LL_i\,|\,V))$ 
[resp. the linear arrangement $(G,(\varphi_i\,|\,V))$] which,
by Lemma~\ref{lem:beta-optimal-reassembling}, is a $\beta$-optimal linear
reassembling  relative to anchor $v_i$
[resp. a $\beta$-optimal linear arrangement relative to anchor $v_i$], 
for every $1\leqslant i\leqslant n$. Let $(G,\varphi_i)$ be the linear
arrangement induced by the linear reassembling $(G,(\LL_i\,|\,V))$
[resp. let $(G,\LL_i)$ be the linear reassembling induced by the 
linear arrangement $(G,(\varphi_i\,|\,V))$].
By Lemma~\ref{lem:equivalence-when-starting-vertex-is-the-same},
$(G,\varphi_i)$ is a $\beta$-optimal linear arrangement relative to anchor $v_i$
[resp. $(G,\LL_i)$ is a $\beta$-optimal linear reassembling  
relative to anchor $v_i$], for every $1\leqslant i\leqslant n$. 
Among these $n$ linear arrangements
[resp. $n$ linear reassemblings], 
we choose one such that $\beta(G,\varphi_i)$ is minimized
[resp. $\beta(G,\LL_i)$ is minimized].
\end{proof}

\begin{corollary}
\label{cor:NP-hardness-of-beta-optimality-of-linear-reassemblings}
For the class of all simple undirected graphs $G$,
the computation of $\beta$-optimal linear reassemblings $(G,\LL)$
is an NP-hard problem.
\end{corollary}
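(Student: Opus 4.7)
The plan is to mirror the proof of Corollary~\ref{cor:NP-hardness-of-alpha-optimality-of-linear-reassemblings} and reduce from the minimum-cost linear arrangement problem ($\ola$), which is the $\beta$-optimality of linear arrangements and is known to be NP-hard in the literature~\cite{petit:2011}. The reduction machinery is already in hand: Theorem~\ref{thm:reducing-linear-reassemblings-to-linear-arrangements-2} gives a polynomial-time reduction between $\beta$-optimality of linear arrangements and $\beta$-optimality of linear reassemblings (in both directions). So the corollary is essentially an immediate consequence.

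More concretely, I would argue by contradiction. Suppose there is a polynomial-time algorithm $\A$ which, for an arbitrary simple undirected graph $G = (V,E)$, returns a $\beta$-optimal linear reassembling $(G,\LL)$. Then by Theorem~\ref{thm:reducing-linear-reassemblings-to-linear-arrangements-2}, $\A$ can be adapted (in polynomial time) into an algorithm $\A'$ which, for an arbitrary simple undirected graph $G$, returns a $\beta$-optimal linear arrangement $(G,\varphi)$. But computing a $\beta$-optimal linear arrangement, \ie, solving the minimum-cost linear arrangement problem $\ola$, is NP-hard; see, \eg,~\cite{petit:2011} and the references therein. This contradicts $\A'$ being a polynomial-time algorithm, unless $\text{P} = \text{NP}$. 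Hence no such $\A$ exists under the usual complexity assumption, and the computation of $\beta$-optimal linear reassemblings is NP-hard.

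There is no real technical obstacle here because all of the hard work has already been done in building up to Theorem~\ref{thm:reducing-linear-reassemblings-to-linear-arrangements-2}: the construction of the auxiliary graph $G_w$ (Definition~\ref{def:auxiliary-graphs}), the polynomial bounds on its size (Lemma~\ref{lem:size-of-auxiliary}), the fact that a $\beta$-optimal reassembling or arrangement of $G_w$ is balanced (Lemmas~\ref{lem:auxiliary-vertices-together-1} and~\ref{lem:auxiliary-vertices-together-2}), and the transfer from $G_w$ back to $G$ via restriction (Lemma~\ref{lem:beta-optimal-reassembling}) together with Lemma~\ref{lem:equivalence-when-starting-vertex-is-the-same}. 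The only small step left is simply to invoke the known NP-hardness of $\ola$ and combine it with the reduction direction that turns an algorithm for $\beta$-optimal reassemblings into one for $\beta$-optimal arrangements. I would keep the proof to just a few lines, exactly parallel to the corresponding $\alpha$-corollary, since repeating the reduction details would add nothing.
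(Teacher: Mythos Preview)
Your proposal is correct and matches the paper's own proof: both simply invoke the known NP-hardness of the minimum-cost linear arrangement problem and combine it with the reduction in Theorem~\ref{thm:reducing-linear-reassemblings-to-linear-arrangements-2}. The only cosmetic difference is that the paper cites~\cite{gareyJohnsonStockmeyer1976} for the NP-hardness of $\ola$ rather than the survey~\cite{petit:2011}.
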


\begin{proof}
This follows from the NP-hardness of 
the \emph{minimum-cost linear arrangement} problem
(also called the \emph{optimal linear arrangement} 
problem in the literature)~\cite{gareyJohnsonStockmeyer1976}.
This problem is the same as what we call, in this report, the problem
of computing a $\beta$-optimal linear arrangement.
\end{proof}

\begin{remark}
\label{rem:open-problem2}
To the best of our knowledge, the complexity status of
the \emph{minimum-cost linear arrangement} problem (or \emph{optimal
linear arrangement} problem) for $k$-regular graphs for a fixed
$k\geqslant 3$ is an open problem. If it were known to be NP-hard, we
would be able to simplify our proof of
Theorem~\ref{thm:reducing-linear-reassemblings-to-linear-arrangements-2}
and its corollary considerably.
\end{remark}

\Hide{
\begin{corollary}
\label{cor:NP-hardness-of-relative-beta-optimality}
For the class of all simple undirected graphs $G = (V,E)$,
each with a distinguished vertex $w\in V$, the following
problems are NP-hard:
\begin{enumerate}[itemsep=0pt,parsep=2pt,topsep=5pt,partopsep=0pt] 
\item The computation of linear reassemblings $(G,\LL)$ which
        are $\beta$-optimal relative to anchor $w$.
\item The computation of linear arrangements $(G,\varphi)$ which
        are $\beta$-optimal relative to anchor $w$.
\end{enumerate}
\end{corollary}

\begin{proof}
It suffices to prove part 1, after which part 2 is a consequence of
Theorem~\ref{thm:equivalence-of-anchored-beta-quivalence}.  For part
1, we show that the problem of computing a $\beta$-optimal linear
reassembling $(G,\LL)$ with no anchor restriction is reducible to the
problem of computing a linear reassembling $(G,\LL)$ which is
$\beta$-optimal relative to an anchor $w$. The argument is similar to
that in the proof of
Theorem~\ref{thm:reducing-linear-reassemblings-to-linear-arrangements-2},
where $w$ is set to one of the $n$ vertices of $G$, consecutively $n$
different times. But the problem of computing a $\beta$-optimal linear
reassembling $(G,\LL)$ with no anchor restriction is NP-hard, by
Corollary~\ref{cor:NP-hardness-of-beta-optimality-of-linear-reassemblings},
which implies the desired conclusion.
\end{proof}
}


\section{Related and Future Work}
\label{sect:future}

We mentioned several open problems from the literature, still
unresolved to the best of our knowledge, in
Remarks~\ref{rem:open-problem1}, ~\ref{rem:anchored},
and~\ref{rem:open-problem2}. If these open problems were solved,
partially or optimally, they would permit various simplifications in
our proofs. In particular, even though one of our reductions can be
carried out in polynomial time by invoking an earlier result on cutwidths
(Lemmas~\ref{lem:thilikos1} and~\ref{lem:thilikos2}),
its $\bigOO{n^{12}}$ complexity is
prohibitive (see the proof of
Theorem~\ref{thm:reducing-linear-arrangements-to-linear-reassemblings});
this is the reduction that reduces the $\alpha$-optimality of linear
arrangements to the $\alpha$-optimality of linear reassemblings.

Beyond open problems whose resolutions would simplify and/or
strengthen some of this report's results and their proofs, our wider research
agenda is to tackle forms of graph reassembling other
than \emph{linear} -- in particular,
\emph{balanced reassembling} and \emph{binary reassembling} in
general, both \emph{strict} and \emph{non-strict}, all alluded to in
Sections~\ref{sect:intro}, \ref{sect:problem},
and~\ref{sect:examples}.  For each form of reassembling, both
$\alpha$-optimization and $\beta$-optimization will have to be
addressed; as suggested by the examination in this report, these two
optimizations seem to call for different proof methods, despite their
closely related definitions.

We also need to study classes of graphs for which
$\alpha$-optimization and/or $\beta$-optimization of their 
reassembling, in any of the forms mentioned above, can be carried
out in low-degree polynomial times.
Finally, there is the question of whether, by
allowing \emph{approximate solutions}, we can turn the NP-hardness of
any of the preceding optimizations into polynomially-solvable
optimizations. The literature on approximation algorithms dealing with
graph layout problems is likely to be an important resource to draw
from (among many other papers, 
the older~\cite{arora1996new,leighton1999multicommodity,rao1998new}
the more recent~\cite{charikar20062}, and the survey~\cite{petit:2011}).

\newpage

\appendix
 \section{Appendix: Sequential Graph Reassembling}
    \label{sect:sequential}

Let $\PPP$ be the set of all the partitions of the set 
$V = \Set{v_1,\ldots,v_n}$ of
vertices in the graph $G = (V,E)$.  There are two special partitions in 
$\PPP$:
\[  P_0 := \SET{\,\Set{v}\;\bigl|\;v\in V\,}
    \quad\text{and}\quad P_{\infty} := \Set{\,V\,} .
\]
Given two partitions $X, Y \in\PPP$, we write $X \sqsubseteq Y$
if $X$ is \emph{finer} than $Y$ or, equivalently, $Y$
is \emph{coarser} than $X$, \ie, for every block $A\in X$ there is a
block $B\in Y$ such that $A\subseteq B$. We write $X\sqsubset\Y$
iff $X \sqsubseteq Y$ and $X\neq Y$. The relation
``$\sqsubseteq$'' is a (non-strict) partial order on $\PPP$,
with a least element (the \emph{finest} partition $P_0$) and a
largest element (the \emph{coarsest} partition $P_{\infty}$).
We need the following simple fact.

\begin{lemma}
\label{lem:maximal-partition-chain}
In the poset $(\PPP,\sqsubseteq)$ of all partitions of
$n$ elements, a maximal chain (linearly ordered with 
$\sqsubset$) is a sequence of $n$ partitions, always
starting with $P_0$ and ending with $P_{\infty}$.
\end{lemma}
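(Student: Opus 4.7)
The plan is to establish the lemma by analyzing what happens at each covering step of a maximal chain. Since $P_0$ is the minimum and $P_{\infty}$ is the maximum of $(\PPP, \sqsubseteq)$, any maximal chain must begin at $P_0$ and end at $P_{\infty}$; that part is immediate. The real content is to show that the chain has exactly $n$ partitions, equivalently, exactly $n-1$ covering steps.

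The central claim I would prove is a \emph{cover characterization}: if $X \sqsubset Y$ is a covering relation in $(\PPP, \sqsubseteq)$, then $Y$ is obtained from $X$ by merging exactly two blocks of $X$ into a single block, leaving all other blocks untouched. To prove this, suppose $X \sqsubset Y$. Since $X \neq Y$, at least one block $B \in Y$ strictly contains the union of two or more blocks of $X$. If it contained three or more blocks $A_1, A_2, A_3, \ldots$ of $X$, I would construct an intermediate partition $Z$ by merging only $A_1$ and $A_2$ inside $B$, while keeping all other blocks of $X$ as they are. Then $X \sqsubset Z \sqsubset Y$, contradicting that $X \sqsubset Y$ is a cover. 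An analogous argument rules out the case where $Y$ merges several disjoint pairs of blocks of $X$ simultaneously: one can merge just one such pair first to obtain a strictly intermediate partition. Hence in a cover, $Y$ merges exactly one pair of blocks of $X$, so $|Y| = |X| - 1$ where $|\cdot|$ denotes the number of blocks.

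Given this cover characterization, the lemma follows immediately. A maximal chain $P_0 = Q_0 \sqsubset Q_1 \sqsubset \cdots \sqsubset Q_k = P_{\infty}$ consists of consecutive covering pairs (by maximality, no partition can be inserted between $Q_i$ and $Q_{i+1}$). Each covering step decreases the block count by exactly one, so $|Q_{i+1}| = |Q_i| - 1$. Since $|Q_0| = |P_0| = n$ and $|Q_k| = |P_{\infty}| = 1$, we get $k = n-1$, and the chain has exactly $n$ partitions.

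The only step requiring genuine care is the cover characterization, specifically ruling out the two ``bad'' ways a cover could fail to be a single pairwise merge (one block of $Y$ absorbing three or more blocks of $X$, versus two disjoint pairs being merged at once). Both are handled by exhibiting an explicit intermediate partition, so the main obstacle is really just being careful about notation rather than any deep difficulty.
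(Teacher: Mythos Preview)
Your proposal is correct and follows essentially the same approach as the paper: a maximal chain must start at $P_0$, end at $P_{\infty}$, and each step reduces the block count by exactly one, yielding $n$ partitions. The paper's own proof simply asserts the block-count drop without justification, whereas you supply the cover characterization explicitly; your version is more complete but not different in spirit.
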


\begin{proof}
If $X_1\sqsubset X_2\sqsubset\cdots\sqsubset X_k$ is a maximal
chain of partitions, necessarily with $X_1 = P_0$ and $X_k = P_{\infty}$
because the chain is maximal,
then $X_1$ has $n$ blocks, $X_2$ has $n-1$ blocks, in general
$X_p$ has $n-p+1$ blocks, and $X_k$ has one block. The length
$k$ of the chain is therefore exactly $n$.
\end{proof}

\begin{definition}{Sequential graph reassembling, \ie, according to an
ordering of the edges}
\label{defn:sequential-graph-reassembling}
Let $\EdgePerm$ be an ordering of the edges in $E$. We use $\EdgePerm$
to select $n$ partitions in $\PPP$ forming a maximal chain 
(linearly ordered with $\sqsubset$), 
which starts with the finest partition $P_0$ and ends with the
coarsest partition $P_{\infty} = \Set{V}$, say:
\[
   X_1\ \sqsubset\ X_2\ \sqsubset
   \ \cdots\ \sqsubset X_n \qquad
   \text{where $X_1 = P_0$ and $X_n = P_{\infty}$},
\]
as we explain next. To define $X_{p+1}$ from $X_p$,
we associate each $X_p$ with a subsequence $\EdgePerm_p$ of 
the initial sequence $\EdgePerm_1 = \EdgePerm$, 
for every $p\geqslant 1$. 
The subsequence ${\EdgePerm}_p$ keeps track of all the edges that have not yet 
been reconnected.
We obtain the next pair $\bigl(X_{p+1},{\EdgePerm}_{p+1}\bigr)$
from the preceding pair $\bigl(X_{p},{\EdgePerm}_p\bigr)$ as follows: 
\begin{itemize}[itemsep=0pt,parsep=2pt,topsep=5pt,partopsep=0pt] 
\item[(1)] Take the first edge $e$ in the sequence ${\EdgePerm}_p$,
           \ie, let ${\EdgePerm}_p = e\,{\EdgePerm}_p'$ for some 
           ${\EdgePerm}_p'$, with $e = \set{v\,w}$ for some 
           $v,w\in V$, and let $A$ and $B$ be the unique blocks in $X_{p}$ 
           containing $v$ and $w$, respectively.           
\item[(2)] Merge the two blocks $A$ and $B$ to obtain $X_{p+1}$, \ie, let:
           \[
            X_{p+1} := \bigl(X_{p} - \Set{A,B}\bigr)\cup \Set{ A\cup B }.
           \]
\item[(3)] Delete every edge $e'$ whose two endpoints are in 
           the new block $A\cup B$
           to obtain ${\EdgePerm}_{p+1}$, \ie, let:
           \[ {\EdgePerm}_{p+1} :=  {\EdgePerm}_{p}\;\bigl/
            \;\SET{\,e'\in E\;\bigl|\;e'=\set{v'\,w'}\text{ and } \Set{v',w'}
            \subseteq A\cup B\,} .
           \]
\end{itemize}
In words, we go from $\bigl(X_{p},{\EdgePerm}_p\bigr)$ to 
$\bigl(X_{p+1},{\EdgePerm}_{p+1}\bigr)$ by merging the two blocks $A$ and $B$
in $X_{p}$ that are connected by the first edge $e$ in ${\EdgePerm}_p$, and then
removing from further consideration all edges whose endpoints are
in $A\cup B$.  

We refer to the sequential reassembling of $G$ according to
the ordering ${\EdgePerm}$ by writing $(G,{\EdgePerm})$,
the result of which is the chain of partitions 
$\X = X_1\sqsubset\cdots\sqsubset X_n$, more succintly
written also as $\X = X_1\,\cdots\,X_n$.
\end{definition}

\begin{remark}
In Definition~\ref{defn:sequential-graph-reassembling}, 
when we merge the two blocks $A$ and $B$
because the edge $e$ has its two endpoints in $A$ and $B$, 
not only do we reconnect the two halves of $e$, but 
we additionally reconnect every other edge $e'$ whose two endpoints
are also in $A$ and $B$. Thus, in general, we may reconnect several edges
simultaneously -- all the edges between $A$ and $B$ in the original
graph -- rather than one at a time by strictly following the
order specified by ${\EdgePerm}$. 
The same happens with binary graph-reassembling
(Definition~\ref{defn:binary-reassembling}).
\end{remark}

Definition~\ref{defn:sequential-graph-reassembling} describes the
process of going from an ordering ${\EdgePerm}$ of edges to a maximal
chain $\X$ of partitions. If $\X = X_1\,\cdots\,X_n$ is
a maximal chain of partitions, we say that $\X$ is \emph{strict} if, for
every consecutive pair $(X_p,X_{p+1})$ with $A,B\in X_p$ and $A\cup
B\in X_{p+1}$, where $1\leqslant p < n$, it is the case that
$\bridges{G}{A,B}\neq\varnothing$. The result of a sequential
reassembling $(G,\EdgePerm)$ is always a strict maximal chain $\X$
of partitions. We can also carry out the process in
reverse, as asserted by the next lemma. 

\begin{lemma}[From a maximal chain $\X$ of partitions  to
an ordering ${\EdgePerm}$ that induces it]
\label{prop:from-a-maximal-partition-chain}
\label{lem:from-a-maximal-partition-chain}
Let $G = (V,E)$ be a graph, and $\PPP$ the set of all partitions of $V$,
as in Definition~\ref{defn:sequential-graph-reassembling}.
For every maximal chain of partitions 
$\X = X_1\sqsubset\cdots\sqsubset X_n$, with $X_1,\ldots,X_n\in\PPP$,
if $\X$ is strict, then 
there is an ordering (not necessarily unique) ${\EdgePerm}$ of $E$ such that 
$(G,{\EdgePerm}) = \X$.
\end{lemma}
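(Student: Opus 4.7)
My plan is to construct $\EdgePerm$ directly from the chain $\X$ by grouping the edges of $G$ according to ``when they get absorbed''. Since $\X$ is a maximal chain of length $n$ starting at $P_0$ and ending at $P_\infty$ (Lemma~\ref{lem:maximal-partition-chain}), each step from $X_p$ to $X_{p+1}$, for $1\leqslant p\leqslant n-1$, must merge exactly two blocks. Call them $A_p,B_p\in X_p$, so that $(X_p -\Set{A_p,B_p})\cup\Set{A_p\cup B_p} = X_{p+1}$. Let
\[
    E_p\ :=\ \bridges{G}{A_p,B_p}\ \subseteq\ E.
\]
The strictness of $\X$ guarantees $E_p\neq\varnothing$ for every $p$, which is what will enable the algorithm of Definition~\ref{defn:sequential-graph-reassembling} to make each required merge.

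Next I would verify that $\Set{E_1,\ldots,E_{n-1}}$ is a partition of $E$. Fix an edge $e=\set{v\,w}\in E$. In $X_1=P_0$ the vertices $v$ and $w$ lie in distinct blocks, while in $X_n=P_\infty$ they lie in the same block, so there is a unique $p$ such that $v,w$ lie in distinct blocks of $X_p$ but the same block of $X_{p+1}$. The two blocks of $X_p$ containing $v$ and $w$ must be exactly $A_p$ and $B_p$, so $e\in E_p$, and $e$ belongs to no other $E_q$. Define $\EdgePerm$ to be any ordering of $E$ in which all edges of $E_1$ come first (in any internal order), then all edges of $E_2$, and so on up to $E_{n-1}$.

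It then remains to check that $(G,\EdgePerm)=\X$. I would do this by induction on $p$, proving the joint invariant that at step $p$ of Definition~\ref{defn:sequential-graph-reassembling} the current partition is $X_p$ and the current subsequence $\EdgePerm_p$ is exactly $E_p\,E_{p+1}\,\cdots\,E_{n-1}$ (as a subsequence of $\EdgePerm$). The base case $p=1$ is immediate. For the inductive step, the first edge of $\EdgePerm_p$ lies in $E_p$, so its endpoints sit in $A_p$ and $B_p$; step (2) of the algorithm therefore merges exactly these two blocks to yield $X_{p+1}$. Step (3) then discards every edge with both endpoints in $A_p\cup B_p$: these are precisely the edges of $E_p$ still remaining, because any edge in $E_q$ for $q>p$ has its endpoints in two distinct blocks of $X_q$, and $A_p\cup B_p$ is contained in a single block of $X_q$ (since $X_{p+1}\sqsubseteq X_q$). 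Hence $\EdgePerm_{p+1}=E_{p+1}\,\cdots\,E_{n-1}$, completing the induction.

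The only delicate points, and hence the only real obstacles, are (i) making sure strictness is genuinely used so that $E_p$ is always nonempty and the algorithm never stalls, and (ii) arguing in the inductive step that when we delete edges with both endpoints in $A_p\cup B_p$, we remove all of $E_p$ but nothing from any later $E_q$; both are handled by the uniqueness-of-membership argument above. Everything else is bookkeeping.
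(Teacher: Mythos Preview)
Your proposal is correct and is exactly the natural construction the paper has in mind; the paper's own proof reads in its entirety ``This is a consequence of Definition~\ref{defn:sequential-graph-reassembling}. Details omitted.'' Your argument simply supplies those omitted details, and the two delicate points you flag (nonemptiness of each $E_p$ from strictness, and that step~(3) removes precisely $E_p$ and nothing from later $E_q$) are handled correctly.
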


\begin{proof}
This is a consequence of 
Definition~\ref{defn:sequential-graph-reassembling}.
Details omitted.
\end{proof}

We want to relate the two notions:
sequential reassembling $(G,{\EdgePerm})$ in
Definition~\ref{defn:sequential-graph-reassembling} and 
binary reassembling $(G,{\B})$ in Definition~\ref{defn:binary-reassembling}. 
The discussion to follow uses the following facts.

\begin{lemma}
\label{lem:maximal-cross-section-chain}
Let $\B$ be a binary tree over $V = \Set{v_1,\ldots,v_n}$, given in the
formulation of Definition~\ref{defn:binaryTrees}. 
\begin{enumerate}[itemsep=0pt,parsep=2pt,topsep=5pt,partopsep=0pt] 
\item If $S = \Set{X_1,\ldots,X_k} \subseteq\B$
    is a maximal collection of $k\geqslant 2$ pairwise disjoint sets
    in $\B$, then $S$ is a partition of $V$. We call such a maximal 
    collection $S$ a \emph{cross-section} of the binary tree $\B$.
\item If $\SSS$ is the set of all cross-sections of $\B$, then
    $(\SSS,\sqsubseteq)$ is a proper sub-poset of the
    poset $(\PPP,\sqsubseteq)$ in Lemma~\ref{lem:maximal-partition-chain},
    with the same bottom element $P_0$ and top element $P_{\infty}$.
\item In the poset $(\SSS,\sqsubseteq)$, a maximal chain
    has exactly $n$ entries, always starting with $P_0$ and ending 
    with $P_{\infty}$.
\end{enumerate}
\end{lemma}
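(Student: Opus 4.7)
The plan is to handle parts 1 and 2 quickly and reserve the real work for part 3. For part 1, I would argue by maximality: if $S = \Set{X_1,\ldots,X_k}$ with $k \geq 2$ were maximal pairwise-disjoint in $\B$ but missed some $v \in V$, then the leaf $\Set{v}$ (which lies in $\B$ by condition 1 of Definition~\ref{defn:binaryTrees}) would be disjoint from every $X_i$, so $S \cup \Set{\Set{v}}$ would strictly enlarge $S$ -- a contradiction. Hence $S$ partitions $V$. For part 2, I would check $P_0, P_\infty \in \SSS$ directly: $P_0$ is pairwise disjoint and covers $V$, and is maximal because any $X \in \B$ with $\size{X} \geq 2$ contains at least two leaves; $\Set{V}$ is the trivial one-block cross-section. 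Part 1 then yields $\SSS \subseteq \PPP$, and properness is immediate from any small example where some partition of $V$ has a block that is not a cluster of $\B$ (e.g., a three-vertex linear tree admits fewer cross-sections than the total number of set-partitions of $V$).

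The heart of the lemma is part 3, and the crux is a structural claim I would establish first: \emph{every cross-section $C$ with $\size{C} \geq 2$ contains a pair of $\B$-siblings.} My approach is induction on $\size{V}$. At the inductive step, the two children $A, B$ of the root split $C$ as $C_A \uplus C_B$: by Proposition~\ref{prop:propertiesTwo}(1) each $X \in C$ with $X \subsetneq V$ is forced into $\B_A$ or $\B_B$ (since $X \cap A \neq \varnothing$ forces $X \subseteq A$ or $A \subseteq X$, and the latter combined with $X \neq V$ is incompatible with $X \cap B = \varnothing$ unless $X \subseteq A$); by Proposition~\ref{prop:propertiesTwo}(2), $C_A$ and $C_B$ are cross-sections of $\B_A$ and $\B_B$ respectively. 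If $\size{C_A} = \size{C_B} = 1$ then $C = \Set{A, B}$ and the pair itself is a sibling pair; otherwise induction supplies siblings on whichever side has $\geq 2$ members. The main obstacle will be justifying the split $C = C_A \uplus C_B$ rigorously -- verifying that each block of $C$ really is contained in a single root-subtree rather than straddling both.

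With the structural claim in hand, I would show that whenever $C \sqsubsetneq C'$ in $\SSS$ satisfies $\size{C} - \size{C'} \geq 2$, a cross-section can be slipped strictly between them. Group the blocks of $C$ by which $Z \in C'$ contains them, so $Z = \biguplus_{j \in I_Z} A_j$. The arithmetic $\size{C} - \size{C'} = \sum_Z (\size{I_Z}-1) \geq 2$ forces either some $\size{I_Z} \geq 3$ or two distinct groups of size $\geq 2$; in either case, applying the structural claim inside $\B_Z$ (a binary tree by Proposition~\ref{prop:propertiesTwo}(2)) yields two $\B$-siblings lying in $C$ whose merger produces an intermediate cross-section. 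Consequently, in any maximal chain $P_0 = C_1 \sqsubsetneq \cdots \sqsubsetneq C_m = P_\infty$ every step must satisfy $\size{C_i} - \size{C_{i+1}} = 1$; starting from $\size{P_0} = n$ and ending at $\size{P_\infty} = 1$, the chain has exactly $n$ entries, as required.
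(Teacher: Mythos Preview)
Your proposal is correct and supplies precisely the details that the paper omits: the paper's own proof consists only of the remark that all three parts follow by induction on $n \geqslant 1$ using the reasoning of Propositions~\ref{prop:propertiesOne} and~\ref{prop:propertiesTwo}, with all details left to the reader. Your structural claim for part~3 (every cross-section of size $\geqslant 2$ contains a sibling pair), proved by induction on $\size{V}$ via the root-split $V = A \uplus B$, is exactly the kind of argument the paper gestures at, and your interpolation step then cleanly forces each consecutive pair in a maximal chain to differ in size by one.
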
 

\begin{proof}
All three parts can be proved by induction
on $n \geqslant 1$, using the same reasoning as in the
proofs of Propositions~\ref{prop:propertiesOne} and~\ref{prop:propertiesTwo}. 
All details omitted.
\end{proof}

In the preceding lemma, it is worth noting that the size of $\PPP$ is
fixed as a function of $n$, the so-called \emph{Bell number}
$B(n)$, which counts the partitions of an $n$-element set and grows
exponentially in $n$.%
   \footnote{There is no known simple expression for the exponential
   growth of $B(n)$ as a function of $n$, though there are various
   ways of estimating tight lower bounds and tight upper bounds on its
   asymptotic growth \cite{Odlyzko95asymptoticenumeration}.
   }
By contrast, the size of $\SSS$ is much smaller, depends on both $n$
and the shape of the binary tree $\B$, and can be as small as $n$ (the
case when $\B$ is a linear, \ie, a degenerate binary tree).

\medskip
Consider a sequential reassembling 
$(G,\EdgePerm)$ of the graph $G = (V,E)$, the result of which is a 
maximal chain of $n$ partitions $\X = X_1\sqsubset\ \cdots\ \sqsubset X_n$,
as in Definition~\ref{defn:sequential-graph-reassembling}, where
$X_1 = P_0$ and $X_n = P_{\infty}$. Since every successive partition $X_{p+1}$
in $\X$ is obtained from the previous $X_p$ by merging two blocks in
$X_p$, there is a natural way of organizing $\X$ in the form of a
binary tree $\B$, with $n$ (not all)
of the cross-sections of $\B$ being exactly 
$\Set{X_1,\ldots,X_n}$. Let $\binary{G,\EdgePerm}$
denote the binary reassembling thus obtained. 

\medskip
Consider next a binary reassembling 
$(G,\B)$ of the graph $G = (V,E)$. The set $\SSS$ of all cross-sections
in $\B$ is uniquely defined. We want to extract from $\SSS$ a maximal 
chain $\X$ of cross-sections/partitions, ordered by $\sqsubset$, which,
by Lemma~\ref{lem:from-a-maximal-partition-chain},
will in turn induce an ordering of $\EdgePerm$ of the edges. The
problem here is that there are generally many such maximal chains $\X$.
We need therefore a method to canonically extract a unique maximal chain
$\X$ from $\SSS$ and a unique edge-ordering $\EdgePerm$ from $\X$.
We propose such a method in the next paragraph.

We assume that the binary reassembling $(G,\B)$ is strict and that
there is a fixed ordering of the vertices, say, $v_1 \prec
v_2 \prec \cdots \prec v_n$.  The vertex ordering ``$\prec$'' is
extended to edges, and to sets of edges, as follows:
\begin{itemize}[itemsep=0pt,parsep=2pt,topsep=5pt,partopsep=0pt] 
\item If $e = \set{v\,w}$ is the edge joining vertices $v$ and $w$, 
      we assume $v\prec w$.
\item If $e = \set{v\,w}$ and $e' = \set{v'\,w'}$, then
      $e \prec e'$ iff \emph{either} $v\prec v'$ \emph{or}
      $v = v'$ and $w\prec w'$.
\item If $A \subseteq E$, then $\canonicalOrd{A}$ is the 
      \emph{canonical ordering}
      of $A$ w.r.t. ``$\prec$'', \ie, $\canonicalOrd{A} = e_1\,e_2\,\cdots\,e_k$\\
      where $A = \Set{e_1,e_2,\ldots,e_k}$ and 
      $e_1\prec e_2\prec \cdots\prec e_k$.
\item If $A$ and $B$ are non-empty disjoint set of edges,
      with $\canonicalOrd{A} = e_1\,e_2\,\cdots$ and 
      $\canonicalOrd{B} = f_1\,f_2\,\cdots$, \\
      then $\canonicalOrd{A}\prec \canonicalOrd{B}$ iff $e_1\prec f_1$.
\end{itemize}
If $W\in\B$, then ${\B}_W$ is the subtree of $\B$ rooted at
$W$ (see Proposition~\ref{prop:propertiesTwo}). We write 
$(G,{\B}_W)$ for a \emph{partial} binary reassembling of the graph $G=(V,E)$,
the result being the subgraph of $G$ induced by $W$ together with all
the edges in $\bridges{G}{W}$ as dangling edges, \ie, edges with only
one endpoint in $W$. We define a canonical ordering of all the edges
already in place in the partial reassembling $(G,{\B}_W)$, denoted 
$\canonicalOrd{G,{\B}_W}$, as follows:
\[
  \canonicalOrd{G,{\B}_W}\;:=
  \ \begin{cases}
    \varepsilon\ (\text{the empty string}) 
      & \text{if $W$ is a singleton set},
    \\[1.2ex]
    \canonicalOrd{G,{\B}_{T}}\,\canonicalOrd{G,{\B}_{U}}
    \,\canonicalOrd{\bridges{}{T,U}}
      & \text{if $W = T\uplus U$ and},
    \\
      & \text{$\canonicalOrd{G,{\B}_{T}}\prec\canonicalOrd{G,{\B}_{U}}$},
    \\[1.2ex]
    \canonicalOrd{G,{\B}_{U}}\,\canonicalOrd{G,{\B}_{T}}
    \,\canonicalOrd{\bridges{}{T,U}}
      & \text{if $W = T\uplus U$ and},
    \\
      & \text{$\canonicalOrd{G,{\B}_{U}}\prec\canonicalOrd{G,{\B}_{T}}$}.
    \end{cases}
\]
Because the binary reassembling $(G,\B)$ is strict, 
$\bridges{}{T,U}\neq\varnothing$ in the second and third cases above,
which implies $\canonicalOrd{\bridges{}{T,U}}\neq\varepsilon$. 
If $W = V$, then $(G,{\B}) = (G,{\B}_W)$ and 
$\canonicalOrd{G,{\B}} = \canonicalOrd{G,{\B}_W}$.

\begin{proposition}[Relating sequential reassembling and binary reassembling]
\label{prop:relating-two-reassembling}
Let $G= (V,E)$ be a simple undirected graph. We have the following facts:
\begin{enumerate}[itemsep=0pt,parsep=2pt,topsep=2pt,partopsep=0pt] 
\item For every sequential reassembling $(G,{\EdgePerm})$,
      there is a binary tree $\B$ over $V$ such that:\\
      \(
        \binary{G,{\EdgePerm}} = (G,\B) .
      \) 
\item  For every strict binary reassembling $(G,\B)$,
    there is an ordering ${\EdgePerm}$ of $E$ such that:\\
      \(
      \canonicalOrd{G,\B} = (G,\EdgePerm) .
      \)
\item For every strict binary reassembling $(G,\B)$, it holds that:
      \(
      \binary{\canonicalOrd{G,\B}} = (G,\B) .
      \)
\end{enumerate}
\end{proposition}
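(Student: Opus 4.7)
The plan is to prove the three parts in the order stated, each resting on the formal framework set up earlier in the appendix.

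\textbf{Part 1.} Starting from $(G,\EdgePerm)$, the sequential procedure of Definition~\ref{defn:sequential-graph-reassembling} produces a maximal chain $X_1\sqsubset X_2\sqsubset\cdots\sqsubset X_n$ with $X_1 = P_0$ and $X_n = P_\infty$. I would take $\B := \bigcup_{p=1}^{n} X_p$ as the natural underlying binary tree and verify the three clauses of Definition~\ref{defn:binaryTrees}. Clauses 1 and 2 are immediate. For clause 3, fix $A\in\B$ with $A\neq V$, and let $P$ be the \emph{smallest} proper superset of $A$ among members of $\B$. Then $P$ was formed at some step $r$ by merging two disjoint blocks $P_1,P_2\in X_{r-1}$. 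A short argument using partition-disjointness shows the block $R\in X_{r-1}$ that contains $A$ must lie in $\Set{P_1,P_2}$ (otherwise $R$ is left unmerged at step $r$, and one derives $R = P$, contradicting the fact that $P$ first appears in $X_r$). WLOG $R = P_1$, and if $A\subsetneq P_1$, then $P_1$ would be an intermediate superset of $A$ in $\B$, contradicting minimality of $P$. Hence $A = P_1$, and the unique sibling of $A$ is $P_2 = P - A$.

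\textbf{Part 2.} I would verify by structural induction on $\B$ that $\canonicalOrd{G,\B}$ is a well-defined ordering of $E$. At each internal node $W = T\uplus U$, strictness guarantees $\bridges{G}{T,U}\neq\varnothing$, and $\canonicalOrd{G,{\B}_T}$ and $\canonicalOrd{G,{\B}_U}$ enumerate disjoint sets of edges, so the ``$\prec$'' comparison is unambiguous (with the convention that $\varepsilon$ precedes any non-empty sequence, which also handles singleton children). To see that every edge appears exactly once in the recursion: for each $e = \set{v\,w}\in E$, Proposition~\ref{prop:propertiesTwo} yields a unique minimal cluster $W\in\B$ containing both $v$ and $w$, whose two children $T,U$ must separate them; hence $e\in\bridges{G}{T,U}$ is emitted by the recursion exactly once. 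That the resulting sequence is accepted as a valid input by Definition~\ref{defn:sequential-graph-reassembling}---i.e., the first remaining edge at each stage connects two distinct current blocks---follows from the same induction, yielding the required $\EdgePerm$.

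\textbf{Part 3.} I would close with an induction on the height of $\B$ showing $\binary{\canonicalOrd{G,\B}} = (G,\B)$. The base case $\size{V}=1$ is vacuous. For the inductive step, let $V = T\uplus U$ be the root decomposition and, WLOG, $\canonicalOrd{G,{\B}_T}\prec \canonicalOrd{G,{\B}_U}$. The prefix $\canonicalOrd{G,{\B}_T}$ contains only edges inside $T$, so processing it on the full graph $G$ is identical to processing it on the induced subgraph $G[T]$; by the inductive hypothesis this rebuilds exactly ${\B}_T$ and leaves $T$ as a single block. The same reasoning handles $\canonicalOrd{G,{\B}_U}$. Finally, processing the first edge of $\canonicalOrd{\bridges{G}{T,U}}$ merges $T$ and $U$ into $V$, and step~(3) of the sequential procedure simultaneously retires all remaining bridges in $\bridges{G}{T,U}$, thus completing the chain at $V$. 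The main obstacle is precisely the claim that processing $\canonicalOrd{G,{\B}_T}$ on the full graph $G$ does not prematurely consume any cross-edge of $\bridges{G}{T,U}$; the recursive definition of $\canonicalOrd{G,\B}$ defers these to the final suffix by construction (since within ${\B}_T$ only bridge sets $\bridges{G}{T',U'}$ with $T'\uplus U'\subseteq T$ are emitted), but this separation has to be verified carefully. Once it is in hand, the rest of the induction is bookkeeping.
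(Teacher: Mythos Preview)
Your proposal is correct and follows essentially the same route as the paper. The paper's own proof is deliberately skeletal: Parts~1 and~2 are dismissed as ``follow from the definitions and discussion that precede the proposition,'' and Part~3 is said to go ``by structural induction on the subtrees $\B_W$ of $\B$, where the induction hypothesis is $\binary{\canonicalOrd{G,\B_W}} = (G,\B_W)$''---exactly the induction you carry out (your height induction with the root decomposition $V = T\uplus U$ is the same structural induction, phrased top-down). Your identification of the only real obstacle in Part~3, namely that processing $\canonicalOrd{G,\B_T}$ on the full $G$ never touches a cross-edge in $\bridges{G}{T,U}$, is the right point to flag, and your justification via the recursive structure of $\canonicalOrd{}$ is sound. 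One small remark on Part~1: you establish existence of the sibling $P_2$ but pass over uniqueness; it follows quickly from the laminar property of $\B$ (any two members of $\bigcup_p X_p$ are nested or disjoint, since the $X_p$ form a refinement chain), which forces any candidate $Y$ with $A\cap Y=\varnothing$ and $A\cup Y\in\B$ to satisfy $A\cup Y = P$ and hence $Y = P_2$.
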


\begin{proof} Parts 1 and 2 follow from the definitions and discussion
that precede the proposition. All details omitted. Part 3 can be proved 
by structural induction on the subtrees ${\B}_W$ of $\B$, where
the induction hypothesis is $\binary{\canonicalOrd{G,{\B}_W}} = (G,{\B}_W)$.
All details omitted again.
\end{proof}

It is possible to refine the notion of ``canonical ordering'' on the
set of edges $E$, so that the equality 
$\canonicalOrd{\binary{G,{\EdgePerm}}} = (G,{\EdgePerm})$ holds which,
together with the equality in part 3 of the preceding proposition, will
mean that the functions $\binary{}$ and $\canonicalOrd{}$ are inverses of
each other. We omit this refinement as it will go further afield from our 
main concerns.

\newpage

\section{Appendix: Remaining Proofs for Sections~\ref{sect:alpha-optimization}
  and~\ref{sect:beta-optimization} }
   \label{sect:remaining-proofs-for-linear}

We supply the details of several long straightforward  
and/or highly technical proofs which we omitted in 
Sections~\ref{sect:alpha-optimization} and~\ref{sect:beta-optimization} 
in order to facilitate the grasp of the different concepts and 
their mutual dependence.

\paragraph{Proof of 
Theorem~\ref{thm:reducing-linear-reassemblings-to-linear-arrangements}.}
\label{proof:reducing-linear-reassemblings-to-linear-arrangements} 
Let $G = (V,E)$ be an arbitrary simple undirected graph, with
$\size{V} = n$.  It suffices to show that if $(G,\varphi)$ is an
$\alpha$-optimal linear arrangement, then the linear reassembling
$(G,\LL)$ induced by $(G,\varphi)$ is also $\alpha$-optimal,
using Definition~\ref{def:linear-reassembling-induced}.

In the notation of Definition~\ref{def:linear-reassembling-induced},
the clusters of $\LL$ of size $\geqslant 2$ are $\Set{X_1,\ldots,X_{n-1}}$.
For the singleton clusters of $\LL$, we
pose $Y_{i-1} := \Set{ {\varphi}^{-1}(i) }$,
where $1\leqslant i\leqslant n$. From Definition~\ref{def:linear-arrangement}:
\begin{alignat*}{7}
   & \alpha(G,\varphi)\ &&=\ &&
   \max\,\SET{\,\degr{}{Y_0},
     \;\max\,\Set{\,\degr{}{X_j}\;|\;1\leqslant j\leqslant n-1}} ,
\\
   & \alpha(G,\LL)\ &&=\ &&
   \max\,\SET{\,\max\,\Set{\,\degr{}{Y_i}\;|
         \;0\leqslant j\leqslant n-1},
     \;\max\,\Set{\,\degr{}{X_j}\;|\;1\leqslant j\leqslant n-1}} .
\end{alignat*}
By way of getting a contradiction, assume that 
$(G,\varphi)$ is $\alpha$-optimal but that the induced $(G,\LL)$ is not
$\alpha$-optimal. Hence, there is another linear reassembling
$(G,\LL')$ which is $\alpha$-optimal such that
$\alpha(G,\LL') < \alpha(G,\LL)$. Using the same notation for
both $(G,\LL)$ and $(G,\LL')$, where every name related to the latter is
decorated with a prime, the inequality
$\alpha(G,\LL') < \alpha(G,\LL)$ implies the inequality:
\begin{alignat*}{7}
   & \max\,\SET{\,\max\,\Set{\,\degr{}{Y'_i}\;|
         \;0\leqslant j\leqslant n-1},
     \;\max\,\Set{\,\degr{}{X'_j}\;|\;1\leqslant j\leqslant n-1}}
   \\
   & < 
   \ \max\,\SET{\,\max\,\Set{\,\degr{}{Y_i}\;|
         \;0\leqslant j\leqslant n-1},
     \;\max\,\Set{\,\degr{}{X_j}\;|\;1\leqslant j\leqslant n-1}} .
\end{alignat*}
But $\max\Set{\degr{}{Y'_i}\,|\,0\leqslant j\leqslant n-1} =
\max\Set{\degr{}{Y_i}\,|\,0\leqslant j\leqslant n-1}$, which
implies two inequalities:
\begin{alignat*}{7}
\label{eq:unequal1} 
\tag{$1$}
   &  \max\,\Set{\,\degr{}{Y_i}\;|\;1\leqslant j\leqslant n-1}
     \ &&<\ &&
     \max\,\Set{\,\degr{}{X_j}\;|\;1\leqslant j\leqslant n-1} ,
  \\
\label{eq:unequal2} 
\tag{$2$}
  &  \max\,\Set{\,\degr{}{X'_j}\;|\;1\leqslant j\leqslant n-1}
     \ &&<\ &&
     \max\,\Set{\,\degr{}{X_j}\;|\;1\leqslant j\leqslant n-1} .
\end{alignat*}
Hence, by inequality~\eqref{eq:unequal1}, we have:
\[
   \alpha(G,\varphi)\ =\ \alpha(G,\LL)\ =
   \ \max\,\Set{\,\degr{}{X_j}\;|\;1\leqslant j\leqslant n-1} .
\]
Consider now the linear arrangement $(G,\varphi')$ induced by the linear
reassembling $(G,\LL')$, using Definition~\ref{def:linear-arrangement-induced}.
We have:
\[
  \alpha(G,\varphi')\ =\ 
  \max\,\SET{\,\degr{}{Y'_0},
     \;\max\,\Set{\,\degr{}{X'_j}\;|\;1\leqslant j\leqslant n-1}} .
\]
If $\degr{}{Y'_0} \geqslant 
   \max\Set{\degr{}{X'_j}\,|\,1\leqslant j\leqslant n-1}$, then
inequality~\eqref{eq:unequal1} implies $\alpha(G,\varphi') < \alpha(G,\varphi)$,
else inequality~\eqref{eq:unequal2} implies 
again $\alpha(G,\varphi') < \alpha(G,\varphi)$. In both cases,
the $\alpha$-optimality of $(G,\varphi)$ is contradicted. 
\hfill \QED

\bigskip
For the proofs of Lemma~\ref{lem:auxiliary-vertices-together-1} and
Lemma~\ref{lem:auxiliary-vertices-together-2}, we take a closer look
at how the vertices of $K_{p+1}$ are positioned in the sequence $\sss$
in~\eqref{eq:sequence-of-vertices-and-cutwidths} in 
Section~\ref{sect:beta-optimization}. From the fact that
$p$ is the sum of all the vertex degrees in $G$, it follows that $p$
is even and $p+1$ odd. From the sequence $\sss$, we can extract the
subsequence $(\sss\;|\;K_{p+1})$ consisting of all the vertices of
$K_{p+1}$ and corresponding cutwidths:
\begin{equation*}
\label{eq:sequence-of-vertices-and-cutwidths-1}
\tag{$\diamondsuit\diamondsuit$}
   \qquad
   (\sss\;|\;K_{p+1})\ =\ \bigl[ x_{i_1}\quad s_{i_1} \quad x_{i_2}\quad s_{i_2}  
   \quad\cdots\quad\cdots\quad
   x_{i_{p}}\quad s_{i_{p}} \quad  x_{i_{p+1}} \bigr]
\end{equation*}
where $\Set{i_1,\ldots,i_{p+1}}\subseteq\Set{1,\ldots,n+p}$
and $\Set{x_{i_1},\ldots,x_{i_{p+1}}} = \Set{u_1,\ldots,u_p}\cup\Set{w}$.  
In the preceding sequence, every vertex has the same degree $p$ in the
subgraph $K_{p+1}$.  In the full graph $G_w$, every vertex from $K_{p+1}$
has again the same degree $p$, except for the distinguished vertex $w$ 
which has degree $p+d$ where $d = \degr{G}{w}$. In particular, we have:
\[
    s_{i_1} = p, \quad s_{i_2} = 2\cdot (p-1), \quad s_{i_3} = 3\cdot (p-2), 
    \quad\ldots\quad, \qquad s_{i_{p-1}} = (p-1)\cdot 2,
    \quad  s_{i_{p}} = p .
\]
The mid-point of $(\sss\;|\;K_{p+1})$ is $x_{i_{(p/2)+1}}$.
The two adjacent cutwidths of the mid-point $x_{i_{(p/2)+1}}$ are:
\[
    s_{i_{(p/2)}} = \dfrac{p}{2}\cdot (\dfrac{p}{2}+1)\quad\text{and}\quad
    s_{i_{(p/2)+1}} = (\dfrac{p}{2}+1)\cdot \dfrac{p}{2},
\]
so that also, as one can readily check:
\[
   s_{i_{(p/2)}}\ =\ s_{i_{(p/2)+1}}\ =\ \dfrac{p^2+2p}{4}\ =
   \ \max\,\Set{s_{i_1}, s_{i_2}, \ldots, s_{i_{p}}} ,
\]
and the sequence of cutwidths $(s_{i_1},\ldots,s_{i_p})$ is equal to
its own reverse $(s_{i_p},\ldots,s_{i_1})$.  Moreover, for every $j$
such that $1\leqslant j < i_1$ or $i_{p+1}< j\leqslant n+p$, we have
$s_j = 0$. Also, it is intuitively useful for the argument in the
proof of Lemma~\ref{lem:auxiliary-vertices-together-1} to keep in mind
that:
\begin{alignat*}{10}
 & (s_{i_1} - s_{j})\ && =\quad && p, \qquad 
       &&\text{for every\quad $1\leqslant j < i_1$},
\\
 & (s_{i_2} - s_{j})\ && =\ && p-2, \qquad 
       &&\text{for every\quad $i_1\leqslant j < i_2$},
\\
 & (s_{i_3} - s_{j})\ && =\ && p-4, \qquad 
       &&\text{for every\quad $i_2\leqslant j < i_3$},
\\ &\quad \cdots && && \cdots && \quad \cdots
\\
 & (s_{i_{(p/2)}} - s_{j})\ && =\ && 2, \qquad 
       &&\text{for every\quad $i_{(p/2)-1} \leqslant j < i_{p/2}$},
\\
 & (s_{i_{(p/2)+1}} - s_{j})\ && =\ && 0, \qquad 
       &&\text{for every\quad $i_{(p/2)} \leqslant j < i_{(p/2)+1}$}.
\end{alignat*}

\paragraph{Proof of 
Lemma~\ref{lem:auxiliary-vertices-together-1}.}
\label{proof:auxiliary-vertices-together-1} 
In the sequence $\sss$ in~\eqref{eq:sequence-of-vertices-and-cutwidths}
in Section~\ref{sect:beta-optimization}, suppose:
\begin{itemize}[itemsep=0pt,parsep=5pt,topsep=5pt,partopsep=0pt] 
\item $x_i$ is the leftmost vertex in $U\cup\Set{w}$,
\item $x_j$ is the leftmost vertex in $V-\Set{w}$ to the right of $x_i$,
\item $x_{\ell}$ is the rightmost vertex in $U\cup\Set{w}$,
\item $x_k$ is the rightmost vertex in $V-\Set{w}$ to the left of $x_{\ell}$,
\end{itemize}
where $1\leqslant i\leqslant j\leqslant k\leqslant \ell\leqslant p+n$.
Graphically, $\sss$ can be represented by:
\[
  \underbrace{x_1\quad\cdots\quad x_{i-1}}_{\text{all in $V-\Set{w}$}}
  \quad 
  \underbrace{x_i\quad\cdots\quad x_{j-1}}_{\text{all in $U\cup\Set{w}$}}
  \quad \text{\circled{$x_j$}}
  \quad x_{j+1}
  \quad \cdots \quad \quad  x_{k-1} \text{\circled{$x_{k}$}} \quad
  \underbrace{x_{k+1}\quad\cdots\quad x_{\ell}}_{\text{all in $U\cup\Set{w}$}}
  \quad
  \underbrace{x_{\ell+1}\quad\cdots\quad x_{p+n}}_{\text{all in $V-\Set{w}$}}
\]
The circled vertices, $x_j$ and $x_k$, are in $V-\Set{w}$.
If $\sss$ is scattered, then $1\leqslant i<j$ and/or $k<\ell\leqslant n+p$, 
with the possibility
that $j=k$ in which case there is only one vertex in $V-\Set{w}$ inserted
between all the vertices of $U\cup\Set{w}$. We define:
\[
  \scatter{\sss}\ :=\ \min\,\Set{\,j-i,\;\ell-k\,}\ \geqslant 1 ,
\]
when $\sss$ is scattered.
If $\sss$ is not scattered, we set $\scatter{\sss} := 0$, so that
$\sss$ is scattered iff  $\scatter{\sss} \geqslant 1$. Moreover,
with $p$ even and $p+1$ odd, it is always the case that
$\scatter{\sss} \leqslant p/2$, so that if $\sss$ is scattered, then:
\[
  1\ \leqslant\ \scatter{\sss}\ \leqslant\ \dfrac{p}{2}.
\]
To complete the proof, it suffices to show that if $\sss$ is scattered,
then we can define another sequence $\sss'$ from $\sss$ such that: 
\[
   \beta(\sss') < \beta(\sss)
   \quad\text{and}\quad \scatter{\sss'} < \scatter{\sss}.
\]
We obtain $\sss'$ from $\sss$ as follows:
\begin{itemize}[itemsep=0pt,parsep=5pt,topsep=5pt,partopsep=0pt] 
\item if $j-i\leqslant \ell-k$, remove $x_j$ from the $j$-th position
      and insert it between $x_{i-1}$ and $x_{i}$, 
\item if $j-i > \ell-k$, remove $x_k$ from the $k$-th position
      and insert it between $x_{\ell}$ and $x_{\ell+1}$.
\end{itemize}
With no loss of generality, let $j-i\leqslant \ell-k$. 
The portion of $\sss$ under consideration is therefore:
\[
      (r_{i-1},s_{i-1})\quad 
      \underbrace{x_i \quad (r_{i},s_{i})\quad
      \cdots\quad (r_{j-2},s_{j-2})\quad x_{j-1}}_{\text{all in $U\cup\Set{w}$}}
      \quad (r_{j-1},s_{j-1})\quad \text{\circled{$x_j$}}\quad (r_{j},s_{j}) 
\]
and the order of all the vertices in the new $\sss'$ is:
\[
  \underbrace{x_1\quad\cdots\quad x_{i-1}}_{\text{all in $V-\Set{w}$}}
  \quad \text{\circled{$x_j$}} \quad
  \underbrace{x_i\quad\cdots\quad x_{j-1}}_{\text{all in $U\cup\Set{w}$}}
  \quad x_{j+1}
  \quad \cdots \quad  x_{k-1} \quad \text{\circled{$x_{k}$}} \quad
  \underbrace{x_{k+1}\quad\cdots\quad x_{\ell}}_{\text{all in $U\cup\Set{w}$}}
  \quad
  \underbrace{x_{\ell+1}\quad\cdots\quad x_{p+n}}_{\text{all in $V-\Set{w}$}}
\]
Let $x_j = v \in V-\Set{w}$ and partition $d = \degr{G}{v}$ into
$d = d^L+d^R$, where:
\begin{itemize}[itemsep=0pt,parsep=3pt,topsep=5pt,partopsep=0pt] 
\item $d^L$ is the number of vertices 
      in $\Set{x_1,\ldots,x_{j-1}}\cap V$ which are connected to $v$,
\item $d^R$ is the number of vertices 
      in $\Set{x_{j+1},\ldots,x_{n+p}}\cap V$ 
      which are connected to $v$.
\end{itemize}
There are different cases, depending on:
\begin{itemize}[itemsep=0pt,parsep=5pt,topsep=5pt,partopsep=0pt] 
\item the value of $j-i$ between $1$ and $p/2$,
\item the value of $d^R-d^L$ between $-d$ and $+d$,
\item whether the distinguished vertex $w$ is in $\Set{x_i,\ldots,x_{j-1}}$
      or in $\Set{x_{j+1},\ldots,x_{\ell}}$,
\item whether $v$ is connected to $w$ or not.
\end{itemize} 
We consider only one of the cases, which is also a ``worst case'' to 
explain, and leave to the reader all the other cases, which are simple
variations of this ``worst case''. For the ``worst case'' which we choose
to consider, let:
\begin{enumerate}[itemsep=0pt,parsep=5pt,topsep=5pt,partopsep=0pt] 
\item[(1)]  $j = i+p/2$ so that $j-i = p/2$,
\item[(2)]  $d^L = 0$ so that $d^R-d^L = d$,
\item[(3)]  $w$ is in $\Set{x_{j+1},\ldots,x_{\ell}}$,
\item[(4)]  there is an edge $\set{v\,w}$ connecting $v$ and $w$.
\end{enumerate}
With assumptions (1) to (4), as well as after:
\begin{itemize}[itemsep=0pt,parsep=5pt,topsep=5pt,partopsep=0pt] 
\item  substituting $i+(p/2)$ for $j$,
\item  replacing the sequence of cutwidths
       $s_{i-1},\ s_{i},\ \ldots\ ,\ s_{i+(p/2)-1},\ s_{i+(p/2)},\ s_{i+(p/2)+1}$ 
       \\[1ex]
       by their actual values
       $0,\ p,\ \ldots\ ,\ (p^2 + 2p -8)/4,\ (p^2+2p)/4,\ (p^2+2p)/4$,
       respectively,
\item  and posing $r := r_{i-1}$,
\end{itemize}
the portion of $\sss$ under consideration becomes:
\[
      (r,0)\quad 
      \underbrace{x_i \quad (r,p)\quad
      \cdots\quad \Bigl(r,(p^2 + 2p -8)/4\Bigr)
      \quad x_{i+(p/2)}}_{\text{all in $U\cup\Set{w}$}}
      \quad \Bigl(r,(p^2+2p)/4\Bigr)\quad \text{\circled{$v$}}
      \quad \Bigl(r+d,(p^2+2p)/4\Bigr) 
\]
The corresponding portion in the new $\sss'$ is:
\[
      (r,0)\quad \text{\circled{$v$}} \quad (r+d,0) \quad
      \underbrace{x_i \quad (r+d,p)\quad
      \cdots\quad \Bigl(r+d,(p^2 + 2p -8)/4\Bigr)
      \quad x_{i+(p/2)}}_{\text{all in $U\cup\Set{w}$}}
      \quad \Bigl(r+d,(p^2+2p)/4\Bigr) 
\]
with all the cutwidths to the left and to the right of the shown
portion being identical in $\sss$ and $\sss'$.
It is now readily seen that the value of $\beta(\sss')$ is: 
\[
   \beta(\sss')\ =\ \beta(\sss) - (p^2+2p)/4 + (p/2)\,d
   \ =\ \beta(\sss) - \dfrac{p^2 - 2 (d - 1) p}{4}
\]
Let $\Delta := \sum \Set{\,\degr{G}{x}\,|\,x\in V}$. By the construction
of the auxiliary graph $G_w$, we have $p=\Delta$. The value of 
$d=\degr{G}{v} \leqslant \Delta/2$, the upper bound $\Delta/2$
being the extreme case when $G$ is a star graph with: $v$ at its center,
$\Delta/2$ leaf vertices among $\Set{x_{j+1},\ldots,x_{n+p}}\cap V$,
and all other vertices of $V$ being isolated. 
Hence, 
\[
 p^2 - 2\,(d-1)\,p\ \leqslant\ {\Delta}^2 - 2 (\dfrac{\Delta}{2} - 1) {\Delta}
 \ =\ {\Delta}^2 - {\Delta}^2 + 2\, {\Delta}\ =\ 2\, {\Delta} .
\]
Hence, $\beta(\sss') \leqslant \beta(\sss) - \Delta/2$, so that
$\beta(\sss') < \beta(\sss)$ which is the desired conclusion.
\hfill\QED

\bigskip
For precision in the next proof, we introduce the measure
of \emph{unbalance}.

\begin{definition}{Unbalance}
\label{def:unbalance-factor}
Consider the sequence $\sss$ in~\eqref{eq:sequence-of-vertices-and-cutwidths} in 
Section~\ref{sect:beta-optimization}.
\begin{itemize}
\item Let $a^L\geqslant 0$ be the number of 
      vertices from $V-\Set{w}$ to the left of $w$,
       and $a^R\geqslant 0$ be the number of 
      vertices from $V-\Set{w}$ to the right of $w$.
\item Let $b^L\geqslant 0$ be the number of 
      vertices from $U$ to the left of $w$,
      and $b^R\geqslant 0$ be the number of 
      vertices from $U$ to the right of $w$.
\end{itemize}
The unbalance of $\sss$ is measured by:
\[
   \unbal{\sss}\ :=
   \ \min\,\Set{\,(n-a^L-1) + (p-b^R),\ (n-a^R-1) + (p-b^L)\,} .
\]
The quantity $(n-a^L-1) + (p-b^R)$ measures $\sss$'s unbalance
on the left, and similarly $(n-a^R-1) + (p-b^L)$ measures
$\sss$'s unbalance on the right. It is useful to keep in mind that:
\[
  (p-b^R)+(p-b^L) = p\quad\text{and}\quad
  (n-a^L-1)+(n-a^R-1) = n - 1,
\]
so that, if the quantity $(n-a^L-1) + (p-b^R)$ or the quantity
$(n-a^R-1) + (p-b^L)$ is reduced to $0$, then the other of these
two quantities is increased to $n-1+p$.
\end{definition}

\paragraph{Proof of 
Lemma~\ref{lem:auxiliary-vertices-together-2}.}
\label{proof:auxiliary-vertices-together-2} 
By Lemma~\ref{lem:auxiliary-vertices-together-1}, we can assume
that $\sss$ is not scattered. 
It suffices to show that if $\unbal{\sss} \geqslant 1$, we can 
define another sequence $\sss'$ from $\sss$ such that
$\beta(\sss') < \beta(\sss)$ and $\unbal{\sss'} < \unbal{\sss}$. 
We use the notation in the proof of 
Lemma~\ref{lem:auxiliary-vertices-together-1}. The portion of $\sss$
that we examine closely is:
\[
      (r_{i-1},s_{i-1})\quad 
      \underbrace{x_i \quad (r_{i},s_{i})\quad
      x_{i+1}\quad (r_{i+1},s_{i+1})\quad \cdots\quad
      (r_{i+p-1},s_{i+p-1})\quad
      x_{i+p}}_{\text{all in $U\cup\Set{w}$}}\quad (r_{i+p},s_{i+p})\quad 
\]
where:
\begin{alignat*}{5}
  & V - \Set{w}\ &&=\ && \Set{x_1,\ldots,x_{i-1}}\cup\Set{x_{i+p+1},\ldots,x_{n+p}},
  \quad && \text{with $1\leqslant i\leqslant n$},
\\
  & U\cup\Set{w}\ &&=\ &&  \Set{x_{i},\ldots,x_{i+p}},
   \quad && \text{with $w = x_{i+k}$ and $0\leqslant k\leqslant p$}.
\end{alignat*}
We partition $d = \degr{G}{x_{i+k}} = \degr{G}{w}$ into $d = d^L+d^R$, where:
\begin{itemize}[itemsep=0pt,parsep=3pt,topsep=5pt,partopsep=0pt] 
\item $d^L\geqslant 0$ is the number of vertices 
      in $\Set{x_1,\ldots,x_{i-1}}$ 
      which are connected to $w = x_{i+k}$ in $G$,
\item $d^R\geqslant 0$ is the number of vertices 
      in $\Set{x_{i+p+1},\ldots,x_{n+p}}$ 
      which are connected to $w = x_{i+k}$ in $G$.
\end{itemize}
And we partition $e = \degr{K_{p+1}}{x_{i+k}} = \degr{K_{p+1}}{w}$ into 
$e = e^L+e^R = p$, where:
\begin{itemize}[itemsep=0pt,parsep=3pt,topsep=5pt,partopsep=0pt] 
\item $e^L = k$ is the number of vertices 
      in $\Set{x_{i},\ldots,x_{i+k-1}}$ 
      which are connected to $w = x_{i+k}$ in $K_{p+1}$,
\item $e^R = (p-k)$ is the number of vertices 
      in $\Set{x_{i+k+1},\ldots,x_{i+p}}$ 
      which are connected to $w = x_{i+k}$ in $K_{p+1}$.
\end{itemize}
Though not explicitly used below, it is worth noting
that $e^L$ and $e^R$ here are the same as $b^L$ and $b^R$ in
Definition~\ref{def:unbalance-factor} because $K_{p+1}$ is a complete
graph (but $d^L$ and $d^R$ are not the same as $a^L$ and $a^R$).  
We consider $5$ separate cases, $\Set{(a), (b), (c), (d), (e)}$:
\begin{itemize}[itemsep=0pt,parsep=3pt,topsep=5pt,partopsep=0pt] 
\item[(a)] $k=0$, which implies $e^L = 0$ and $e^R = p$. 
\end{itemize}
In case (a), because $\unbal{\sss}\neq 0$ by hypothesis, it must be
that $\Set{x_{i+p+1},\ldots,x_{n+p}}\neq\varnothing$.  
It suffices to move the vertices in $\Set{x_{i+p+1},\ldots,x_{n+p}}$ 
to the left of $w=x_{i}$, also preserving their order
\[
       x_1,\ \ldots\ ,\ x_{i-1},\ x_{i+p+1},\ \ldots\ , x_{n+p} .
\]
Using a reasoning similar
to that in the proof of Lemma~\ref{lem:auxiliary-vertices-together-1},
we leave it to the reader to show that $\unbal{\sss'} = 0$ and
$\beta(\sss') < \beta(\sss)$ for the resulting sequence $\sss'$.
\begin{itemize}[itemsep=0pt,parsep=3pt,topsep=5pt,partopsep=0pt] 
\item[(b)] $k=p$, which implies $e^L = p$ and $e^R = 0$. 
\end{itemize}
Case (b) is similar to case (a).  Because $\unbal{\sss}\neq 0$ by
hypothesis, it must be that $\Set{x_{1},\ldots,x_{i-1}}\neq\varnothing$.
In this case, we move the vertices in $\Set{x_{1},\ldots,x_{i-1}}$ to
the right of $w=x_{i+p}$.  Again, we leave it to the reader to show
that $\unbal{\sss'} = 0$ and $\beta(\sss') < \beta(\sss)$ for
the resulting sequence $\sss'$.

For the three remaining cases, we can assume that neither $k=0$ nor
$k=p$, \ie, both $w \neq x_{i}$ and $w \neq x_{i+p}$. Two cases of 
these three are:
\begin{itemize}[itemsep=0pt,parsep=3pt,topsep=5pt,partopsep=0pt] 
\item[(c)] $d^L > d^R$, in which case we tranpose $w = x_{i+k}$
           and $x_{i}$. 
\item[(d)] $d^L < d^R$, in which case we transpose $w = x_{i+k}$ 
           and $x_{i+p}$.
\end{itemize}
By a reasoning similar to that in the proof of
Lemma~\ref{lem:auxiliary-vertices-together-1}, we leave to the reader
the straightforward details showing that $\beta(\sss') < \beta(\sss)$
in both case (c) and case (d).
\begin{itemize}[itemsep=0pt,parsep=3pt,topsep=5pt,partopsep=0pt] 
\item[(e)] $d^L = d^R$, in which case the value
           of $\beta(\sss)$ remains unchanged by
           tranposing $w = x_{i+k}$ and $x_{i}$, or by transposing
           $w = x_{i+k}$ and $x_{i+p}$, and so we need an additional
           argument.
\end{itemize}
The additional argument for case (e), is to first transpose
$w = x_{i+k}$ and $x_{i}$, or alternatively 
transpose $w = x_{i+k}$ and $x_{i+p}$, thus reducing case (e) to case (a),
or alternatively reducing case (e) to case (b).
\hfill \QED

\newpage

\Hide
{\footnotesize
\printbibliography
}

{\footnotesize 
\bibliographystyle{plainurl} 
\bibliography{generic,extra}
}

\ifTR
\else
\fi

\end{document}